\newtheorem{theorem}{Theorem}
\newtheorem{lemma}{Lemma}
\newtheorem{corollary}{Corollary}
\newtheorem{remark}{Remark}
\theoremstyle{definition}
\newtheorem{definition}{Definition}
\newcommand{\calN}{{\mathcal{N}}}
\newcommand{\good}{\mathsf{good}}
\newcommand{\knote}[1]{{\bf{\color{blue}[\tiny Karthik: #1]}}}
\newcommand{\cnote}[1]{{\bf{\color{magenta}[Chandra: #1]}}}
\newcommand{\Z}{\mathbb{Z}}
\newif\ifdraft
\newcommand{\DPB}{\textsc{Disj-Polymatroid-Bases}\xspace}
\newcommand{\ODPB}{\textsc{Online-Disj-Polymatroid-Bases}\xspace}
\newcommand{\DSC}{\textsc{Disj-Set-Cover}\xspace}
\newcommand{\ODSC}{\textsc{Online-Disj-Set-Cover}\xspace}
\newcommand{\DMB}{\textsc{Disj-Matroid-Bases}\xspace}
\newcommand{\ODMB}{\textsc{Online-Disj-Matroid-Bases}\xspace}
\newcommand{\MatrixDB}{\textsc{Disj-Matrix-Bases}\xspace}
\newcommand{\OMatrixDB}{\textsc{Online-Disj-Matrix-Bases}\xspace}
\newcommand{\DST}{\textsc{Disj-Spanning-Trees}\xspace}
\newcommand{\ODST}{\textsc{Online-Disj-Spanning-Trees}\xspace}
\newcommand{\DCSS}{\textsc{Disj-Conn-Spanning-Subhypergraphs}\xspace}
\newcommand{\ODCSS}{\textsc{Online-Disj-Conn-Spanning-Subhypergraphs}\xspace}
\newcommand{\opt}{\texttt{opt}}
\title{Online Disjoint Spanning Trees and Polymatroid Bases\thanks{Univ. of Illinois, Urbana-Champaign, Urbana, IL 61801. Email: {\tt
      \{karthe, chekuri, weihaoz3\}@illinois.edu}. Supported in part by NSF grant CCF-2402667.}}
\author{Karthekeyan Chandrasekaran
\and Chandra Chekuri
\and Weihao Zhu 
}
\begin{document}

\maketitle

\pagenumbering{gobble}
\begin{abstract}
Finding the maximum number of disjoint spanning trees in a given graph is a well-studied problem with several applications and connections. The Tutte-Nash-Williams theorem provides a min-max relation for this problem which also extends to disjoint bases in a matroid and leads to efficient algorithms \cite{Schrijver-book}. Several other packing problems such as element disjoint Steiner trees, disjoint set covers,  and disjoint dominating sets are NP-Hard but admit an $O(\log{n})$-approximation \cite{FHKS02,CS07}. C{\u{a}}linescu, Chekuri, and Vondr\'{a}k \cite{CCV09} viewed all these packing problems as packing bases of a polymatroid and provided a unified perspective. 
Motivated by applications in wireless networks, recent works have studied the problem of packing set covers in the \emph{online} model \cite{PBV15, EGK19, BBJ24}. 
The online model poses new challenges for packing problems. In particular, it is not clear how to pack a maximum number of disjoint spanning trees in a graph when edges arrive online.  Motivated by these applications and theoretical considerations we formulate an online model for packing bases of a polymatroid, and describe a randomized algorithm with a polylogarithmic competitive ratio. Our algorithm is based on interesting connections to the notion of quotients of a polymatroid that has recently seen applications in polymatroid sparsification \cite{quanrud2024quotient}. We generalize the previously known result for the online disjoint set cover problem \cite{EGK19} and also address several other packing problems in a unified fashion. 
For the special case of packing disjoint spanning trees in a graph (or a hypergraph) whose edges arrive online, we provide an alternative to our general algorithm that is simpler and faster while achieving the same poly-logarithmic competitive ratio. 
\end{abstract}

\newpage
\pagenumbering{arabic}

\section{Introduction}\label{section:intro}
Finding the maximum number of disjoint spanning trees in a given graph is a well-studied problem with several applications and connections. The Tutte-Nash-Williams theorem provides a min-max relation for this problem which also extends to the maximum number of disjoint bases in a matroid with efficient algorithms \cite{Schrijver-book}. Several other packing problems such as disjoint set covers, disjoint dominating sets, and element disjoint Steiner trees are NP-Hard but admit an $O(\log{n})$-approximation\footnote{We are using the convention of $\alpha$-approximation for a maximization problem with $\alpha > 1$ with the understanding that the returned value is at least $\opt/\alpha$.} \cite{FHKS02,CS07}. C{\u{a}}linescu, Chekuri, and Vondr\'{a}k \cite{CCV09} viewed these problems as packing bases of a polymatroid and provided a unified perspective. 
Motivated by applications in wireless networks, recent works have studied the problem of packing set covers in the \emph{online} model \cite{PBV15, EGK19, BBJ24}. 
The online model poses new challenges for packing problems. In particular, it is not clear how to pack a maximum number of disjoint spanning trees in a graph when edges arrive online.  Motivated by these applications and theoretical considerations we consider the problem of packing disjoint bases of a polymatroid in the online model. 

A \emph{polymatroid} $f:2^{\mathcal{N}}\rightarrow \mathbb{Z}_{\geq 0}$ on a ground set $\mathcal{N}$ is an integer-valued monotone submodular function with $f(\emptyset)=0$. We recall that a function $f: 2^{\mathcal{N}}\rightarrow \mathbb{Z}_{\geq 0}$ is \emph{monotone} if $f(A)\leq f(B)$ for every $A\subseteq B$ and \emph{submodular} if $f(A)+f(B)\geq f(A\cup B)+f(A\cap B)$ for every $A,B\subseteq \mathcal{N}$. A subset $S\subseteq \mathcal{N}$ is a \emph{base} of $f$ if $f(S)=f(\mathcal{N}).$\footnote{In most settings, one would define a set $S$ to be a base if it is \emph{inclusionwise minimal} subject to satisfying $f(S) = f(\mathcal{N})$. This is particularly important in the setting of matroids where all bases have the same cardinality which is not necessarily true in the polymatroidal setting. However, since we are interested in (approximating) the maximum number of disjoint bases we adopt the relaxed definition for simplicity.}
In the \DPB problem, we are given a polymatroid $f:2^{\mathcal{N}}\rightarrow \mathbb{Z}_{\geq 0}$ via an evaluation oracle. The goal is to find a maximum number of disjoint bases. We define 
\[
\opt(f):=\max\{k: \exists \ k \text{ disjoint bases of $f$}\}.
\]
Polymatroids generalize matroid rank function, coverage functions, and many others. Consequently, numerous set packing problems can be cast as special cases of \DPB. We will discuss some of these special cases shortly. 

\paragraph{Online \DPB Model.} 
We formally describe the online model for \DPB (denoted \ODPB).   
We have an underlying polymatroid $f: 2^{\mathcal{N}}\rightarrow \mathbb{Z}_{\geq 0}$ over a large but finite ground set $\calN$. Let $n:=|\calN|$. 
We index the elements of the ground set $\mathcal{N}$ as $e_1, e_2, \ldots, e_n$ where $e_t$ is the element that arrives at time $t$ for every $t\in [n]$. For each $t\in [n]$, we denote $\mathcal{N}_t:=\{e_1, e_2, \ldots, e_t\}$ and let $f_{|\calN_t}:2^{\calN_t}\rightarrow \mathbb{Z}_{\geq 0}$ be the function obtained from $f$ by restricting the ground set to $\calN_t$, that is, $f_{|\calN_t}(A):=f(A)$ for every $A\subseteq \calN_t$. At each timestep $t\in [n]$, the online algorithm has access to the evaluation oracle\footnote{The evaluation oracle of a function $g: 2^V\rightarrow \mathbb{R}$ takes a set $S\subseteq V$ as input and returns $g(S)$.} of the function restricted to the set of elements that have arrived until time $t$, i.e., evaluation oracle of the function $f_{|\calN_t}$, and has to color element $e_t$ irrevocably.  
A color is said to be a \emph{base color} if the set $S$ of elements with that color is a base of $f$. 
The goal of the online algorithm is to maximize the number of base colors. 
We remark that we are implicitly assuming that the elements of $\calN$ is the input sequence. 
The competitive ratio of an online algorithm is the ratio between the base colors
in an optimal offline algorithm and that of the online algorithm. For randomized online algorithms, we will be interested in the expected competitive ratio. 
We assume that the online algorithm has prior knowledge of the function value of the  ground set $\calN$, i.e., $f(\calN)$. This assumption is motivated by applications to be discussed below. 

\paragraph{Applications.} 
Polymatroids generalize coverage functions and matroid rank functions. 
We discuss these two special cases, the associated packing problems, and their online model below. 

\begin{enumerate}
\item In the \DSC problem, the input is a set system over a finite universe. We will alternatively view the set system as a hypergraph $H=(V, E)$ where $V$ corresponds to the universe and the hyperedges in $E$ correspond to sets in the system. The goal is to find a maximum number of disjoint set covers (a subset $A\subseteq E$ of hyperedges is a set cover if $\cup_{e\in A}e = V$). \DSC can be cast as a special case of \DPB by considering the coverage function of the hypergraph as the polymatroid, i.e., by considering the polymatroid $f: 2^E\rightarrow \Z_{\ge 0}$ defined as $f(A):=|\cup_{e\in A}e|$. Coverage functions are fairly general with several prominent special cases--e.g., the domatic number problem is a special case of \DSC \cite{FHKS02}. In the online setting of \DSC (termed \ODSC), the vertex set is known in advance while the hyperedges are revealed in an online fashion. The online algorithm has to color each hyperedge immediately upon arrival irrevocably in order to maximize the number of colors that form a set cover.


\item 
In the \DMB problem, we are given 
evaluation access to a matroid rank function $r:2^\calN\rightarrow \Z_{\ge 0}$ over a ground set $\calN$ (we recall that a matroid rank function $r$ is a polymatroid that additionally satisfies $r(\{e\})\le 1$ for every $e\in \calN$). A subset $S\subseteq \calN$ is a base of the matroid if $r(S)=r(\calN)$. The goal is to find a maximum number disjoint bases of the matroid. Matroid rank function is a polymatroid and hence, \DMB is a special case of \DPB. In the online setting of \DMB (termed \ODMB), the ground set is revealed in an online fashion while the online algorithm has access to the rank function restricted to the set of elements that have arrived so far. The online algorithm has to color each element immediately upon arrival irrevocably in order to maximize the number of base colors. 
\end{enumerate}

Next, we describe three special cases of \DPB, namely \MatrixDB, \DST, and \DCSS. To the best of authors' knowledge, these three problems have not been explored in the online setting. One of the motivations of this work is to understand these three problems in the online setting. 
\begin{enumerate}
\item In the \MatrixDB problem, we are given a matrix $M\in \mathbb{R}^{n\times d}$ of rank $d$ and the goal is to find a maximum number of disjoint spanning subsets of row-vectors---a subset of row vectors is \emph{spanning} if its linear hull is $\mathbb{R}^d$. This is a special case of \DMB where the matroid is the linear matroid defined by $M$ (and consequently, the rank function of a subset of row-vectors is the dimension of the subspace spanned by them). In the online setting of \MatrixDB (termed \OMatrixDB), the rows of the matrix are revealed in an online fashion and the online algorithm has to color each row immediately upon arrival irrevocably in order to maximize the number of spanning colors. 

\item In the \DST problem, we are given a connected graph $G=(V, E)$
and the goal is to find a maximum number of disjoint spanning trees in $G$. 
\DST is a special case of \DMB where the matroid is the graphic matroid defined by $G$ (and consequently, the rank function of a subset $F\subseteq E$ is $|V|-c(V, F)$, where $c(V, F)$ is the number of components in the graph $(V, F)$). In the online setting of \DST (termed \ODST), the vertex set of $G$ is known in advance while the edges of $G$ are revealed in an online fashion and the online algorithm has to color each edge immediately upon arrival irrevocably in order to maximize the number of connected colors---a color is connected if the edges of the color form a connected graph over the vertex set $V$. 

\item In the \DCSS problem, we are given a connected hypergraph $H=(V, E)$
and the goal is to find a maximum number of disjoint connected spanning subhypergraphs in $H$. 
\DCSS is a special case of \DPB where the polymatroid $f:2^E\rightarrow \Z_{\ge 0}$ of interest is defined as $f(A):=|V|-c(V, A)$ for every $A\subseteq E$, where $c(V, A)$ is the number of components in the hypergraph $(V, F)$. \DCSS arises in the context of packing element-disjoint Steiner trees \cite{CS07}. In the online setting of \DCSS (termed \ODCSS), the vertex set of $G$ is known in advance while the hyperedges of $G$ are revealed in an online fashion and the online algorithm has to color each hyperedge immediately upon arrival irrevocably in order to maximize the number of connected colors---a color is connected if the hyperedges of the color form a connected hypergraph over the vertex set $V$. 
\end{enumerate}

\DCSS generalizes \DST (in both offline and online settings): if the input hypergraph is a graph, then the problem corresponds to \DST. 
\DCSS is also closely related to \DSC in the following sense: both problems are defined over hypergraphs---the latter asks for disjoint spanning subhypergraphs (which is equivalent to disjoint set covers) while the former asks for disjoint \emph{connected} spanning subhypergraphs. 
\DCSS generalizes \DSC (in both offline and online settings) via the following approximation-preserving reduction from the latter to the former: given an instance $H=(V, E)$ of \DSC, add a new vertex $r$ and stretch every hyperedge of $H$ to include $r$ to obtain a new hypergraph $H'=(V+r, E')$; the maximum number of disjoint set covers in $H$ is equal to the maximum number of disjoint connected spanning subhypergraphs in $H'$. 

\paragraph{Prior work in the offline setting.}
\DMB is polynomial-time solvable \cite{Edm65}. \DSC and \DCSS are $o(\log |V|)$-inapproximable and $O(\log|V|)$-approximable \cite{FHKS02, CS07}. 
C{\u{a}}linescu, Chekuri, and Vondr\'{a}k \cite{CCV09} introduced \DPB as a unified generalization of \DMB, \DSC, and \DCSS. They designed an $O(\log{f(\calN)})$-approximation for \DPB by showing an approximate min-max relation for $\opt(f)$. Their approximate min-max relation is based on the following minimization problem:
\[
k^*(f):= \min_{A\subseteq \calN:\ f(A)<f(\calN)}\left\lfloor \frac{\sum_{e\in \calN}(f(A+e)-f(A))}{f(\calN)-f(A)}\right\rfloor. 
\]
It is easy to see that $\opt(f)\le k^*(f)$ (e.g., see \cite{CCV09}). 
Moreover, if the polymatroid $f$ is a matroid rank function, then Edmonds \cite{Edm65} showed that $\opt(f)=k^*(f)$. Edmonds' result is constructive and implies a polynomial time algorithm for \DMB. 
However, for coverage functions, it is known that $\opt(f)\le(1+o(1))k^*(f)/\log{f(\calN)}$ \cite{FHKS02}. C{\u{a}}linescu, Chekuri, and Vondr\'{a}k showed that this bound is tight by giving a polynomial-time algorithm to construct $(1-o(1))k^*(f)/\log{f(\calN)}$ disjoint bases (and hence, $ \opt(f)\ge (1-o(1))k^*(f)/\log{f(\calN)}$). 
Their algorithm unifies the approximation algorithms for \DSC \cite{FHKS02} and \DCSS \cite{CS07}. 
We state their algorithm since it will be important for the rest of our discussion:  
The algorithm computes the parameter $k:=\lfloor k^*(f)/(\log{f(\calN)}+\log{\log{f(\calN)}})\rfloor$ (by guessing/binary search) 
and colors each element with a uniformly random color chosen from a color palette of size $k$. C{\u{a}}linescu, Chekuri, and Vondr\'{a}k showed that the expected number of base colors returned by this random coloring algorithm 
is at least $(1-e/\log{f(\calN)})k = (1-o(1))k^*(f)/\log{f(\calN)}$. An alternative algorithm based on a random permutation is also described in \cite{CCV09} that we discuss later. 

\paragraph{Prior work in the online setting.}
\ODSC was introduced and studied by Pananjady, Bagaria, and Vaze \cite{PBV15} driven by applications to sensor networks, supply chain management, and crowd-sourcing platforms. 
In the context of \DSC, the quantity $k^*(f)$ associated with the coverage function $f$ of the hypergraph $H=(V, E)$ has a simple interpretation: it is equal to the min-degree of the hypergraph $H$. Pananjady, Bagaria, and Vaze showed that if the min-degree is known to the online algorithm in advance, then there is an online \emph{deterministic} algorithm with 
competitive ratio $O(\log{n})$, where $n$ is the number of vertices in the input hypergraph; we note that a randomized online algorithm with the same competitive ratio is an easy consequence of the random coloring algorithm of \cite{CCV09} discussed above. 
On the lower bound side, they showed that if min-degree is \emph{not} known in advance, then every online deterministic algorithm for \DSC has competitive ratio $\Omega(n)$. Although this lower bound result seems to suggest that knowing the min-degree of the graph in advance is required to achieve any meaningful competitive ratio, two different results have overcome this seeming technical barrier by empowering the online algorithm in other ways: 
Firstly, Emek, Goldbraikh, and Kantor \cite{EGK19} designed an online randomized algorithm with \emph{expected} competitive ratio $O(\log^2{n})$ (assuming no knowledge of the min-degree but using randomness). On the lower bound side, they showed that every online randomized algorithm has expected \emph{impure} competitive ratio\footnote{An online (randomized) algorithm for \DSC has \emph{impure} competitive ratio $\alpha$ if the (expected) number of set covers in the online algorithm is at least $(\opt(H)/\alpha)-\beta$ for some $\beta>0$ that is a function only of $n$, where $H$ is the input hypergraph, $n$ is the number of vertices in $H$, and $\opt(H)$ is the maximum number of disjoint set covers in $H$. Our work focuses on the case of $\beta=0$, i.e., pure competitive ratio.} $\Omega(\log{n}/\log{\log{n}})$ (even with knowledge of min-degree). 
Secondly, Bienkowski, Byrka, and Je\.{z} \cite{BBJ24} designed an online deterministic algorithm with \emph{impure} competitive ratio $O(\log^2{n})$ (assuming no knowledge of the min-degree but settling for impure competitive ratio). 

For the more general problem of \ODPB, Pananjady, Bagaria, and Vaze \cite{PBV15} observed that if $k^*(f)$ is known in advance, then it is possible to design a randomized online algorithm with expected competitive ratio $O(\log{f(\calN)})$: indeed, the random coloring algorithm of \cite{CCV09} mentioned above can be implemented in the online setting using knowledge of $k^*(f)$ and it will have the stated competitive ratio (via the results of \cite{CCV09}). In this work, we are interested in \ODPB in the setting where $k^*(f)$ is \emph{not known in advance}. 

Our motivations to consider \ODPB are multifold: 
\MatrixDB and \DST are fundamental problems in linear algebra and graph algorithms respectively. 
We note that \MatrixDB in the online arrival model is non-trivial even for $2$-dimensional vectors, i.e., for $d=2$.
\DCSS generalizes \DST and arises in the context of packing element-disjoint Steiner trees \cite{CS07}. 
As mentioned earlier, \DCSS can also be viewed as a generalization of \DSC. 
Although \DSC has been studied in the online model, there has not been any work on \DCSS in the online model. 

\subsection{Our Results}\label{sec:results}
Throughout this work, we will denote the color palette by the set of natural numbers. 
There is a natural greedy algorithm for \ODPB: initialize color $c=1$; at each timestep $t\in [n]$: use color $c$ for element $e_t$ and if the set of elements with color $c$ is a base, then increment $c$. 
For uniform random arrival order, the competitive ratio of this simple online algorithm is $O(\log{f(\calN)})$ (via the results of \cite{CCV09}); this does not seem to have been explicitly noted in prior work. 
For arbitrary arrival order, the competitive ratio of this online algorithm is $k^*(f)$: the algorithm will return at least one base while the maximum number of possible bases is at most $k^*(f)$ (since $\opt(f)\le k^*(f)$). It is known that $k^*(f)$ is the best possible competitive ratio of deterministic online algorithms that do not have prior knowledge of $k^*(f)$ \cite{PBV15}. 
In this work, we are interested in the setting of arbitrary arrival order without prior knowledge of $k^*(f)$. For this setting, we design a randomized online algorithm with expected competitive ratio $O(\log^2 {f(\calN)})$. 

\begin{theorem}\label{main-result:polymatroid}
    For \DPB, there exists a randomized online algorithm with expected competitive ratio $O(\log^2 {f(\calN)})$. 
    The runtime of the algorithm at each timestep $t\in [n]$ is $poly(t, \log{f(\calN)})$.
\end{theorem}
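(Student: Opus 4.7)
The plan is to simulate the offline random coloring algorithm of \cite{CCV09} in the online setting, hedging over our uncertainty about $k^*(f)$ by running a logarithmic number of dyadic guesses in parallel. Fix $L := \lceil \log f(\calN)\rceil$ and prepare $L$ disjoint color palettes $P_1,\dots,P_L$ with $|P_\ell|=k_\ell:=\lfloor 2^\ell/(\log f(\calN)+\log\log f(\calN))\rfloor$. Upon arrival of $e_t$, sample a level $\ell_t\in[L]$ uniformly at random and then a color $c_t\in P_{\ell_t}$ uniformly at random, and irrevocably color $e_t$ with $c_t$. The algorithm only uses $f(\calN)$ (known by assumption) and never needs to compute or estimate $k^*(f)$.

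The analysis localizes to the ``correct'' level $\ell^*:=\lceil \log k^*(f)\rceil$, where $k_{\ell^*}\approx k^*(f)/\log f(\calN)$ matches the palette size prescribed by \cite{CCV09}. Conditioned on which elements are routed to level $\ell^*$ (each independently with probability $1/L$), the coloring inside $P_{\ell^*}$ is a uniformly random $k_{\ell^*}$-coloring of a $1/L$-subsample of $\calN$. If this subsampled coloring yields $\Omega(k_{\ell^*})$ bases of $f$ in expectation, then combined with the $1/L$ routing probability the algorithm produces $\Omega(k^*(f)/\log^2 f(\calN))$ bases overall, matching the claimed competitive ratio via $\opt(f)\le k^*(f)$.

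The heart of the argument is a subsampled concentration bound: for any quotient $A\subsetneq\calN$ with $f(A)<f(\calN)$ and a random color class $S$ of $P_{\ell^*}$, each $e\notin A$ joins $S$ independently with probability $1/(L\cdot k_{\ell^*})=\Theta(\log f(\calN)/(L\cdot k^*(f)))$. By definition of $k^*(f)$, $\sum_{e\notin A}(f(A+e)-f(A))\ge k^*(f)\cdot(f(\calN)-f(A))$, so the expected marginal contribution of $S$ outpaces the deficit $f(\calN)-f(A)$ by a factor of $\Omega(\log f(\calN))$. A Chernoff-style tail bound then yields $f(A\cup S)=f(\calN)$ with probability $1-1/\mathrm{poly}(f(\calN))$ for any individual quotient.

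The main obstacle is closing the argument with a union bound, since the family of quotients of $f$ can be exponentially large and a naive union bound is wasteful. The plan is to invoke the quotient-sparsification perspective of Quanrud (2024) to reduce to a polynomially-sized family of ``critical'' quotients whose violation alone certifies that $S$ is not a base, and to take the union bound only over this family. A secondary concern is the dependence structure introduced by the level-routing step, but since level and color are drawn independently per element, conditioning on the routed subset preserves per-element independence inside $P_{\ell^*}$ and the Chernoff argument goes through cleanly. The per-step runtime is $poly(t,\log f(\calN))$ since each arrival samples $O(1)$ random integers and makes at most $poly(t)$ oracle queries used to verify quotient-level conditions if needed.
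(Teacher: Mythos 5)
Your plan is appealing in its simplicity --- it is entirely oblivious, never using the arrival order --- but this obliviousness is precisely what dooms it. The fundamental gap is that $k^*(f)$ is \emph{not} bounded by $\operatorname{poly}(r)$, so hedging over only $L = \lceil\log f(\calN)\rceil$ dyadic scales cannot cover the range where the ``correct'' level $\ell^* = \lceil \log k^*\rceil$ might live. Concretely, take the coverage polymatroid with $V = \{1,2\}$ and $E$ consisting of $n$ identical hyperedges $\{1,2\}$: here $r = f(E) = 2$, yet $k^*(f) = \opt(f) = n$. Your algorithm fixes $L = 1$, so the single palette $P_1$ has at most two colors, the algorithm can never report more than two bases, and the competitive ratio is $\Omega(n)$ rather than the required $O(\log^2 r) = O(1)$. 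Increasing $L$ to $\log(nr)$ would cover all scales, but then the $1/L$ routing probability degrades the guarantee to roughly $O(\log n \cdot \log r)$, which is not the theorem's bound when $n \gg r$. This is exactly why the paper's algorithm is not oblivious: it computes an online estimate $q_t$ (the minimum-size quotient of $f_{|\calN_t}$ containing $e_t$, approximated in polynomial time by strength decomposition) and shows via Lemmas \ref{lemma:counting-quotient}, \ref{lemma:q_t-upper-bound}, \ref{lemma:new-ratio-closed}, and \ref{lemma:new-opt} that $q_t$ is within a $\operatorname{poly}(r)$ factor of $k^*$ for ``enough'' timesteps; only then does the $O(\log r)$-wide hedge around $\ell_t = \lceil \log q_t\rceil$ suffice. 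Without a data-dependent estimate there is no way to localize $\ell^*$ to a window of width $O(\log r)$.

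There are two further problems. First, your palette sizes are off by a factor of $\log r$: with $|P_{\ell^*}| \approx k^*/\log r$ and routing probability $1/L = 1/\log r$, each element lands on a fixed color $c \in P_{\ell^*}$ with probability $\approx 1/k^*$, whereas Lemma~\ref{lemma:sampling-property} (and the CCV argument you cite) require per-element probability $\ge 2\log r/k^*$. Your claim that the expected marginal contribution ``outpaces the deficit by a factor of $\Omega(\log f(\calN))$'' is therefore an arithmetic slip --- with probability $1/k^*$, the expectation only matches the deficit, leaving no slack for concentration. The palette sizes would need to be $\approx 2^\ell/\log^2 r$. Second, your proposal to union-bound over a polynomial family of ``critical'' quotients via Quanrud's sparsification is speculative and, as far as I can see, not needed: Lemma~\ref{lemma:sampling-property} proves the base-with-constant-probability guarantee directly by a decreasing-expectation martingale over a random permutation, avoiding any per-quotient union bound. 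Even if you repaired the arithmetic and the union bound, the obliviousness gap above remains and is fatal.
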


We recall that the best-known approximation factor for \DPB in the offline setting is $O(\log{f(\calN)})$, and hence the competitive ratio of our online algorithm nearly matches that of the best possible offline algorithm.
We discuss the consequences of Theorem \ref{main-result:polymatroid} for the applications. 
Specializing Theorem \ref{main-result:polymatroid} to coverage functions implies a randomized online algorithm for \DSC with expected competitive ratio $O(\log^2{n})$, where $n$ is the number of vertices of the input hypergraph, thus recovering the result of \cite{EGK19}. Specializing Theorem \ref{main-result:polymatroid} to matroid rank functions implies a randomized online algorithm for \DMB with expected competitive ratio $O(\log^2{r})$, where $r$ is the rank of the ground set. Consequently, we obtain a randomized online algorithm with expected competitive ratio 
$O(\log^2{d})$ for \MatrixDB where $d$ is the dimension of the span of the input vectors and 
with expected competitive ratio $O(\log^2{n})$ for \DST where $n$ is the number of vertices of the input graph. Specializing Theorem \ref{main-result:polymatroid} to the polymatroid function that arises in \DCSS implies a randomized online algorithm for \DCSS with expected competitive ratio $O(\log^2{n})$, where $n$ is the number of vertices of the input hypergraph.

Our randomized online algorithm to prove Theorem \ref{main-result:polymatroid} is based on the notion of quotients of a polymatroid. Quotients played a central role in the recent result on polymatroidal sparsification \cite{quanrud2024quotient}. The competitive ratio analysis of our algorithm is based on novel properties of quotients 
which might be of independent interest. We prove Theorem \ref{main-result:polymatroid} in two steps: firstly, we design a randomized online algorithm with competitive ratio $O(\log^2{f(\calN)})$ but as stated it needs to solve an NP-Hard problem. Next, we modify this algorithm to achieve polynomial run-time while achieving the same competitive ratio. For this, we rely on the \emph{strength decomposition} of polymatroids which is computable in polynomial time and show a property connecting the strength decomposition to min-sized quotients. 
Although these algorithms are general and powerful, they are computationally expensive and difficult to interpret for specific applications.
As our second result, we give a very simple and fast online randomized algorithm for \DCSS and \DST that achieves the same competitive ratio. 

\begin{theorem}\label{main-result:hypergaph}
    For \DCSS, there exists a randomized online algorithm with competitive ratio $O(\log^2 n)$, where $n$ is number of vertices in the input hypergraph. 
    The runtime of the algorithm at each timestep $t$ is $O(|e_t|^2)$, where $e_t$ is the hyperedge that arrives at time $t$. In particular, the algorithm for \DST can be implemented to run in constant time at each timestep. 
\end{theorem}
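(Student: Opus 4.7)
The plan is to design a memoryless per-hyperedge randomized algorithm that commits to a color using only two uniform samples and no current graph state, and then to analyze the competitive ratio by showing that the algorithm implicitly runs many random-coloring algorithms at different palette sizes in parallel.

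Algorithm. Set $L := \lceil \log_2 n\rceil$ and consider the color palette $\{(\ell,c): 0\le\ell\le L,\ 1\le c\le 2^\ell\}$. When hyperedge $e_t$ arrives, sample $\ell_t \in \{0,1,\ldots,L\}$ uniformly at random, sample $c_t \in \{1,\ldots,2^{\ell_t}\}$ uniformly at random, and color $e_t$ with $(\ell_t,c_t)$. The sampling step takes $O(1)$ time. To report base colors online I would maintain a union-find data structure per color class; inserting $e_t$ into its class performs at most $|e_t|-1$ union operations, giving a per-step cost of $O(|e_t|\cdot\alpha(n)) = O(|e_t|^2)$. For \DST, every $|e_t|=2$, so each step is $O(1)$, yielding the ``in particular'' clause of the theorem.

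Competitive-ratio analysis. Let $f$ denote the \DCSS polymatroid and $k^* := k^*(f)$; since $\opt(f)\le k^*$, it suffices to show that the algorithm produces $\Omega(k^*/\log^2 n)$ base colors in expectation. In the main regime, when $k^*$ exceeds a sufficiently large constant times $L\log n$, I would choose $\ell^*$ with $2^{\ell^*}$ in the window $\Theta(k^*/(L\log n))$. The hyperedges routed to level $\ell^*$ form an independent Bernoulli subsample of the input at rate $1/(L+1)$, so by a Karger-style concentration bound for polymatroid (hypergraph) strength the subsampled polymatroid has strength $\Theta(k^*/L)$ with high probability. Within level $\ell^*$ these hyperedges are uniformly colored over a palette of size $2^{\ell^*} = \Theta(k^*/(L\log n))$, which is exactly the ``right'' palette size for the subsampled polymatroid; the random-coloring guarantee of \cite{CCV09} then delivers $\Omega(2^{\ell^*}) = \Omega(k^*/\log^2 n)$ base colors in expectation from level $\ell^*$ alone. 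When $k^* = O(\log^2 n)$ the required expected output is only $\Omega(1)$ (since $\opt(f)=O(\log^2 n)$); I would cover this small-strength regime by mixing the above with a ``safety'' sub-algorithm that, with probability $\Theta(1/\log^2 n)$ drawn from the initial shared randomness, ignores the level-and-color sampling and colors every hyperedge with a single designated color, which is guaranteed to be a base since the input hypergraph is connected.

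The hard part will be the Karger-style concentration bound showing that a Bernoulli subsample at rate $1/L$ preserves the polymatroid strength up to a $\Theta(1/L)$ factor with high probability. For the graphic matroid underlying \DST this is the classical graph-strength sampling theorem, so the \DST case goes through cleanly; for general hypergraph connectivity (the \DCSS polymatroid) this bound has to be imported from the recent hypergraph sparsification literature, e.g., the quotient-based sampling arguments developed in \cite{quanrud2024quotient}. Once that concentration is available, the remainder of the analysis reduces to a direct application of the random-coloring guarantee of \cite{CCV09} to the single level $\ell^*$ in isolation, which makes both the algorithm and its proof substantially simpler and faster than the general polymatroid machinery behind Theorem~\ref{main-result:polymatroid}.
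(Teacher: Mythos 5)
Your algorithm is genuinely different from the paper's, so let me start there: the paper's Algorithm~\ref{algo:pseudocode-hypergraph} computes a per-hyperedge estimate $\eta_t := \min_{\{u,v\}\subseteq e_t}|E_t(\{u,v\})|$ (the minimum, over pairs of vertices in the arriving hyperedge, of the number of already-seen hyperedges containing that pair), anchors the level at $\ell_t=\lceil\log\eta_t\rceil$, and then adds a random offset in a $\Theta(\log n)$-wide window. Your algorithm instead assigns levels \emph{non-adaptively}: $\ell_t$ is uniform over a fixed set $\{0,\dots,\lceil\log_2 n\rceil\}$ that never looks at the input at all. That adaptivity turns out to be the load-bearing idea, and dropping it creates a genuine gap.

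The gap is that your total palette has $\sum_{\ell=0}^{L}2^{\ell}=O(n)$ colors, so the algorithm can never produce more than $O(n)$ base colors on any instance. But $\opt(G)$ is only bounded by $\lambda(G)$, not by $n$, and the paper explicitly allows $E$ to be a multiset. Take, for example, $V=\{u,v\}$ and $m$ parallel edges between $u$ and $v$: then $\opt=\lambda(G)=m$, while your algorithm outputs $O(1)$ base colors, giving competitive ratio $\Omega(m)$ rather than $O(\log^2 n)$. More generally, whenever $\lambda(G)=\omega(n\log^2 n)$ — easy to arrange with parallel hyperedges — there is no level $\ell^*\le L$ whose palette size $2^{\ell^*}$ is in the required window $\Theta(k^*/\log^2 n)$, so the key step of your analysis (``choose $\ell^*$ with $2^{\ell^*}$ in the window $\Theta(k^*/(L\log n))$'') has no valid choice. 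The paper's estimator $\eta_t$ is precisely what rescues this: when $e_t$ is ``good,'' $\lceil\log\eta_t\rceil$ is within $O(\log n)$ of $\lfloor\log\lambda(G)\rfloor$ \emph{regardless of how large $\lambda(G)$ is}, so a $\Theta(\log n)$-wide random offset suffices and the palette scales with $\lambda(G)$ rather than with $n$. You cannot patch this by enlarging $L$ without either abandoning the online model (you would need to know $m$ or $\lambda$ in advance) or paying an extra $\log$ factor in the level-sampling probability.

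As a secondary point, even in the regime where your approach works (simple graphs, say, where $\lambda(G)<n$), the detour through a ``Karger-style concentration bound for hypergraph strength with high probability'' is both heavier machinery than needed and not obviously justified at sampling rate $1/(L+1)$ when $k^*/L$ is not $\Omega(\log n)$. The cleaner route, and the one the paper uses, is to observe that for a fixed color $(\ell^*,c)$ each hyperedge receives it independently with probability $\frac{1}{(L+1)2^{\ell^*}}$, and apply Lemma~\ref{lemma:sampling-property} directly to get a $\ge 1/2$ success probability per color; no whp sparsification theorem is needed. Your runtime analysis, the mixing with a single-color safety algorithm to cover small $\opt$, and the reduction to $\opt\le k^*\le\lambda(G)$ are all fine and parallel what the paper does.
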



\subsection{Technical Background and Overview of Algorithms}
We recall that we assume $\mathcal{N}$ is the input sequence. Let $f:2^\mathcal{N} \rightarrow \Z_{\ge 0}$ be the polymatroid of interest. Let  $r:=f(\mathcal{N})$ be the function value of bases and $k^*:=k^*(f)$. For the discussion here, we will assume that $f(e)>0$ for every $e\in \calN$, $r\geq 2$, and $k^*=\Omega(\log^2 r)$ since this is the non-trivial case. 
First consider the setting where the online algorithm knows $k^*$ in advance. In this setting, coloring each element uniformly from the palette $[\Theta(k^*/\log r)]$ gives $\Omega(k^*/\log{r})$ base colors in expectation and, consequently achieves $O(\log{r})$-competitive ratio. 

Suppose $k^*$ is not known in advance. Our approach is to \emph{estimate} $k^*$ at each timestep. 
At each timestep $t\in [n]$, suppose that we can 
find a number $q_t$ that is within a $\text{poly}(r)$ factor of $k^*$, i.e., $q_t\in [\frac{k^*}{\text{poly}(r)}, k^*\cdot \text{poly}(r)]$---we 
call such a $q_t$ as a \emph{coarse estimate}.  Let $P_t$ be a uniform random sample from $\{q_t\cdot 2^{-O(\log r)}, \ldots, q_t\cdot 2^{-1}, q_t\cdot 2^0, q_t\cdot 2^{1}, \ldots, q_t\cdot 2^{O(\log r)}\}$; $P_t$ is a $2$-approximation of $k^*$ with probability $\Omega(1/\log r)$. Consequently, 
coloring the element $e_t$ uniformly from the palette $[\Theta(P_t/\log^2 r)]$ leads to $\Omega(k^*/ \log^2 r)$ base colors in 
expectation. 

The key challenge is computing a coarse estimate for $k^*$ which depends on the full input sequence. 
This may not be feasible at \emph{all} timesteps owing to the limited knowledge of the polymatroid.  
Instead, we compute an estimate at each timestep such that the elements at timesteps which achieve the coarse property provide \emph{a sufficiently large value of $k^*$}. We formalize this approach now. 
Suppose we have a subroutine to compute some estimate $q_t$ at each timestep $t\in [n]$. 
We call a timestep $t\in [n]$ to be \emph{good} if the estimate $q_t$ is a coarse estimate of $k^*$ and \emph{bad} otherwise.
We let $\mathcal{N}_{\good}$ be the collection of elements that arrive at good timesteps and focus on the function $f$ restricted to $\calN_{\good}$, i.e., on the function $g:=f_{|\calN_{\good}}$. Suppose that the following two properties hold: (i) Bases of $g$ are also bases of $f$ and (ii) $k^*(g)=\Omega(1) k^*$. 
These two properties suffice to obtain $\Omega(k^*/\log^2{r})$ base colors in expectation via the fine-tuned random coloring argument in the previous paragraph.

We rely on the notion of \emph{quotients} to obtain an estimator satisfying properties (i) and (ii). 
\begin{definition}\label{defn:quotients}
For a polymatroid $h:2^{\mathcal{V}}\rightarrow \mathbb{Z}_{\geq 0}$, a set $Q\subseteq \mathcal{V}$ is a \emph{quotient} of $h$ if $h(e+(\mathcal{V}\setminus Q))>h(\mathcal{V}\setminus Q)$ $\forall e\in Q$, i.e., if each element $e\in Q$ has strictly positive marginal with respect to $\mathcal{V}\setminus Q$. 
\end{definition}

Since the definition of quotients might seem contrived, we interpret it for some concrete polymatroids: 
\begin{enumerate}
\item Let $G=(V, E)$ be a connected hypergraph (graph respectively). Consider the polymatroid $h:2^E\rightarrow \Z_{\ge 0}$ that arises in \DCSS (\DST respectively) with $G$ being the input instance. Quotients of $h$ correspond to union of cut-set of disjoint subsets of vertices, i.e., each quotient $Q\subseteq E$ of $h$ is of the form $Q=\cup_{S\in \mathcal{C}}\delta(S)$, where the family $\mathcal{C}\subseteq 2^V\setminus \{\emptyset, V\}$ is a disjoint family of subsets of vertices. In particular, the minimum size of a non-empty quotient is equal to the global min-cut value of $G$. 
\item Let $G=(V, E)$ be a hypergraph. Consider the polymatroid $h: 2^E\rightarrow \Z_{\ge 0}$ that arises in \DSC with $G$ being the input instance, i.e., $h$ is the coverage function of $G$. Quotients of $h$ correspond to union of vertex-isolating cuts of $G$, i.e., each quotient $Q\subseteq E$ of $h$ is of the form $Q=\cup_{u\in S}\delta(u)$ where  $S\subseteq V$. In particular, the minimum size of a non-empty quotient is equal to the min-degree of $G$. 
\end{enumerate}

We observed that the minimum size of a non-empty quotient of $f$ is an $r$-approximation of $k^*$---see Appendix~\ref{section:appendix-quotient}. This inspired us to use the minimum-sized quotient of $f_{|\calN_t}$ that \emph{contains $e_t$} as the estimate at each timestep $t\in [n]$. That is, we use the estimator defined as 
\begin{align}\label{techniques:estimate-q_t}
    q_t&:=\min \{|Q|: e_t\in Q \subseteq \mathcal{N}_t \text{ and $Q$ is a quotient of $f_{|\mathcal{N}_t}$}\}.
\end{align}

Our main technical contribution is showing that the preceding estimator satisfies the two properties that we discussed earlier.
This result combined with the random coloring process yields the desired $O(\log^2 r)$-competitive ratio.
A technical issue is that the estimator $q_t$ is NP-Hard to compute for general polymatroids (even for matroid rank functions). 
However, we note that a $\text{poly}(r)$-approximation of $q_t$ is sufficient for the desired competitive ratio. We show that an $r$-approximation of $q_t$ can be computed using a \emph{strength decomposition} with respect to $f_{|\mathcal{N}_t}$ which admits a polynomial-time algorithm.

Next, we briefly discuss the idea behind the simpler algorithm for \DCSS and \DST of Theorem \ref{main-result:hypergaph}. We will focus on \DST here. The polymatroid $f:2^E\rightarrow \Z_{\ge 0}$ of interest here is the rank function of the graphic matroid defined by the undirected graph $G=(V, E)$. For this case, the quotient computation
is polynomial-time solvable since $q_t$ is equal to the min $u_t$-$v_t$ cut value in the graph $(V, E_t)$ where $e_t =(u_t,v_t)$ and $E_t = \{e_1, e_2, \ldots, e_t\}$.  However, we show that a much simpler estimator for \DST also achieves properties (i) and (ii) mentioned above: we use 
\[
\eta_t:= \text{number of edges between $u_t$ and $v_t$ in the graph $(V, E_t)$}.
\]
This estimator generalizes also to \DCSS. 

\section{Preliminaries}\label{section:prelim}
Given two integers $a\leq b$, let $[a,b]$ denote the set of integers $x$ with $a\leq x \leq b$ and $[b]$ denote the set of integers $x$ with $1\leq x \leq b$. The $\log(\cdot)$ operator with an unspecified base refers to $\log_2(\cdot)$.
We recall that a polymatroid is an integer-valued monotone submodular function with its value on the empty set being $0$. Let $f:2^\mathcal{N}\rightarrow \mathbb{Z}_{\geq 0}$ be a polymatroid. 
We denote the value of the function $f$ on the ground set $\calN$ by $r:=f(\mathcal{N})$ and recall that a set $S\subseteq \calN$ is a base if $f(S) = r$. 
We will assume that $f(e)>0$ for every element $e\in \calN$: if we have an element $e\in \calN$ with $f(e)=0$, then $f(S) = f(S+e)$ for every $S\subseteq \calN$ by submodularity and consequently, picking an arbitrary color for $e$ does not influence the number of base colors.  

For a subset $A\subseteq \mathcal{N}$, we define the \emph{marginal function with respect to $A$} as the function $f_A:2^\mathcal{N}\rightarrow \mathbb{Z}$ defined by $f_A(S):=f(A\cup S)-f(A)$ for every $S\subseteq \mathcal{N}$. If $f$ is submodular, then the function $f_A$ is also submodular for every $A\subseteq \mathcal{N}$. For ease of notation, for an element $e\in \mathcal{N}$, we let $e$ denote the singleton set $\{e\}$. For every set $S\subseteq \mathcal{N}$, we use $S+e$ and $S-e$ to abbreviate $S\cup \{e\}$ and $S\setminus \{e\}$, respectively.

We need the notion of span, closed sets, and quotients. For a set $S\subseteq \mathcal{N}$, the \emph{span} of $S$ is the set of elements with marginal value $0$ with respect to $S$, i.e., 
    $\text{span}(S):=\{e\in \mathcal{N}:f_S(e)=0\}$. 
We note that for every two sets $A\subseteq B$, $\text{span}(A)\subseteq \text{span}(B)$ since $f$ is a monotone submodular function. A set $S\subseteq \mathcal{N}$ is \emph{closed} if $S=\text{span}(S)$.
A set $Q\subseteq \mathcal{N}$ is a \emph{quotient} of $f$ if $Q=\mathcal{N}\setminus \text{span}(S)$ for some set $S\subseteq \mathcal{N}$. This definition of quotient is equivalent to the one in Definition \ref{defn:quotients}. 
We note that the empty set is closed (since $f(e)>0$ for every element $e\in \mathcal{N}$) which implies that $\mathcal{N}$ is a quotient of $f$. 
The notion of quotients plays a central role in polymatroid sparsification \cite{quanrud2024quotient}. 

For a set $S\subseteq \mathcal{N}$, let $f_{|S}:2^S\rightarrow \mathbb{Z}_{\geq 0}$ be the function obtained from $f$ by restricting the ground set to $S$, i.e., $f_{|S}(T):=f(T)$ for every $T\subseteq S$. We recall that if $f$ is a polymatroid, then $f_{|S}$ is also a polymatroid for every $S\subseteq \calN$. Restricting the ground set preserves quotients as shown in the following lemma.
\begin{lemma}\cite{quanrud2024quotient}\label{lemma:restrict-quotient}
    Let $f:2^\mathcal{N}\rightarrow \mathbb{Z}_{\geq 0}$ be a polymatroid. Let $Q$ be a quotient of $f$ and $S\subseteq\mathcal{N}$. Then, $Q\cap S$ is a quotient of $f_{|S}$.
\end{lemma}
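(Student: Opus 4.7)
The plan is to unwind the definition of a quotient and then apply submodularity once. Using the equivalent definition in Definition~\ref{defn:quotients}, $Q$ being a quotient of $f$ means that $f(\mathcal{N}\setminus Q + e) > f(\mathcal{N}\setminus Q)$ for every $e\in Q$. What we need to show is that for every $e\in Q\cap S$,
\[
f_{|S}\bigl(S\setminus(Q\cap S) + e\bigr) > f_{|S}\bigl(S\setminus(Q\cap S)\bigr),
\]
which, after unwinding the restriction and noting $S\setminus(Q\cap S) = S\setminus Q$, is just the statement $f(S\setminus Q + e) > f(S\setminus Q)$.

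First, I would fix an arbitrary $e\in Q\cap S$. Since $e\in Q$, $e$ lies in neither $S\setminus Q$ nor $\mathcal{N}\setminus Q$, so adding $e$ to either of these sets is a genuine addition. Next, I would observe the containment $S\setminus Q \subseteq \mathcal{N}\setminus Q$. By the submodularity of $f$, marginals are monotonically nonincreasing in the conditioning set, hence
\[
f(S\setminus Q + e) - f(S\setminus Q) \;\geq\; f(\mathcal{N}\setminus Q + e) - f(\mathcal{N}\setminus Q).
\]
By the quotient property of $Q$ with respect to $f$, the right-hand side is strictly positive, and this gives the desired strict inequality for $f_{|S}$ at $e$. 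Since $e\in Q\cap S$ was arbitrary, $Q\cap S$ is a quotient of $f_{|S}$.

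There is no real obstacle here: the entire argument is one application of the diminishing marginal returns property. The only thing to be careful about is a clean translation between the two equivalent formulations of a quotient (the ``positive marginal'' formulation from Definition~\ref{defn:quotients} versus the ``complement of a span'' formulation introduced in the preliminaries), and making sure that $e\notin S\setminus Q$ so that the marginals are honest differences rather than trivially zero.
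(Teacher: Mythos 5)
Your argument is correct, and it is the standard one for this fact. The paper itself cites the lemma from \cite{quanrud2024quotient} without reproducing a proof, so there is no in-paper proof to compare against; the route you take — rewrite $S\setminus(Q\cap S)$ as $S\setminus Q$, observe $S\setminus Q \subseteq \mathcal{N}\setminus Q$ with $e$ outside both, and apply diminishing marginal returns to transfer the strict positivity from $\mathcal{N}\setminus Q$ down to $S\setminus Q$ — is exactly the natural one-line proof, and all the details you flag (that $e\notin S\setminus Q$, that you are working with the ``positive marginal'' formulation of Definition~\ref{defn:quotients}) are handled correctly.
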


We recall that 
\begin{align}
    \opt(f)&=\max\{k:\ \exists k \text{ disjoint bases of }f\}\ \text{and}\notag\\
    k^*(f)&=\min_{A\subseteq \mathcal{N}: f(A)<f(\mathcal{N})} \left\lfloor \frac{\sum_{e\in \mathcal{N}}f_A(e)}{f(\mathcal{N})-f(A)}\right\rfloor. \label{def:opt-k}
\end{align}
We note that there exists a closed set $A$ that achieves the minimum in the definition of $k^*(f)$: suppose $A$ is a set that achieves the minimum, then $f(\text{span}(A))=f(A)<f(\calN)$ by definition of $\text{span}$ while $\sum_{e\in \mathcal{N}} f_A(e) \geq \sum_{e\in \mathcal{N}}f_{\text{span}(A)}(e)$ by submodularity of function $f$ and consequently, $\lfloor\frac{\sum_{e\in \mathcal{N}}f_A(e)}{f(\mathcal{N})-f(A)}\rfloor\ge \lfloor\frac{\sum_{e\in \mathcal{N}}f_{\text{span}(A)}(e)}{f(\mathcal{N})-f(\text{span}(A))}\rfloor$.
If the function $f$ is clear from context, then we denote $\opt(f)$ and $k^*(f)$ by $\opt$ and $k^*$ respectively. 
We need the following approximate min-max relation between $\opt$ and $k^*$. 
\begin{theorem}\cite{CCV09}\label{thm:CCV-k^*}
    $\frac{k^*}{O(\log r)}\leq \opt\leq k^*$.
\end{theorem}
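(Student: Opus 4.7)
The theorem has two inequalities that I would prove by separate arguments. The upper bound $\opt(f) \leq k^*$ is a short submodularity calculation: given any $k$ pairwise disjoint bases $B_1, \ldots, B_k$ and any $A \subseteq \calN$ with $f(A) < r$, subadditivity of the marginal $f_A$ on disjoint unions gives $\sum_{e \in B_i} f_A(e) \geq f_A(B_i) = r - f(A)$ for each $i$. Summing over $i$ and using pairwise disjointness of the $B_i$'s yields $\sum_{e \in \calN} f_A(e) \geq k \cdot (r - f(A))$, so $k \leq k^*$ after dividing and taking the floor.

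The substantive direction is the lower bound $\opt(f) \geq k^*/O(\log r)$, which I would prove algorithmically by random coloring. The plan is to set $k := \lfloor k^* / (\ln r + \ln \ln r) \rfloor$ and color each element of $\calN$ independently and uniformly at random from $[k]$; let $X_c$ denote the set of elements receiving color $c$. By linearity of expectation, the expected number of base colors equals $\sum_c \Pr[f(X_c) = r]$, so it suffices to show $\Pr[f(X_c) = r] \geq 1 - e/\ln r$ for every color $c$, which would yield the desired $\Omega(k) = \Omega(k^*/\log r)$ expected base colors.

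My structural lemma for the analysis would be that for every closed set $A$ maximal subject to $f(A) < r$, $|\calN \setminus A| \geq k^*$. The argument is that maximality forces $f(A + e) = r$ for every $e \notin A$ (otherwise $\text{span}(A + e)$ would properly contain $A$ and still have $f$-value $< r$, contradicting maximality), so $f_A(e) = r - f(A)$ for every $e \notin A$; substituting into the definition of $k^*$ gives $k^* \leq \lfloor \sum_e f_A(e)/(r - f(A)) \rfloor = |\calN \setminus A|$. As an immediate corollary, $\Pr[X_c \subseteq A] = (1 - 1/k)^{|\calN \setminus A|} \leq e^{-k^*/k}$ for every such maximal closed proper subset $A$.

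The main obstacle will be converting this per-set bound into a bound on $\Pr[f(X_c) < r]$, since a naive union bound over all maximal closed proper subsets of $\calN$ may be exponentially wasteful. My plan is to instead control the expected rank deficit $\Ex[r - f(X_c)]$ via the identity $r - f(X_c) = \sum_{t=0}^{r-1} \ind[f(X_c) \leq t]$, charging each event $\{f(X_c) \leq t\}$ to a maximal closed proper subset obtained by extending $\text{span}(X_c)$ and applying the structural lemma level-by-level. A successful implementation would give $\Ex[r - f(X_c)] \leq r \cdot e^{-k^*/k}$, and Markov's inequality then yields $\Pr[f(X_c) < r] \leq r \cdot e^{-k^*/k}$; plugging in the choice $k = k^*/(\ln r + \ln \ln r)$ gives the desired $O(1/\ln r)$ failure probability per color. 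I expect the level-by-level charging argument to be the main technical hurdle, as it is where the obstacle of exponentially many maximal closed proper subsets must be sidestepped.
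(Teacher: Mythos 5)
Your proof of the upper bound $\opt(f) \leq k^*$ is correct and is the standard argument.

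For the lower bound, your overall plan (random coloring from a palette of size $k = \lfloor k^*/(\ln r + \ln\ln r)\rfloor$, target $\Ex[r-f(X_c)] \leq r \cdot e^{-k^*/k}$, then Markov) is the right one, and your structural lemma that every maximal closed proper subset $A$ has $|\calN \setminus A| \geq k^*$ is correct. However, the path you sketch toward the target --- establishing $\Pr[f(X_c) \leq t] \leq e^{-k^*/k}$ for every $t \in \{0, \ldots, r-1\}$ and then summing --- cannot be completed, because the per-level bound you are aiming for is false. At $t = r-1$ it would assert $\Pr[f(X_c) < r] \leq e^{-k^*/k}$, i.e.\ a per-color failure probability of roughly $1/(r\ln r)$, which is already stronger than your final conclusion (making the detour through $\Ex[r-f(X_c)]$ and Markov redundant) and does not hold in general: take $f$ to be the coverage function where $\calN$ consists of $k^*$ copies of each singleton of an $r$-element universe, so $k^*(f) = k^*$. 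The $r$ universe elements are then missed (approximately independently) by $X_c$ with probability about $e^{-k^*/k} \approx 1/(r\ln r)$ each, so $\Pr[f(X_c) < r] = \Theta(1/\ln r)$, larger by a factor of $r$ than what the per-level bound would give. So the ``level-by-level charging'' is not merely a technical hurdle to be cleared; it is chasing an inequality that fails exactly at the level that matters.

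The argument underlying this theorem is instead the one carried out in Lemma~\ref{lemma:sampling-property} (Appendix~\ref{sec:random-sampling-to-get-base}), which you would apply with $p = 1/k$. It never decomposes the failure event by rank level. Instead it reveals the elements in a uniformly random order, sets $S_i := X_c \cap \{e_1,\ldots,e_i\}$, and shows the expected rank deficit contracts multiplicatively at every step: $\Ex[r-f(S_i)] \leq (1 - pk^*/n)\,\Ex[r-f(S_{i-1})]$, using the definition of $k^*$ as the pointwise guarantee that $\sum_{e\in\calN} f_A(e) \geq k^*\bigl(r - f(A)\bigr)$ for every $A$. Iterating over $i = 1, \ldots, n$ gives $\Ex[r-f(X_c)] \leq (1-pk^*/n)^n \, r < e^{-pk^*}\, r$ directly, and Markov finishes. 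This sequential contraction argument sidesteps the ``exponentially many maximal closed sets'' obstacle entirely and is what should replace your per-level plan.
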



We need the following lemma showing that for every polymatroid $f$, sampling each element independently with probability $\Omega\left(\frac{c\cdot \log r}{k^*}\right)$ gives a base of $f$ with constant probability. A variant of the lemma was shown in \cite{CCV09}. We include a proof of the lemma in the appendix for completeness. 

\begin{restatable}{lemma}{lemmaRandomSamplingGivesBase}\label{lemma:sampling-property}
    Let $f:2^{\mathcal{N}}\rightarrow \mathbb{Z}_{\geq 0}$ be a polymatroid with $f(\mathcal{N})=r\geq 2$. Let $p:=\min\{1, \frac{2\log r}{k^*}\}$ and $S\subseteq \mathcal{N}$ be a subset obtained by picking each element in $\mathcal{N}$ with probability at least $p$ independently at random. Then, $S$ is a base of $f$ with probability at least $\frac{1}{2}$.
\end{restatable}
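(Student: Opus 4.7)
The plan is to show $\Pr[f(S)\text{ is not a base of }f] \le 1/2$. I focus on the nontrivial case $p = 2\log r/k^* < 1$, i.e. $k^* > 2\log r$; when $p=1$ we have $S = \mathcal{N}$ and trivially $f(S) = r$.

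First I would establish a structural fact: every non-empty quotient $Q$ of $f$ has $|Q| \ge k^*$. Writing $Q = \mathcal{N}\setminus A$ for a closed proper $A$, the definition of $k^*$ gives $\sum_{e\in\mathcal{N}} f_A(e) \ge k^*(r - f(A))$; since $f_A(e) \le f(\mathcal{N}) - f(A) = r - f(A)$ for every $e$, the number of elements with $f_A(e) > 0$ is at least $k^*$, and because $A$ is closed this set equals $Q$. As a consequence, for any fixed non-empty quotient $Q$,
\[
\Pr[S \cap Q = \emptyset] = (1-p)^{|Q|} \le e^{-pk^*} \le e^{-2\log r} = 1/r^2.
\]

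Next I would reduce the failure event to a small number of witnesses. Fix a minimal base $B = \{b_1,\dots,b_l\}$ of $f$, so $l \le r$ and $f(B) = r$. Since $B \subseteq \text{span}(S)$ forces $f(S) \ge f(B) = r$, we have
\[
\Pr[f(S) < r] \;\le\; \sum_{j=1}^{l}\Pr[b_j \notin \text{span}(S)].
\]
The event $\{b_j \notin \text{span}(S)\}$ means some non-empty quotient containing $b_j$ is missed by $S$. The plan here is to exploit the lattice structure of closed sets (they are intersection-closed, so the family of quotients missed by $S$ is union-closed) to collapse this union of events into a \emph{single} event driven by the random closed set $\text{span}(S)$, and then bound its probability by stratifying over the $f$-value of $\text{span}(S)$. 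Each stratum contributes at most $(1-p)^{k^*} \le 1/r^2$ via the structural fact, and there are at most $r$ rank strata, yielding $\Pr[b_j \notin \text{span}(S)] \le O(1/r)$.

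The main obstacle is the third step: per-quotient we only lose a factor $1/r^2$, but the number of quotients containing $b_j$ can be exponential, so a naive union bound is hopeless. The resolution will use the canonical-witness idea above---identifying $\mathcal{N}\setminus\text{span}(S)$ as the unique maximal quotient missed by $S$ and controlling its rank via a telescoping / FKG-correlation argument. Summing $\Pr[b_j \notin \text{span}(S)] \le O(1/r)$ over $l \le r$ choices of $b_j$ then gives $\Pr[f(S) < r] \le 1/2$ (tightening the constant in $p$ if needed to absorb lower-order factors), completing the proof.
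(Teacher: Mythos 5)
Your preliminary observations are all correct and nontrivial: the bound $|Q|\ge k^*$ for every non-empty quotient $Q$ follows cleanly from the definition of $k^*$ and the pointwise bound $f_A(e)\le r-f(A)$, and the reduction $\Pr[f(S)<r]\le\sum_j\Pr[b_j\notin\mathrm{span}(S)]$ over a minimal base is valid since $B\subseteq\mathrm{span}(S)$ forces $f(S)\ge f(B)=r$. You also correctly identify that $\mathcal{N}\setminus\mathrm{span}(S)$ is the unique maximal quotient missed by $S$.

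However, the crucial step---bounding $\Pr[b_j\notin\mathrm{span}(S)]$ by $O(1/r)$---is not actually carried out, and the sketch you give does not close the gap. The event $\{b_j\notin\mathrm{span}(S)\}$ is a union over all (exponentially many) quotients containing $b_j$ of the events $\{S\cap Q=\emptyset\}$, and you acknowledge this. Stratifying by the value $f(\mathrm{span}(S))$ produces only $r$ strata, but each stratum is still a union over all closed sets $A$ of that rank, of which there may be exponentially many; the per-quotient bound $\Pr[S\cap Q=\emptyset]\le(1-p)^{k^*}$ applies to a \emph{fixed} $Q$, not to a stratum. So the inequality ``each stratum contributes at most $(1-p)^{k^*}$'' is exactly the claim that needs a proof, and the appeal to a ``canonical-witness / telescoping / FKG'' argument is left entirely unsubstantiated. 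The paper circumvents this by a genuinely different route: it reveals the inclusion decisions element-by-element along a uniformly random permutation and tracks the potential $f_{S_i}(\mathcal{N})$, showing it contracts by a multiplicative factor $(1-pk^*/n)$ in expectation at each step (this is where the definition of $k^*$ is used, applied to the current partial set $S_{i-1}$), then applies Markov's inequality to $\mathbb{E}[f_{S_n}(\mathcal{N})]<1/2$. The sequential argument avoids any union bound over quotients, and to complete your approach you would essentially need to reinvent some version of it (or find a genuinely new resolution to the exponential-union obstacle, which you have not supplied).
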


We recall that the elements of the ground set $\calN$ are indexed as $e_1, e_2, \ldots, e_n$ according to the arrival order, i.e., $e_t$ is the element that arrives at time $t\in [n]$ and moreover, $\calN_t=\{e_1, e_2, \ldots, e_t\}$ is the set of elements that have arrived until time $t$ for every $t\in [n]$. For a coloring of all elements, we say that a fixed color is a \emph{base color} if the set of elements of that color is a base of the polymatroid $f$. 
Let $\texttt{Alg}(f)$ be the number of base colors obtained by an online algorithm $\texttt{Alg}$ for a given polymatroid $f$. A deterministic online algorithm $\texttt{Alg}$ is \emph{(purely) $\alpha$-competitive} if for every polymatroid $f$, we have
\begin{equation}
    \texttt{Alg}(f)\geq \frac{\opt(f)}{\alpha}. \label{def:competitive-ratio}
\end{equation}
A randomized online algorithm is \emph{$\alpha$-competitive} if the bound in (\ref{def:competitive-ratio}) holds in expectation. 
\section{An $O(\log^2 r)$-Competitive Algorithm}\label{section:algorithm}
In this section, we present a randomized online algorithm with expected competitive ratio $O(\log^2 r)$ but the run-time of the algorithm at each timestep will be exponential. The purpose of this section is to illustrate that the online arrival order has sufficient information to achieve a reasonable competitive ratio (albeit in exponential runtime) and to highlight the main ideas underlying the eventual polynomial-time algorithm that will prove Theorem \ref{main-result:polymatroid}. 
We describe the algorithm in Section~\ref{subsection:algorithm}. We present a novel property of the sequence of minimum sized quotients containing the last element with respect to a fixed ordering in Section~\ref{subsection:property-quotient}. We prove the structural property that 
focusing on elements whose estimates are within a $\text{poly}(r)$ factor of $k^*$ does not decrease $k^*$ by more than a constant factor 
in Section~\ref{subsection:structural-property}. We use these properties to analyze the competitive ratio of the algorithm in Section~\ref{subsection:ratio-analysis}. For Sections~\ref{subsection:algorithm}, \ref{subsection:property-quotient}, \ref{subsection:structural-property}, and \ref{subsection:ratio-analysis}, we assume that $r\geq 2$ and $k^*\geq 120\log^2 r$. We discuss how to relax the assumption in Section~\ref{subsection:mixed-algorithm} in which we combine the designed algorithm with algorithms for other cases to obtain an algorithm with competitive ratio $O(\log^2 r)$. 
We will modify the algorithm in Section~\ref{section:poly-algorithm} to design an $O(\log^2 r)$-competitive algorithm that runs in polynomial time at each timestep, thereby completing the proof of Theorem~\ref{main-result:polymatroid}.

\subsection{Algorithm Description}\label{subsection:algorithm}
We assume that $k^*\geq 120\log^2 r$ and $r\geq 2$. The algorithm proceeds as follows: 
At each timestep $t\in [n]$, 
the algorithm computes $$q_t:=\min \{|Q|: e_t\in Q \subseteq \mathcal{N}_t \text{ and $Q$ is a quotient of $f_{|\mathcal{N}_t}$}\}.$$ We note that $\mathcal{N}_t$ is a quotient of $f_{|\mathcal{N}_t}$ with $e_t\in \mathcal{N}_t$ and hence, $q_t$ is well-defined. Let $\ell_t:=\lceil \log q_t\rceil$. Then, the algorithm samples a value $R_t$ from $[\ell_t-3\lceil\log r\rceil, \ell_t+3\lceil\log r\rceil]$ uniformly at random. Finally, the algorithm sets $C(e_t)$ to be an integer picked uniformly at random from $[\lfloor2^{R_t}/(60\cdot \log ^2 r)\rfloor]$. We give a pseudocode of the algorithm in Algorithm~\ref{algo:pseudocode-polymatroid}.

\begin{algorithm2e}[ht]
\caption{Randomized Online Algorithm for \DPB}
\label{algo:pseudocode-polymatroid}
\SetKwInput{KwInput}{Input}                
\SetKwInput{KwOutput}{Output}              
\LinesNumbered

\DontPrintSemicolon
  
    \KwInput{an ordering $e_1, e_2, \ldots, e_n$ of the ground set $\calN$ where $e_t$ is the element that arrives at time $t$ for each $t\in [n]$; $r=f(\mathcal{N})$, and 
    evaluation oracle access to $f_{|\calN_t}$ at each timestep $t\in [n]$, where  $f:2^{\calN}\rightarrow \Z_{\ge 0}$ is a polymatroid. 
    }
    \KwOutput{color assignment $C: \mathcal{N} \rightarrow \mathbb{Z}_+$.}
    
    \SetKwProg{When}{When}{:}{}
    \For{$t=1, 2, \ldots, n$}{
        $q_t\gets \min \{|Q|: e_t\in Q \subseteq \mathcal{N}_t \text{ and $Q$ is a quotient of $f_{|\mathcal{N}_t}$}\}$
        
        $\ell_t\gets \lceil \log q_t \rceil$
        
        Sample $R_t$ $\textit{u.a.r.}$ from $[\ell_t-3\lceil\log r\rceil, \ell_t+3\lceil\log r\rceil]$

        Sample $C(e_t)$ $\textit{u.a.r.}$ from $[\lfloor2^{R_t}/(60\cdot \log ^2 r)\rfloor]$

    }
\end{algorithm2e}

\subsection{Ordered Min-sized Quotients Property}\label{subsection:property-quotient}

In this section, we show 
that there can be at most $r$ distinct arrival times $t$ with the same value of $q_t$.

\begin{lemma}\label{lemma:counting-quotient}
    Let $f: 2^{\calN}\rightarrow \Z_{\ge 0}$ be a polymatroid over the ground set $\calN$ with $f(e)>0$ for every $e\in \mathcal{N}$ and $f(\emptyset)=0$. Let $e_1, e_2, \ldots, e_n$ be an ordering of the ground set $\calN$. For every $t\in [n]$, we define
\[
q_t:=\min\left\{|Q|: e_t\in Q\subseteq \calN_t \text{ and $Q$ is a quotient of $f_{|\calN_t}$}\right\}. 
\]
Then, for every $j\in \Z_+$,
\[
|\left\{t\in [n]: q_t = j\right\}| \le r.
\]

\end{lemma}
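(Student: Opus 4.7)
The plan is to enumerate the timesteps contributing to the count: let $t_1 < t_2 < \ldots < t_k$ be all the timesteps with $q_t = j$, and for each $i \in [k]$ fix a minimum-sized quotient $Q_{t_i}$ of $f_{|\calN_{t_i}}$ with $e_{t_i} \in Q_{t_i}$ (so $|Q_{t_i}| = j$). Let $C_{t_i} := \calN_{t_i} \setminus Q_{t_i}$, which is a closed set of $f_{|\calN_{t_i}}$. I would then show $f(\{e_{t_1}, \ldots, e_{t_k}\}) \ge k$ and invoke monotonicity to conclude $k \le f(\calN) = r$.

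The core structural step would be a containment property: for all $i < i'$ in $[k]$, we have $e_{t_i} \in C_{t_{i'}}$. Suppose instead that $e_{t_i} \in Q_{t_{i'}}$. Since $Q_{t_{i'}}$ is a quotient of $f_{|\calN_{t_{i'}}}$ and $\calN_{t_i} \subseteq \calN_{t_{i'}}$, Lemma \ref{lemma:restrict-quotient} gives that $Q_{t_{i'}} \cap \calN_{t_i}$ is a quotient of $f_{|\calN_{t_i}}$; it contains $e_{t_i}$, so $q_{t_i} \le |Q_{t_{i'}} \cap \calN_{t_i}|$. But $e_{t_{i'}} \in Q_{t_{i'}} \setminus \calN_{t_i}$ (since $t_{i'} > t_i$), hence $|Q_{t_{i'}} \cap \calN_{t_i}| \le |Q_{t_{i'}}| - 1 = j - 1$, contradicting $q_{t_i} = j$.

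With the containment $\{e_{t_1}, \ldots, e_{t_{a-1}}\} \subseteq C_{t_a}$ in hand for every $a \in [k]$, I would then extract an increment of at least $1$ in $f$ at each step. Since $C_{t_a}$ is closed in $f_{|\calN_{t_a}}$ and $e_{t_a} \in \calN_{t_a} \setminus C_{t_a}$, the marginal $f(C_{t_a} + e_{t_a}) - f(C_{t_a})$ is a strictly positive integer, hence at least $1$. Submodularity (diminishing marginals, applied to $\{e_{t_1}, \ldots, e_{t_{a-1}}\} \subseteq C_{t_a}$) then yields $f(\{e_{t_1}, \ldots, e_{t_a}\}) - f(\{e_{t_1}, \ldots, e_{t_{a-1}}\}) \ge f(C_{t_a} + e_{t_a}) - f(C_{t_a}) \ge 1$. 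Telescoping gives $f(\{e_{t_1}, \ldots, e_{t_k}\}) \ge k$, and the desired bound $k \le r$ follows by monotonicity of $f$.

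The only delicate point is the containment step, which hinges on combining the minimality in the definition of $q_{t_i}$ with the fact that restricting a quotient preserves the quotient property (Lemma \ref{lemma:restrict-quotient}); the value-increment argument afterwards is a clean application of the diminishing-marginals form of submodularity together with integrality of $f$.
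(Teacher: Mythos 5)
Your proof is correct and follows essentially the same route as the paper's: both establish, via Lemma~\ref{lemma:restrict-quotient} and minimality of $q_{t_i}$, that $e_{t_i}$ lies outside $Q_{t_{i'}}$ for $i<i'$, and then use submodularity together with the quotient property to extract a unit increase of $f$ at each step, finally bounding a strictly increasing integer sequence by $r$. The paper tracks $f(\calN_{t_i}\setminus S)$ for a fixed global set $S=\bigcup_i (Q_{t_i}\setminus e_{t_i})$, whereas you track $f(\{e_{t_1},\ldots,e_{t_a}\})$; this is a cosmetic (arguably slightly cleaner) repackaging of the same argument.
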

\begin{proof}
    We note that $q_t$ is well-defined: empty set is closed since $f(e)>0$ for every $e\in \calN$ and hence, $\calN_t$ is a quotient of $f_{|\calN_t}$ containing $e_t$ for every $t\in [n]$. 
    Let $j\in \Z_+$. 
    Let $t_1<t_2<\ldots< t_{\ell}$ be the timesteps $t$ with $q_t=j$. It suffices to show that $\ell\leq r$. For every $i\in [\ell]$, let $Q_{t_i}$ be a minimum sized quotient of $f_{|\mathcal{N}_{t_i}}$ containing element $e_{t_i}$. We note that $|Q_{t_i}|=q_{t_i}=j$ for every $i\in [\ell]$. We define $S:=\bigcup_{i=1}^{\ell}\left(Q_{t_i}\setminus e_{t_i}\right)$.

    We first show that $e_{t_i}\not\in S$ for every $i\in [\ell]$. Suppose there exists an element $e_{t_{i_1}}\in S$. Then, there exists an index $i_2\in [\ell]$ such that $e_{t_{i_1}}\in Q_{t_{i_2}}\setminus e_{t_{i_2}}$. Hence, $e_{t_{i_1}}\in Q_{t_{i_2}}\setminus e_{t_{i_2}} \subseteq \mathcal{N}_{t_{i_2}}$, which implies that $i_1\leq i_2$. Since $e_{t_{i_1}}\not\in Q_{t_{i_1}}\setminus e_{t_{i_1}}$, we have $i_1 \neq i_2$, which shows that $i_1<i_2$. By Lemma~\ref{lemma:restrict-quotient}, $Q_{t_{i_2}}\cap \mathcal{N}_{t_{i_1}}$ is a quotient of $f_{|\mathcal{N}_{t_{i_1}}}$. We note that $|Q_{t_{i_2}}\cap \mathcal{N}_{t_{i_1}}|<|Q_{t_{i_2}}|=j$ since $e_{t_{i_2}}\in Q_{t_{i_2}}$ and $e_{t_{i_2}}\not\in Q_{t_{i_2}}\cap \mathcal{N}_{t_{i_1}}$. Consequently, $Q_{t_{i_2}}\cap \mathcal{N}_{t_{i_1}}$ contradicts the fact that the smallest quotient of $f_{|\mathcal{N}_{t_{i_1}}}$ containing element $e_{t_{i_1}}$ has size exactly $j$. Hence, $e_{t_i}\not\in S$ for every $i\in [\ell]$.

    We now show that $f(\mathcal{N}_{t_i}\setminus S)<f(\mathcal{N}_{t_{i+1}}\setminus S)$ for every $i\in [\ell-1]$. Let $i\in [\ell-1]$. Since $Q_{t_{i+1}}$ is a quotient of $\mathcal{N}_{t_{i+1}}$ containing element $e_{t_{i+1}}$, we have $f_{\mathcal{N}_{t_{i+1}}\setminus Q_{t_{i+1}}}(e_{t_{i+1}})>0$. Since function $f$ is submodular, we have
    $$f(\mathcal{N}_{t_{i+1}}\setminus S)-f(\mathcal{N}_{t_{i+1}-1}\setminus S) = f_{\mathcal{N}_{t_{i+1}-1}\setminus S}(e_{t_{i+1}}) \geq f_{\mathcal{N}_{t_{i+1}}\setminus Q_{t_{i+1}}}(e_{t_{i+1}})>0.$$
    Hence,
    $$f(\mathcal{N}_{t_i}\setminus S)\leq f(\mathcal{N}_{t_{i+1}-1}\setminus S)<f(\mathcal{N}_{t_{i+1}}\setminus S),$$
    where the first inequality is by monotonicity of the function $f$. Thus, we have $f(\mathcal{N}_{t_i}\setminus S)<f(\mathcal{N}_{t_{i+1}}\setminus S)$ for every $i\in [\ell-1]$.

    This implies that $\{f(\mathcal{N}_{t_i}\setminus S)\}_{i=1}^{\ell}$ is a strictly increasing integer sequence. We note that $f(\mathcal{N}_{t_\ell}\setminus S)\leq r$. Also, since $f$ is a monotone function, we have $f(\mathcal{N}_{t_1}\setminus S)\geq f(e_{t_1})>0$. Hence, $\ell\leq r$.
    
\end{proof}

\begin{remark}
    For better understanding, we interpret Lemma~\ref{lemma:counting-quotient} for some concrete polymatroids defined on graphs. Let $G=(V,E)$ be a connected graph with $n=|V|$ vertices and $m=|E|$ edges. Let $e_1, e_2, \ldots, e_m$ be an arbitrary ordering of the edges in $E$. For $t\in [m]$, we let $E_t:=\{e_1, e_2, \ldots, e_t\}$.
    \begin{enumerate}
        \item Lemma~\ref{lemma:counting-quotient} implies that there are at most $n-1$ distinct timesteps $t\in [m]$ with minimum $u_t-v_t$ cut value in the graph $(V, E_t)$ being exactly $j$ for every $j\in Z_+$, where $u_t, v_t$ are the end vertices of $e_t$. This follows by applying the lemma to the rank function $f: 2^E\rightarrow \Z_{\ge 0}$ of the graphic matroid defined by $G$. We recall that quotients of $f$ correspond to union of cut-sets of disjoint subsets of vertices, i.e., each quotient $Q\subseteq E$ is of the form $Q=\cup_{S\in \mathcal{C}}\delta(S)$, where $\mathcal{C}\subseteq 2^V\setminus \{\emptyset, V\}$ is a disjoint family. Hence, $q_t$ is the minimum $u_t-v_t$ cut value in the graph $(V, E_t)$.
        
        \item Lemma~\ref{lemma:counting-quotient} implies that there are at most $m-n+1$ distinct timesteps $t\in [m]$ at which the shortest cycle containing $e_t$ in the graph $(V, E_t)$ has length exactly $j$ for every $j\in Z_+$. This follows by applying the lemma to the rank function $f: 2^E\rightarrow \Z_{\ge 0}$ of the cographic matroid defined by $G$. We recall that quotients of $f$ correspond to union of cycles, i.e., each quotient $Q\subseteq E$ is of the form $Q=\cup_{F\subseteq \mathcal{F}}F$, where $\mathcal{F}$ is a collection of cycles in $G$. Hence, $q_t$ is the length of the shortest cycle containing $e_t$ in the graph $(V, E_t)$. 
    \end{enumerate}
\end{remark}

\subsection{Structural Property}\label{subsection:structural-property}
For each $t \in [n]$, we define the element $e_t$ to be \emph{good} if $\frac{k^*}{2r}< q_t < 2rk^*$ and \emph{bad} otherwise. Let $\mathcal{N}_{\good}$ be the set of good elements. 
The following is the main result of this section. It shows that bases of $f_{|\calN_{\good}}$ are also bases of $f$ and moreover, the quantity $k^*(f_{|\calN_{\good}})$ is at least a constant fraction of $k^*(f)$. 
\begin{restatable}{lemma}{lemmaGoodOpt}\label{lemma:new-opt}
We have that 
\begin{enumerate}
    \item $f(\calN_{\good})=f(\calN)$ and 
    \item $k^*(f_{|\calN_{\good}})\ge \frac{1}{2} k^*(f)$. 
\end{enumerate}
\end{restatable}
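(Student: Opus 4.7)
My plan is to reduce both conclusions to a single marginal inequality:
\begin{equation}\label{eq:plan-marginal}
\sum_{e \in \calN_{\good}} f_A(e) \;\ge\; \frac{k^*}{2}\bigl(f(\calN) - f(A)\bigr) \quad \text{for every } A \subseteq \calN \text{ with } f(A) < f(\calN).
\end{equation}
Given \eqref{eq:plan-marginal}, conclusion 1 follows by applying it with $A := \calN_{\good}$: the left-hand side vanishes, so if $f(\calN_{\good}) < f(\calN)$ then the right-hand side would be strictly positive (using $k^* \ge 120\log^2 r \ge 1$), a contradiction. Once $f(\calN_{\good}) = f(\calN)$ is in hand, the same inequality restricted to $A \subseteq \calN_{\good}$ with $f(A) < f(\calN_{\good})$ yields $k^*(f_{|\calN_{\good}}) \ge k^*/2$ directly from the definition of $k^*(\cdot)$.

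To prove \eqref{eq:plan-marginal}, fix $A \subseteq \calN$ with $f(A) < f(\calN)$; WLOG $A$ is closed, by the observation following \eqref{def:opt-k}. Let $T \in [n]$ be the smallest index with $\sum_{s \le T} f_A(e_s) \ge k^*(f(\calN) - f(A))$; such $T$ exists because the full sum over $\calN$ attains this bound by the very definition of $k^*$. The central technical claim, which I would state as a separate lemma, is that $q_t < 2rk^*$ for every $t \in [T]$. Given this claim, any bad element $e_t$ with $t \in [T]$ must satisfy $q_t \le k^*/(2r)$. By Lemma~\ref{lemma:counting-quotient}, the number of timesteps sharing any given value of $q_t$ is at most $r$, so the total number of bad elements with $q_t \le k^*/(2r)$ is at most $r \cdot \lfloor k^*/(2r) \rfloor \le k^*/2$. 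Each such element contributes at most $f(\calN) - f(A)$ to $\sum f_A$ by monotonicity, so $\sum_{e \in \calN_{\bad} \cap \calN_T} f_A(e) \le (k^*/2)(f(\calN) - f(A))$. Subtracting this from $\sum_{e \in \calN_T} f_A(e) \ge k^*(f(\calN)-f(A))$ and extending the good-element sum to all of $\calN_{\good}$ by non-negativity of marginals yields \eqref{eq:plan-marginal}.

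The step I expect to be the main obstacle is the upper bound $q_t < 2rk^*$ for $t \in [T]$. The natural approach is to exhibit an explicit quotient of $f_{|\calN_t}$ containing $e_t$ of this size. When $f_A(e_t) > 0$, Lemma~\ref{lemma:restrict-quotient} applied to the quotient $\calN \setminus \text{span}(A)$ of $f$ produces $(\calN \setminus \text{span}(A)) \cap \calN_t$, which contains $e_t$ and has size at most $\sum_{e \in \calN_t} f_A(e) \le k^*(f(\calN)-f(A)) + r \le 2rk^*$ by the defining inequality for $T$ together with $r \ge 2$ and $k^* \ge 1$. The delicate case is $e_t \in \text{span}(A)$, where $A$ itself fails to separate $e_t$; here I would descend to a carefully chosen closed subset $A' \subsetneq A$ that excludes $e_t$ from its span while controlling the size of the induced quotient by a $\mathrm{poly}(r) \cdot k^*$ expression that can be absorbed into the $2rk^*$ threshold, leveraging submodularity and structural features of polymatroids such as minimal generating subsets of $A$.
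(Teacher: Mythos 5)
Your overall structure is the same as the paper's: you reduce both conclusions to the marginal inequality $\sum_{e\in\calN_{\good}} f_A(e)\ge\tfrac{k^*}{2}(f(\calN)-f(A))$, reduce WLOG to closed $A$, look at the first time $T$ at which the prefix sum crosses $k^*(f(\calN)-f(A))$, bound the number of bad elements in the prefix via Lemma~\ref{lemma:counting-quotient}, and subtract. This matches the paper's Lemma~\ref{lemma:q_t-upper-bound}, Lemma~\ref{lemma:new-ratio-closed}, and Corollary~\ref{corollary:new-ratio} almost verbatim.

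The gap is in the "central technical claim" $q_t<2rk^*$ \emph{for every} $t\in[T]$. You correctly observe that for $e_t\notin A=\text{span}(A)$ the quotient $\calN_t\setminus A=(\calN\setminus\text{span}(A))\cap\calN_t$ certifies $q_t\le|\calN_T\setminus A|<2rk^*$, but you then set aside the case $e_t\in\text{span}(A)$ as "delicate" and sketch a speculative repair (descending to a closed $A'\subsetneq A$). That case is not merely delicate --- it is likely unprovable as stated (for $e_t\in A$ and $t$ large, $q_t$ can be as large as $t$, and $T$ is not bounded by $2rk^*$ in general because $|\calN_T\cap A|$ can be huge) --- and, crucially, it is unnecessary. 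The bad elements you are bounding contribute to $\sum_{e\in\calN_{\bad}\cap\calN_T}f_A(e)$, and any $e_t\in A$ has $f_A(e_t)=0$. So you only need to count bad elements in $\calN_T\setminus A$, which are exactly the ones for which the easy quotient bound applies. This is precisely what the paper does: Lemma~\ref{lemma:q_t-upper-bound} is stated only for $e_t\in\calN_T\setminus A$, and the count in Lemma~\ref{lemma:new-ratio-closed} is $|\calN_T\setminus(A\cup\calN_{\good})|$, not $|\calN_T\setminus\calN_{\good}|$. Restricting your claim and your count to $\calN_T\setminus A$ closes the gap; the proposed machinery of minimal generating subsets of $A$ should be dropped.

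One small additional note: your count gives at most $r\cdot\lfloor k^*/(2r)\rfloor\le k^*/2$ bad elements with small $q_t$ in $[T]$, and you then say each "contributes at most $f(\calN)-f(A)$ to $\sum f_A$ by monotonicity." That is fine, but it is cleaner (and matches the paper) to replace the bound on all bad indices in $[T]$ by the bound on bad indices in $[T]$ with $e_t\notin A$, since those are the only ones with nonzero contribution and the only ones covered by the quotient upper bound you actually proved.
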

We prove Lemma \ref{lemma:new-opt} in a series of steps. 
We note that an element $e_t$ could be bad because of two reasons: either $q_t$ is too large or it is too small. 
We first show that the only reason for certain elements being bad is that their $q_t$ is too small. 

\begin{lemma}\label{lemma:q_t-upper-bound}
    Let $A\subseteq \mathcal{N}$ be a closed set of $f$. Let $T\geq 1$ be the smallest integer such that
    \begin{equation}\label{inequality-for-bad:k^*}
    \sum_{e\in \mathcal{N}_T\setminus A} \left(f(A+e)-f(A)\right) \geq k^*\cdot \left(f(\mathcal{N})-f(A)\right).
    \end{equation}
    Then, for every element $e_t\in \mathcal{N}_T\setminus A$ with $t\in [T]$, we have that $q_t<2rk^*$. 
\end{lemma}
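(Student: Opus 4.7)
The plan is to construct, for each relevant $t$, an explicit quotient $Q_t$ of $f_{|\mathcal{N}_t}$ that contains $e_t$ and has size strictly less than $2rk^*$. Since $q_t$ is by definition the minimum size of such a quotient, this yields the conclusion. The natural candidate is obtained from the given closed set $A$ by restricting to the elements that have arrived by time $t$.

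Concretely, for each $t \in [T]$ with $e_t \in \mathcal{N}_T \setminus A$, define $S_t := \{e \in \mathcal{N}_t : f_{A \cap \mathcal{N}_t}(e) = 0\}$, i.e., the span of $A \cap \mathcal{N}_t$ with respect to $f_{|\mathcal{N}_t}$, and set $Q_t := \mathcal{N}_t \setminus S_t$. By the definition of quotients recalled in the preliminaries, $Q_t$ is a quotient of $f_{|\mathcal{N}_t}$. The first step is to verify that $e_t \in Q_t$, equivalently $f_{A \cap \mathcal{N}_t}(e_t) \geq 1$. Since $A$ is closed in $f$ and $e_t \notin A$, we have $f_A(e_t) \geq 1$; submodularity of $f$ applied to the containment $A \cap \mathcal{N}_t \subseteq A$ then gives $f_{A \cap \mathcal{N}_t}(e_t) \geq f_A(e_t) \geq 1$, as needed.

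The second step is to bound $|Q_t|$. Because $A \cap \mathcal{N}_t \subseteq S_t$, we have $Q_t \subseteq \mathcal{N}_t \setminus A$, hence $|Q_t| \leq |\mathcal{N}_t \setminus A| \leq |\mathcal{N}_{T-1} \setminus A| + 1$. The minimality of $T$ gives
\[
\sum_{e \in \mathcal{N}_{T-1} \setminus A} f_A(e) < k^* \cdot (f(\mathcal{N}) - f(A)) \leq k^* r.
\]
Since $A$ is closed, $f_A(e) \geq 1$ for every $e \in \mathcal{N} \setminus A$, so the left-hand side is at least $|\mathcal{N}_{T-1} \setminus A|$. Combining these bounds, $q_t \leq |Q_t| \leq k^* r + 1 < 2 r k^*$, where the final strict inequality uses the standing assumptions $r \geq 2$ and $k^* \geq 120 \log^2 r$, which together imply $r k^* \geq 2$.

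The main obstacle is the first step: establishing $e_t \in Q_t$ requires transferring the positivity $f_A(e_t) \geq 1$, which is guaranteed only because $A$ is closed in the full polymatroid $f$, to the positivity $f_{A \cap \mathcal{N}_t}(e_t) \geq 1$ in the restricted polymatroid with a truncated reference set. Submodularity bridges this gap in a single inequality, but it is the conceptual crux that justifies why the closed set $A$ in the hypothesis suffices to produce a small quotient around $e_t$ within the online-visible subpolymatroid $f_{|\mathcal{N}_t}$.
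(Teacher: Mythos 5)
Your proof is correct and follows essentially the same approach as the paper: both exhibit a quotient of $f_{|\mathcal{N}_t}$ contained in $\mathcal{N}_t\setminus A$ and bound its size using the minimality of $T$ together with the closedness of $A$. The only cosmetic difference is that the paper invokes Lemma~\ref{lemma:restrict-quotient} (restriction preserves quotients) to conclude directly that $\mathcal{N}_t\setminus A$ is a quotient of $f_{|\mathcal{N}_t}$, whereas you build the (possibly smaller) quotient $\mathcal{N}_t\setminus\mathrm{span}_{f_{|\mathcal{N}_t}}(A\cap\mathcal{N}_t)$ from the definition and use submodularity to verify $e_t$ lies in it; both routes reduce to the same size bound $|\mathcal{N}_t\setminus A|<2rk^*$.
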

\begin{proof}
    We recall that $k^*=k^*(f)$. By definition of $k^*$, we have
    $$
    \sum_{e\in \mathcal{N}\setminus A} \left(f(A+e)-f(A)\right) \geq k^*\cdot \left(f(\mathcal{N})-f(A)\right).
    $$
    Since $T$ is the smallest integer satisfying inequality~(\ref{inequality-for-bad:k^*}), we have
    \begin{equation}\label{inequality:new-T}
        \sum_{e\in \mathcal{N}_{T-1}\setminus A} \left(f(A+e)-f(A)\right) < k^*\cdot \left(f(\mathcal{N})-f(A)\right).
    \end{equation}
    Hence,
    \begin{align}\label{inequality:contribution-upper-bound}
        \sum_{e\in \mathcal{N}_T\setminus A} \left(f(A+e)-f(A)\right) &= \left(\sum_{e\in \mathcal{N}_{T-1}\setminus A} \left(f(A+e)-f(A)\right)\right)+\left(f(A+e_{T})-f(A)\right) \notag \\
        &< k^*\cdot \left(f(\mathcal{N})-f(A)\right)+\left(f(A+e_{T})-f(A)\right) \ \ \text{(by inequality~(\ref{inequality:new-T}))} \notag \\
        &\leq k^*\cdot \left(f(\mathcal{N})-f(A)\right)+\left(f(\mathcal{N})-f(A)\right) \ \ \text{(since $f(A+e_T)\leq f(\mathcal{N})$)} \notag \\
        &= (k^*+1)\cdot \left(f(\mathcal{N})-f(A)\right).
    \end{align}
    Since $A$ is a closed set, we have $f(A+e)-f(A)\geq 1$ for every element $e\in \mathcal{N}_T\setminus A$. Thus, using inequality~(\ref{inequality:contribution-upper-bound}), we have
    \begin{equation}\label{inequality:size-upper-bound}
        |\mathcal{N}_T\setminus A| < \sum_{e\in \mathcal{N}_T\setminus A} \left(f(A+e)-f(A)\right) \leq (k^*+1)\cdot (f(\mathcal{N})-f(A)) \leq (k^*+1)r <2rk^*.
    \end{equation}
    
    Since $A$ is a closed set, $\mathcal{N}\setminus A$ is a quotient of $f$. By Lemma~\ref{lemma:restrict-quotient}, for every $t\in [n]$, we know that $\mathcal{N}_t\setminus A$ is a quotient of $f_{|\mathcal{N}_t}$. Hence, for every element $e_t\in \mathcal{N}_T\setminus A$ with $t\in [T]$, we have that $q_t\leq |\mathcal{N}_t\setminus A|\leq |\mathcal{N}_T\setminus A|<2rk^*$, where the last inequality is by inequality~(\ref{inequality:size-upper-bound}).
\end{proof}

Next, we show that dropping all bad elements will not decrease the value of $k^*$ by more than a constant factor for all sets. We first show this for closed sets (Lemma \ref{lemma:new-ratio-closed}) and derive it for arbitrary sets as a corollary (Corollary \ref{corollary:new-ratio}). 
\begin{lemma}\label{lemma:new-ratio-closed}
    For every closed set $A\subseteq \mathcal{N}$,
    $$\sum_{e\in \mathcal{N}_{\good}\setminus A} \left(f(A+e)-f(A)\right) \geq \frac{k^*}{2}\cdot \left(f(\mathcal{N})-f(A)\right).$$
\end{lemma}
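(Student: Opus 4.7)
If $f(A) = f(\mathcal{N})$, the right-hand side is $0$ and the inequality holds trivially, so I may assume $f(A) < f(\mathcal{N})$. In this case, because $A$ is closed, $f_A(e) \geq 1$ for every $e \in \mathcal{N}\setminus A$ and the definition of $k^*$ gives $\sum_{e\in \mathcal{N}\setminus A}(f(A+e)-f(A)) \geq k^*(f(\mathcal{N})-f(A))$, so the smallest integer $T$ with
\[
\sum_{e\in \mathcal{N}_T\setminus A}(f(A+e)-f(A)) \geq k^*(f(\mathcal{N})-f(A))
\]
exists. I will bound the contribution of \emph{bad} elements inside $\mathcal{N}_T\setminus A$ and subtract it from the total.

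By Lemma~\ref{lemma:q_t-upper-bound}, every element $e_t \in \mathcal{N}_T\setminus A$ satisfies $q_t < 2rk^*$, so a bad element of $\mathcal{N}_T\setminus A$ must satisfy $q_t \leq \frac{k^*}{2r}$. Lemma~\ref{lemma:counting-quotient} bounds the number of timesteps achieving each value of $q_t$ by $r$, so
\[
\bigl|\{t\in [T] : e_t \in \mathcal{N}_T\setminus A \text{ is bad}\}\bigr| \;\leq\; \sum_{j=1}^{\lfloor k^*/(2r)\rfloor} \bigl|\{t \in [n]: q_t=j\}\bigr| \;\leq\; r \cdot \frac{k^*}{2r} \;=\; \frac{k^*}{2}.
\]
The next step is to pair this count with a uniform upper bound on each marginal: for every $e\in \mathcal{N}$, monotonicity of $f$ yields $f(A+e) - f(A) \leq f(\mathcal{N}) - f(A)$. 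Consequently, the total contribution of bad elements inside $\mathcal{N}_T\setminus A$ is at most $\tfrac{k^*}{2}(f(\mathcal{N})-f(A))$.

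Combining this with the defining inequality of $T$, the good elements inside $\mathcal{N}_T\setminus A$ contribute at least
\[
k^*(f(\mathcal{N})-f(A)) - \frac{k^*}{2}(f(\mathcal{N})-f(A)) = \frac{k^*}{2}(f(\mathcal{N})-f(A)).
\]
Since $\mathcal{N}_{\good}\cap (\mathcal{N}_T \setminus A) \subseteq \mathcal{N}_{\good}\setminus A$ and marginals are nonnegative, the same lower bound propagates to $\sum_{e\in \mathcal{N}_{\good}\setminus A}(f(A+e)-f(A))$, finishing the proof.

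The main technical point is the interplay between Lemma~\ref{lemma:q_t-upper-bound} and Lemma~\ref{lemma:counting-quotient}: the former ensures that the ``too large'' type of badness cannot occur among the elements whose marginals sum up to reach $k^*(f(\mathcal{N})-f(A))$, and the latter converts the threshold $k^*/(2r)$ for the ``too small'' type of badness into a tight count of at most $k^*/2$ bad elements. Everything else is a clean subtraction.
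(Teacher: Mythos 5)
Your proof is correct and follows essentially the same route as the paper's: define $T$ as the first prefix whose marginal sum reaches $k^*(f(\mathcal{N})-f(A))$, invoke Lemma~\ref{lemma:q_t-upper-bound} to rule out the ``too large'' bad cases in $\mathcal{N}_T\setminus A$, count the ``too small'' bad elements via Lemma~\ref{lemma:counting-quotient}, bound each such marginal by $f(\mathcal{N})-f(A)$ using monotonicity, and subtract. You also make the trivial case $f(A)=f(\mathcal{N})$ explicit, which the paper leaves implicit; otherwise the argument matches.
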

\begin{proof}
Let $A\subseteq \mathcal{N}$ be a closed set. 
Let $T\geq 1$ be the smallest integer such that inequality \eqref{inequality-for-bad:k^*} holds. 
    Lemma \ref{lemma:q_t-upper-bound} implies that for every $t\in [T]$, element $e_t\in \mathcal{N}_T\setminus A$ is bad if and only if $q_t\leq \frac{k^*}{2r}$. Hence, we have
    \begin{align}\label{inequality:good-elements}
        |\mathcal{N}_T \setminus (A\cup \mathcal{N}_{\good})| &= |\{t\in [T]: e_t\in \mathcal{N}_T\setminus (A\cup \mathcal{N}_{\good})\}| = \left|\left\{t\in [T] : e_t\in\mathcal{N}_T\setminus A \ \& \ q_t\leq \frac{k^*}{2r} \right\}\right| \notag \\
        &\leq \left|\left\{t\in [T] : q_t\leq \frac{k^*}{2r} \right\}\right| = \sum_{j=1}^{\lfloor \frac{k^*}{2r}\rfloor} \left|\{t\in [T] : q_t=j\}\right| \notag \\
        &\leq \left(\frac{k^*}{2r}\right) r \ \ \text{(by Lemma~\ref{lemma:counting-quotient})} \notag \\
        &= \frac{k^*}{2}.
    \end{align}
    
    Thus,
    $$\begin{aligned}
        &\sum_{e\in \mathcal{N}_{\good}\setminus A} \left(f(A+e)-f(A)\right) \\
        &\quad \quad \quad \geq \sum_{e\in (\mathcal{N}_T \cap\mathcal{N}_{\good})\setminus A} \left(f(A+e)-f(A)\right) \ \ \text{(since $f(A+e)\geq f(A)$ for every $e\in \mathcal{N}$)}\\
        &\quad \quad \quad = \sum_{e\in \mathcal{N}_T\setminus A} \left(f(A+e)-f(A)\right) - \sum_{e\in \mathcal{N}_T \setminus (A\cup \mathcal{N}_{\good})} \left(f(A+e)-f(A)\right)\\
        &\quad \quad \quad \geq k^*\cdot \left(f(\mathcal{N})-f(A)\right) - \sum_{e\in \mathcal{N}_T \setminus (A\cup \mathcal{N}_{\good})} \left(f(A+e)-f(A)\right) \ \ \text{(by inequality~(\ref{inequality-for-bad:k^*}))}\\
        &\quad \quad \quad \geq k^*\cdot \left(f(\mathcal{N})-f(A)\right) - \sum_{e\in \mathcal{N}_T \setminus (A\cup \mathcal{N}_{\good})} \left(f(\mathcal{N})-f(A)\right) \ \ \text{(by monotonicity of $f$)}\\
        &\quad \quad \quad \geq k^*\cdot \left(f(\mathcal{N})-f(A)\right) - \frac{k^*}{2}\cdot \left(f(\mathcal{N})-f(A)\right) \ \ \text{(by inequality~(\ref{inequality:good-elements}))}\\
        &\quad \quad \quad = \frac{k^*}{2}\cdot \left(f(\mathcal{N})-f(A)\right).
    \end{aligned}$$
\end{proof}

\begin{corollary}\label{corollary:new-ratio}
    For every set $A\subseteq \mathcal{N}$,
    $$\sum_{e\in \mathcal{N}_{\good}\setminus A} \left(f(A+e)-f(A)\right) \geq \frac{k^*}{2}\cdot \left(f(\mathcal{N})-f(A)\right).$$
\end{corollary}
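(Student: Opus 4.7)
The plan is to reduce the arbitrary-set case to the closed-set case already handled in Lemma \ref{lemma:new-ratio-closed} by passing to the span of $A$. Let $A \subseteq \mathcal{N}$ be arbitrary, and set $A' := \text{span}(A)$. I will first record two elementary facts about $A'$. First, $A'$ is closed: standard properties of the span of a submodular function give $\text{span}(\text{span}(A)) = \text{span}(A)$. Second, $f(A') = f(A)$: this is because every $e \in \text{span}(A) \setminus A$ satisfies $f_A(e) = 0$, and by submodularity adding any subset of $\text{span}(A) \setminus A$ to $A$ still contributes zero marginal, so $f$ is constant on the chain from $A$ to $A'$.

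Next I will apply Lemma \ref{lemma:new-ratio-closed} to the closed set $A'$ to obtain
\begin{equation}
\sum_{e \in \mathcal{N}_{\good}\setminus A'} \bigl(f(A' + e) - f(A')\bigr) \;\geq\; \frac{k^*}{2}\bigl(f(\mathcal{N}) - f(A')\bigr). \label{eq:apply-closed}
\end{equation}
I then want to compare the left-hand side of this inequality with the quantity in the corollary's statement. Since $A \subseteq A'$, we have the containment $\mathcal{N}_{\good}\setminus A \supseteq \mathcal{N}_{\good}\setminus A'$, and by submodularity of $f$ applied with the nested pair $A \subseteq A'$ we obtain $f(A + e) - f(A) \geq f(A' + e) - f(A')$ for every $e \in \mathcal{N}$. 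Combining these two observations term by term,
\begin{equation*}
\sum_{e \in \mathcal{N}_{\good}\setminus A} \bigl(f(A + e) - f(A)\bigr) \;\geq\; \sum_{e \in \mathcal{N}_{\good}\setminus A'} \bigl(f(A + e) - f(A)\bigr) \;\geq\; \sum_{e \in \mathcal{N}_{\good}\setminus A'} \bigl(f(A' + e) - f(A')\bigr),
\end{equation*}
where the first inequality uses nonnegativity of marginals and the second uses submodularity. Chaining this with \eqref{eq:apply-closed} and using $f(A') = f(A)$ yields the desired bound. The case $f(A) = f(\mathcal{N})$ is trivial since the right-hand side is zero and marginals are nonnegative.

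I do not anticipate a genuine obstacle here; the corollary is essentially a ``reduction to closed sets'' and the only thing to verify carefully is that passing from $A$ to its span (i) does not change $f(A)$, so the right-hand side is preserved, and (ii) only weakens the left-hand side, so the inequality for $A'$ implies the inequality for $A$. Both of these are direct consequences of monotonicity and submodularity.
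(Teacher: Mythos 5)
Your proof is correct and follows essentially the same route as the paper: pass from $A$ to its span, invoke Lemma~\ref{lemma:new-ratio-closed} for the closed set $\text{span}(A)$, use $\mathcal{N}_{\good}\setminus A \supseteq \mathcal{N}_{\good}\setminus \text{span}(A)$ together with the submodularity inequality $f_A(e)\ge f_{\text{span}(A)}(e)$ on the left, and $f(A)=f(\text{span}(A))$ on the right. The only cosmetic difference is that you spell out the intermediate facts and note the trivial case $f(A)=f(\mathcal{N})$, which the paper leaves implicit.
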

\begin{proof}
    We have
    $$\begin{aligned}
        \sum_{e\in \mathcal{N}_{\good}\setminus A} (f(A+e)-f(A)) &\geq \sum_{e\in \mathcal{N}_{\good}\setminus \text{span}(A)} (f(A+e)-f(A)) \ \ \text{(since $A\subseteq \text{span}(A)$)}\\
        &\geq \sum_{e\in \mathcal{N}_{\good}\setminus \text{span}(A)} (f(\text{span}(A)+e)-f(\text{span}(A)) \\ 
        & \ \ \ \ \ \ \ \ \ \ \ \ \ \ \ \ \ \ \text{(since $f_A(e)\geq f_{\text{span}(A)}(e)$ by submodularity of $f$)} \\
        &\geq \frac{k^*}{2} \cdot (f(\mathcal{N})-f(\text{span}(A))) \ \ \text{(by applying Lemma~\ref{lemma:new-ratio-closed} for $\text{span}(A)$)}\\
        &= \frac{k^*}{2}\cdot (f(\mathcal{N})-f(A)). \ \ \text{(since $f(A)=f(\text{span}(A))$)}
    \end{aligned}$$
\end{proof}

We now prove Lemma~\ref{lemma:new-opt}.
\begin{proof}
We recall that $k^*=k^*(f)>0$ by assumption. 
    By applying Corollary~\ref{corollary:new-ratio} for $A=\mathcal{N}_{\good}$, we conclude that $f(\mathcal{N})=f(\mathcal{N}_{\good})$. Hence,
    $$k^*(f_{|\calN{\good}})
    =\min_{A\subseteq \mathcal{N}_{\good}: f(A)<f(\mathcal{N}_{\good})} \frac{\sum_{e\in \mathcal{N}_{\good}}f_A(e)}{f(\mathcal{N}_{\good})-f(A)}
    =\min_{A\subseteq \mathcal{N}_{\good}: f(A)<f(\mathcal{N}_{\good})} \frac{\sum_{e\in \mathcal{N}_{\good}\setminus A}f_A(e)}{f(\mathcal{N})-f(A)},$$
    where the last equality holds since $f_A(e)=0$ for every $e\in A$ and $f(\calN_{\good})=f(\calN)$. By Corollary~\ref{corollary:new-ratio}, for every set $A\subseteq \mathcal{N}$ with $f(A)<f(\calN_{\good})$, we have
    $$\frac{\sum_{e\in \mathcal{N}_{\good}\setminus A}f_A(e)}{f(\mathcal{N})-f(A)}\geq \frac{k^*}{2}.$$
    Hence, $k^*(f_{|\calN{\good}})\geq \frac{1}{2}k^*$.
\end{proof}

\subsection{Competitive Ratio Analysis}\label{subsection:ratio-analysis}
Now, we analyze the competitive ratio of Algorithm~\ref{algo:pseudocode-polymatroid}. 
Let $h:=\lfloor \log k^* \rfloor$. We note that $\frac{k^*}{2}< 2^h \leq k^*$. We now prove that every color $c\in [\lfloor 2^h/ (60\cdot \log^2 r)\rfloor]$ is a base color with constant probability in Algorithm~\ref{algo:pseudocode-polymatroid}.

\begin{lemma}\label{lemma:polymatroid-proper-color-probability}
    Let $c \in [\lfloor 2^h/ (60\cdot \log^2 r)\rfloor]$. Then, $\mathbf{Pr}_{\texttt{Alg$_1$} }[c \text{ is a base color}]\geq \frac{1}{2}$.
\end{lemma}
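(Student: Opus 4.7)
The plan is to apply Lemma~\ref{lemma:sampling-property} to the restricted polymatroid $f_{|\calN_{\good}}$. By Lemma~\ref{lemma:new-opt}, we have $f(\calN_{\good}) = f(\calN) = r \geq 2$ and $k^*(f_{|\calN_{\good}}) \geq k^*/2$, so if the set of good elements receiving color $c$ is sampled ``densely enough'' (each good element independently with probability at least $2\log r / k^*(f_{|\calN_{\good}})$, equivalently, at least $4 \log r / k^*$), then that set is a base of $f_{|\calN_{\good}}$ with probability $\geq 1/2$. Since $f(\calN_{\good}) = f(\calN)$ and the relaxed definition of ``base'' only requires $f(S) = f(\calN)$, monotonicity implies that the full set of elements colored $c$ is also a base of $f$, so $c$ is a base color with probability at least $1/2$.

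First, I would show that for every good element $e_t$, the integer $h = \lfloor \log k^* \rfloor$ lies in the sampling window $[\ell_t - 3\lceil \log r\rceil, \ell_t + 3\lceil \log r\rceil]$. Since $e_t$ is good, $k^*/(2r) < q_t < 2rk^*$, so $\log k^* - (\log r + 1) < \log q_t < \log k^* + (\log r + 1)$. From $\ell_t = \lceil \log q_t\rceil$ and $h \leq \log k^* < h+1$, a direct estimation (using $r\geq 2$ so that $\log r + 2 \leq 3\lceil \log r\rceil$) yields $|h - \ell_t| \leq 3 \lceil \log r \rceil$. Consequently, $\Pr[R_t = h] \geq 1/(6\lceil \log r\rceil + 1)$.

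Next, conditioned on $R_t = h$, the color $C(e_t)$ is uniform on $[\lfloor 2^h/(60 \log^2 r)\rfloor]$, and this set is nonempty under the standing assumption $k^* \geq 120 \log^2 r$ (which forces $2^h > k^*/2 \geq 60 \log^2 r$) and contains the fixed color $c$. Hence, for every good $e_t$,
\[
\Pr[C(e_t) = c] \;\geq\; \frac{1}{6\lceil \log r\rceil + 1} \cdot \frac{1}{\lfloor 2^h/(60\log^2 r)\rfloor} \;\geq\; \frac{60 \log^2 r}{(6\lceil \log r\rceil + 1)\, k^*} \;\geq\; \frac{4 \log r}{k^*},
\]
where the last inequality uses $\lceil \log r\rceil \leq 2 \log r$ for $r \geq 2$ to bound $6\lceil \log r\rceil + 1 \leq 13 \log r$ and absorb constants. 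Since $k^*(f_{|\calN_{\good}}) \geq k^*/2$, this gives $\Pr[C(e_t) = c] \geq 2 \log r / k^*(f_{|\calN_{\good}})$ for every $e_t \in \calN_{\good}$, and the colorings of distinct elements are independent by construction.

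Applying Lemma~\ref{lemma:sampling-property} to the polymatroid $f_{|\calN_{\good}}$ (whose value on its ground set is $r \geq 2$) with sampling probability $p = 2\log r / k^*(f_{|\calN_{\good}})$, the set $S_c^{\good} := \{e \in \calN_{\good} : C(e) = c\}$ is a base of $f_{|\calN_{\good}}$ with probability at least $1/2$. Finally, $f(\calN_{\good}) = f(\calN)$ together with monotonicity implies that every superset of $S_c^{\good}$ inside $\calN$ also satisfies $f(\cdot) = f(\calN)$; in particular the full color class $\{e \in \calN : C(e) = c\} \supseteq S_c^{\good}$ is a base of $f$ with probability at least $1/2$, which is what the lemma asserts. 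The only step that requires care is the window containment in the first paragraph, since loose rounding there can blow up constants; I would keep track of ceilings and the $r \geq 2$ assumption carefully, but no deeper obstacle seems to arise.
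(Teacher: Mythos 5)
Your proposal is correct and follows essentially the same route as the paper's proof: you use Lemma~\ref{lemma:new-opt} to pass to $f_{|\calN_{\good}}$, verify that $h$ falls in every good element's sampling window so that $\mathbf{Pr}[R_t=h]\geq 1/(6\lceil\log r\rceil+1)$, bound $\mathbf{Pr}[C(e_t)=c]$ from below by $2\log r/k^*(f_{|\calN_{\good}})$ (the paper routes this through $2^h\leq 2k^*_{\good}$ instead of $2^h\leq k^*$, but the bookkeeping lands in the same place), and then invoke Lemma~\ref{lemma:sampling-property} on $f_{|\calN_{\good}}$. The only cosmetic difference is that you explicitly note that the full color class is a superset of $S_c^{\good}$ and appeal to monotonicity, whereas the paper says more tersely that bases of $f_{|\calN_{\good}}$ are bases of $f$; both are fine.
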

\begin{proof}
    Let $k^*_{\good}:=k^*(f_{|\calN_{\good}})$. We recall that $k^*=k^*(f)$. 
    By Lemma \ref{lemma:new-opt}, we have that $f(\calN_{\good})=f(\calN)$ and hence, bases of $f_{|\calN_{\good}}$ correspond to bases of $f$. By Lemma \ref{lemma:new-opt}, we also have that  
    $k^*_{\good}\geq \frac{k^*}{2}\geq 2^{h-1}$. Let $e_t\in \mathcal{N}_{\good}$. We have $\frac{k^*}{2r}< q_t < 2rk^*$ and hence, $\lceil \log q_t \rceil -3\lceil \log r\rceil \leq h\leq \lceil \log q_t \rceil+3\lceil \log r \rceil$. Therefore,
    \begin{equation}\label{inequality:probability-analysis}
        \mathbf{Pr}[R_t=h]=\frac{1}{6\lceil \log r \rceil+1}>\frac{1}{15\log r}.
    \end{equation}
    Hence, for every element $e_t\in \mathcal{N}_{\good}$, the probability that $e_t$ is colored with $c$ is
    $$\begin{aligned}
        \mathbf{Pr}[C(e_t)=c] &\geq \mathbf{Pr}[R_t=h]\cdot \mathbf{Pr}[C(e_t)=c|r_t=h] \\
        &= \mathbf{Pr}[R_t=h] \cdot \frac{1}{\lfloor 2^h/ (60\cdot \log^2 r)\rfloor} \\
        &> \frac{1}{15\log r}\cdot \frac{60\cdot \log^2 r}{2^h} \ \ \text{(by inequality~(\ref{inequality:probability-analysis}))}\\
        &\geq \frac{1}{15\log r}\cdot \frac{60\cdot \log^2 r}{2k_{\good}^*}= \frac{2\log r}{k^*_{\good}}.
    \end{aligned}$$
    We recall that $k^*_{\good}=k^*(f_{|\calN_{\good}})$. By applying Lemma~\ref{lemma:sampling-property} for the polymatroid $f_{|\mathcal{N}_{\good}}$, we conclude that the elements in $\mathcal{N}_{\good}$ with color $c$ form a base of $f_{|\mathcal{N}_{\good}}$ with probability at least $\frac{1}{2}$. We recall that bases of $f_{|\calN_{\good}}$ are also bases of $f$. 
    Hence, $c$ is a base color with probability at least $\frac{1}{2}$.
\end{proof}

Lemma \ref{lemma:polymatroid-proper-color-probability} implies a lower bound on the expected number of base colors obtained by Algorithm~\ref{algo:pseudocode-polymatroid}.

\begin{corollary}\label{corollary:number-proper-colors}
    The expected number of base colors obtained by Algorithm~\ref{algo:pseudocode-polymatroid} is at least $$\mathbb{E}[\texttt{Alg$_1$}(f)]\geq \frac{1}{2}\cdot \lfloor 2^h/ (60\cdot \log^2 r)\rfloor.$$
\end{corollary}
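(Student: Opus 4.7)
The plan is to use linearity of expectation over a carefully chosen subset of colors. Specifically, consider the color palette $[\lfloor 2^h/(60\cdot \log^2 r)\rfloor]$, where $h=\lfloor \log k^* \rfloor$. Let $X_c$ be the indicator random variable for the event that $c$ is a base color of $f$ under the coloring produced by Algorithm~\ref{algo:pseudocode-polymatroid}. Then the total number of base colors returned by the algorithm is at least $\sum_{c=1}^{\lfloor 2^h/(60\cdot \log^2 r)\rfloor} X_c$ (there might be additional base colors from outside this range, but we do not need them).

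By linearity of expectation,
\[
\mathbb{E}[\texttt{Alg$_1$}(f)] \;\ge\; \sum_{c=1}^{\lfloor 2^h/(60\cdot \log^2 r)\rfloor} \mathbf{Pr}[c \text{ is a base color}].
\]
By Lemma~\ref{lemma:polymatroid-proper-color-probability}, each term in the sum is at least $\tfrac{1}{2}$, so the sum is at least $\tfrac{1}{2}\cdot \lfloor 2^h/(60\cdot \log^2 r)\rfloor$, yielding the claimed bound.

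The entire argument is a one-line consequence of Lemma~\ref{lemma:polymatroid-proper-color-probability} combined with linearity of expectation, so there is no real obstacle here; all the substantive work was done in establishing Lemma~\ref{lemma:polymatroid-proper-color-probability} (which in turn relied on the structural results in Lemma~\ref{lemma:new-opt} and the sampling property Lemma~\ref{lemma:sampling-property}). The only subtle point worth noting is that $X_c$'s are not independent across colors, but independence is not required for linearity of expectation, so this causes no issue.
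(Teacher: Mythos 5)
Your proof is correct and takes essentially the same approach as the paper: apply linearity of expectation over the colors in the palette $[\lfloor 2^h/(60\cdot\log^2 r)\rfloor]$ and invoke Lemma~\ref{lemma:polymatroid-proper-color-probability} for each term. (In fact, your use of $\geq$ rather than $=$ in the first step is a small but careful refinement, since the algorithm may also produce base colors outside this palette.)
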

\begin{proof}
By Lemma \ref{lemma:polymatroid-proper-color-probability}, we have that 
    $$\begin{aligned}
        \mathbb{E}[\texttt{Alg$_1$}(f)] &= \sum_{c\in [\lfloor 2^h/ (60\cdot \log^2 r)\rfloor]} \mathbf{Pr}_{\texttt{Alg$_1$} }[c \text{ is a base color}] 
        &\ge \frac{1}{2}\cdot \lfloor 2^h/ (60\cdot \log^2 r)\rfloor.
    \end{aligned}$$
\end{proof}


\subsection{Combined Algorithm}\label{subsection:mixed-algorithm}
We recall that our analysis of Algorithm~\ref{algo:pseudocode-polymatroid} assumed that $r\geq 2$ and $k^*\geq 120\log^2 r$. We now combine Algorithm~\ref{algo:pseudocode-polymatroid} with other algorithms to address all ranges of $r$ and $k^*$.

Consider the online algorithm \texttt{Alg$_1^*$} that runs Algorithm~\ref{algo:pseudocode-polymatroid} with probability $\frac{1}{3}$, assigns $C(e_t)=1$ for every element $e_t$ with probability $\frac{1}{3}$, and assigns $C(e_t)=t$ for every element $e_t$ with probability $\frac{1}{3}$. We now show that the resulting online algorithm \texttt{Alg$_1^*$} has competitive ratio $O(\log^2 r)$.

\begin{theorem}\label{theorem:polymatroid-combined-ratio}
    Algorithm \texttt{Alg$_1^*$} has competitive ratio $O(\log^2 r)$.
\end{theorem}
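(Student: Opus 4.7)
The plan is to analyze \texttt{Alg$_1^*$} by a three-way case split that matches the three components of the mixture, so that in each regime at least one component certifies an $O(\log^2 r)$ ratio on its own. Each component is chosen with probability $1/3$, so a component that returns $M$ base colors (deterministically or in expectation) contributes at least $M/3$ to the expected base color count of \texttt{Alg$_1^*$}.

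First, consider the degenerate case $r = 1$. Since the preliminaries assume $f(e) > 0$ for every $e \in \calN$, every singleton satisfies $f(\{e\}) = 1 = r$ and is therefore a base, so $\opt(f) = n$. The third component of \texttt{Alg$_1^*$} colors each $e_t$ with a distinct color $t$ and thus yields $n$ base colors, giving expected base color count of \texttt{Alg$_1^*$} at least $n/3 = \Omega(\opt(f))$.

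Next, consider the small-$k^*$ case $r \geq 2$ with $k^* < 120\log^2 r$. By Theorem~\ref{thm:CCV-k^*} we have $\opt(f) \leq k^* < 120\log^2 r$. The second component assigns color $1$ to every element, and $\calN$ itself is a base, so this component produces exactly one base color. Hence the expected number of base colors of \texttt{Alg$_1^*$} is at least $1/3$, which is $\Omega(\opt(f)/\log^2 r)$.

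Finally, consider the regime $r \geq 2$ and $k^* \geq 120\log^2 r$, which is exactly the assumption under which Sections~\ref{subsection:algorithm}--\ref{subsection:ratio-analysis} were developed. Corollary~\ref{corollary:number-proper-colors} gives at least $\tfrac{1}{2}\lfloor 2^h/(60\log^2 r)\rfloor$ base colors in expectation from the first component, where $h = \lfloor \log k^* \rfloor$. Since $2^h > k^*/2 \geq 60\log^2 r \geq 1$, the floor loses at most a factor of $2$, so the first component yields $\Omega(k^*/\log^2 r)$ base colors; combined with the $1/3$ selection probability and the bound $\opt(f) \leq k^*$ from Theorem~\ref{thm:CCV-k^*}, this gives expected base color count of \texttt{Alg$_1^*$} of $\Omega(\opt(f)/\log^2 r)$. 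In each of the three cases the expected base color count is $\Omega(\opt(f)/\log^2 r)$, proving the theorem. The only obstacle is careful bookkeeping of constants in the third case; everything else is a direct application of results already established.
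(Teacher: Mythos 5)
Your proof is correct and follows essentially the same three-case split as the paper's, invoking Corollary~\ref{corollary:number-proper-colors} and Theorem~\ref{thm:CCV-k^*} in the same way; the only difference is presentational (asymptotic constants vs.\ explicit ones). The bookkeeping in the third case is in fact fine, since $2^h > k^*/2 \geq 60\log^2 r$ gives $\lfloor 2^h/(60\log^2 r)\rfloor \geq \tfrac{1}{2}\cdot 2^h/(60\log^2 r)$ exactly as you claim.
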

\begin{proof}
    Suppose $r=1$. Then, each singleton forms a base, which implies that $\opt(f)=n$. We recall that the algorithm assigns $C(e_t)=t$ for every element $e_t$ with probability $\frac{1}{3}$. This implies that
    $$\begin{aligned}
        \mathbb{E}[\texttt{Alg$_1^*$}(f)] \geq \frac{1}{3}\cdot \opt(f).
    \end{aligned}$$

    Suppose $r\geq 2$ and $k^*< 120\log^2 r$. Then, the optimum is also smaller than $120\log^2 r$ by Theorem~\ref{thm:CCV-k^*}. We recall that the algorithm assigns $C(e_t)=1$ for every element $e_t$ with probability $\frac{1}{3}$. This implies that
    $$\begin{aligned}
        \mathbb{E}[\texttt{Alg$_1^*$}(f)] \geq \frac{1}{3} > \frac{1}{360\log^2 r}\cdot \opt(f).
    \end{aligned}$$

    Now, we may assume that $k^*\geq 120\log^2 r$ and $r\geq 2$. We recall that the algorithm \texttt{Alg$_1^*$} runs Algorithm~\ref{algo:pseudocode-polymatroid} with probability $\frac{1}{3}$. We have
    \begin{equation}\label{inequality:positive-assumption-k}
    2^h/ (60\cdot \log^2 r)> \frac{1}{60}\cdot \frac{k^*}{2\log^2 r}\geq 1,
    \end{equation}
    which implies that
    \begin{equation}\label{inequality:trunc-integer}
        \lfloor 2^h/ (60\cdot \log^2 r) \rfloor \geq \frac{1}{2}\cdot 2^h/ (60\cdot \log^2 r).
    \end{equation}
    Hence, we have
    $$\begin{aligned}
        \mathbb{E}[\texttt{Alg$_1^*$}(f)]&\geq \frac{1}{3}\cdot \mathbb{E}[\texttt{Alg$_1$}(f)] \\
        &\geq \frac{1}{6}\cdot \lfloor 2^h/ (60\cdot \log^2 r)\rfloor \ \ \text{(by Corollary~\ref{corollary:number-proper-colors})}\\
        &\geq \frac{1}{12}\cdot 2^h/ (60\cdot \log^2 r) \ \ \text{(by inequality~(\ref{inequality:trunc-integer}))}\\
        &\geq \frac{1}{1440\cdot \log^2r}\cdot k^* \\
        &\geq \frac{1}{1440\cdot \log^2r}\cdot \opt(f). \ \ \text{(by Theorem~\ref{thm:CCV-k^*})}
    \end{aligned}$$
\end{proof}

\begin{remark}
    We recall that we assumed $k^*\geq 120\log^2 r$ and $r\geq 2$ in Sections~\ref{subsection:algorithm}, \ref{subsection:property-quotient}, \ref{subsection:structural-property}, and \ref{subsection:ratio-analysis}. The first assumption is to ensure the correctness of inequality~(\ref{inequality:positive-assumption-k}), which is used in the last case of the proof for Theorem~\ref{theorem:polymatroid-combined-ratio}. The second assumption is used for inequality~(\ref{inequality:probability-analysis}) in the proof for Lemma~\ref{lemma:polymatroid-proper-color-probability}.
\end{remark}
\section{Approximation of $q_t$ in Polynomial Time}\label{section:poly-algorithm}
In Section~\ref{section:algorithm}, we presented a randomized online algorithm with competitive ratio $O(\log^2 r)$. The algorithm takes exponential time, since the parameter $q_t=\min \{|Q|: e_t\in Q \subseteq \mathcal{N}_t \text{ and $Q$ is a quotient of $f_{|\mathcal{N}_t}$}\}$ is NP-hard to compute\footnote{The problem of computing $q_t$ for $f$ being a matroid rank function is NP-hard. We note that the problem is equivalent to computing a min-sized cocircuit of a matroid containing a specified element. Computing a min-sized cocircuit of a matroid is a NP-hard problem \cite{Var97} that reduces to finding a min-sized cocircuit of a matroid containing a specified element (an algorithm for the latter problem can be applied for each choice of the element to solve the former problem).}. In this section, we show how to get an approximation of $q_t$ in polynomial time via \emph{strength decomposition} of the function $f$ and complete the proof of Theorem~\ref{main-result:polymatroid}.

\subsection{Strength Decomposition}
We first introduce the definition of strength decomposition. Let $f:2^\mathcal{N}\rightarrow \mathbb{Z}_{\geq 0}$ be a polymatroid over the ground set $\mathcal{N}$. For a subset $T$ of $S$, the \emph{strength-ratio} of $T$ in $S$, denoted $\varphi(T|S)$, is the value
$$\varphi(T|S):=\frac{|S|-|T|}{f(S)-f(T)},$$
with the convention that $x/0=+\infty$ for every $x\geq 0$.

\begin{definition}\cite{quanrud2024quotient} 
Let $f:2^\mathcal{N}\rightarrow \mathbb{Z}_{\geq 0}$ be a polymatroid over the ground set $\mathcal{N}$. 
    A \emph{strength decomposition} of $\mathcal{N}$ with respect to $f$ is a sequence of sets $S_0\supseteq S_1 \supseteq \ldots \supseteq S_w$ such that:
    \begin{enumerate}[label=(\arabic*)]
        \item $S_0=\mathcal{N}$,
        \item $S_w=\emptyset$ and $S_i\neq \emptyset$ for every $i\in [0,w-1]$,
        \item For every $i\in [w]$, $S_i=\mathop{argmin}\limits_{S\subseteq S_{i-1}} \ \varphi(S|S_{i-1})$, and
        \item The strength ratios $\varphi(S_i|S_{i-1})$ are nondecreasing in $i$.
    \end{enumerate}
\end{definition}

For every polymatroid $f:2^{\mathcal{N}}\rightarrow \mathbb{Z}_{\geq 0}$ over a ground set $\calN$, there exists a strength decomposition of $\calN$ with respect to $f$ and it can be computed in polynomial time via reduction to submodular minimization \cite{Nar91, Narayanan-book, Fuj09} (also see \cite{quanrud2024quotient}). We summarize this result below. 

\begin{theorem}\label{theorem:strength-decomposition}
    Given a polymatroid $f:2^{\mathcal{N}}\rightarrow \mathbb{Z}_{\geq 0}$ over ground set $\mathcal{N}$ via an evaluation oracle, a strength decomposition of $\mathcal{N}$ with respect to $f$ exists and can be computed in polynomial time.
\end{theorem}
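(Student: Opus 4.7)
The plan is to construct the strength decomposition greedily, one level at a time, using submodular function minimization as the workhorse. I would set $S_0 := \mathcal{N}$ and, for $i \ge 1$, if $S_{i-1} \neq \emptyset$, take $S_i$ to be any minimizer of $\varphi(\,\cdot\, | S_{i-1})$ over proper subsets of $S_{i-1}$. Properties (1)--(3) of the definition then hold by construction, and since $|S_i| < |S_{i-1}|$ strictly at each step, the process terminates at some $w \le |\mathcal{N}|$ with $S_w = \emptyset$. Existence thereby reduces to showing that the required minimizer at each step exists and can be computed; existence itself is immediate because $\varphi$ takes finitely many values on a finite ground set.

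The main technical obstacle is the polynomial-time computation of each $S_i$ from an evaluation oracle for $f$. For this, I would exploit the equivalence
\[
\varphi(S|S_{i-1}) \le \lambda \iff \lambda f(S) - |S| \le \lambda f(S_{i-1}) - |S_{i-1}|,
\]
valid for every $S \subsetneq S_{i-1}$ with $f(S) < f(S_{i-1})$. Since $f$ is submodular and $|\cdot|$ is modular, the set function $S \mapsto \lambda f(S) - |S|$ restricted to $2^{S_{i-1}}$ is submodular for every $\lambda \ge 0$. Hence the decision question ``is $\min_S \varphi(S|S_{i-1}) \le \lambda$?'' reduces to a single submodular-function minimization on $2^{S_{i-1}}$, which is solvable in polynomial time by the classical combinatorial algorithms for submodular minimization. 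Plugging this decision oracle into a Dinkelbach-style iteration (or a Megiddo-style parametric search on $\lambda$) locates the optimal ratio $\lambda^* = \varphi(S_i|S_{i-1})$ in polynomially many calls, and a final submodular-minimization call at $\lambda^*$ recovers $S_i$.

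Finally, monotonicity of the strength ratios (property (4)) will follow from a short mediant argument. Suppose, for contradiction, that $\varphi(S_{i+1}|S_i) < \varphi(S_i|S_{i-1})$ for some $i$. Then
\[
\varphi(S_{i+1}|S_{i-1}) = \frac{(|S_{i-1}|-|S_i|) + (|S_i|-|S_{i+1}|)}{(f(S_{i-1})-f(S_i)) + (f(S_i)-f(S_{i+1}))}
\]
is the mediant of $\varphi(S_i|S_{i-1})$ and $\varphi(S_{i+1}|S_i)$ and hence lies strictly between them; in particular, $\varphi(S_{i+1}|S_{i-1}) < \varphi(S_i|S_{i-1})$, contradicting the fact that $S_i$ was chosen as a minimizer of $\varphi(\,\cdot\,|S_{i-1})$. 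The bulk of the work therefore lies in the parametric submodular-minimization step, for which I would appeal to the standard machinery from \cite{Nar91, Narayanan-book, Fuj09, quanrud2024quotient} rather than reproving it here.
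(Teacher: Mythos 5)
The paper does not actually prove this theorem; it defers to the cited references (\cite{Nar91, Narayanan-book, Fuj09}, and \cite{quanrud2024quotient}), noting only that the decomposition ``can be computed in polynomial time via reduction to submodular minimization.'' Your sketch reconstructs the route that these references take, and the outline is sound. The greedy level-by-level construction is exactly the intended one: properties (1)--(3) hold by fiat, $|S_i|<|S_{i-1}|$ strictly (since $S_{i-1}$ itself has strength ratio $0/0=+\infty$ while $\varphi(\emptyset\mid S_{i-1})=|S_{i-1}|/f(S_{i-1})<\infty$, using $f(e)>0$ for all $e$), so the chain terminates at $\emptyset$ in at most $|\mathcal{N}|$ steps. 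The parametric reduction is also correct: for any $\lambda\ge 0$, $S\mapsto \lambda f(S)-|S|$ is submodular, and for proper subsets $S\subsetneq S_{i-1}$ the equivalence you write holds (even when $f(S)=f(S_{i-1})$, since then both sides fail), so the decision question reduces to submodular minimization and Dinkelbach/Megiddo pins down $\lambda^*$ in polynomially many calls. Your mediant argument for property (4) is the standard and correct one: $S_{i+1}\subseteq S_i\subsetneq S_{i-1}$ and $f(S_{i+1})\le f(S_i)<f(S_{i-1})$, so $\varphi(S_{i+1}\mid S_{i-1})$ is a valid candidate, and the mediant inequality contradicts minimality of $S_i$ if the ratios were decreasing.

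One small technicality you gloss over is worth flagging: in the decision step, minimizing $g_\lambda(S)=\lambda f(S)-|S|$ over all $S\subseteq S_{i-1}$ may return $S_{i-1}$ itself even when a \emph{proper} subset attains $g_\lambda(S)=g_\lambda(S_{i-1})$ (i.e., when the true optimum ratio equals $\lambda$), so one needs a tie-breaking device to extract a proper-subset minimizer --- e.g., run SFM on $2^{S_{i-1}-e}$ for each $e\in S_{i-1}$, or use the standard ``unique maximal minimizer'' property of SFM. This is a routine detail and does not affect correctness, but it is the kind of thing the cited references take care of. Overall your proposal is a faithful and correct reconstruction of the literature argument the paper invokes by citation.
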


We show that the function value of sets in a strength decomposition sequence is strictly decreasing.

\begin{lemma}\label{lemma:decomposition-value}
    Let $f:2^\mathcal{N}\rightarrow \mathbb{Z}_{\geq 0}$ be a polymatroid over the ground set $\mathcal{N}$. Let $\calN=S_0\supseteq S_1\supseteq \ldots \supseteq S_w=\emptyset$ be a strength decomposition with respect to $f$. Then, $\{f(S_i)\}_{i=0}^{w}$ is a strictly decreasing integer sequence.
\end{lemma}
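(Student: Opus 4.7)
The plan is to show that each consecutive difference $f(S_{i-1}) - f(S_i)$ is strictly positive; since $f$ is integer-valued, this immediately gives the strictly decreasing integer sequence. Monotonicity of $f$ together with $S_i \subseteq S_{i-1}$ already gives $f(S_i) \leq f(S_{i-1})$, so the whole task reduces to ruling out equality for each $i \in [w]$.

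To rule out equality, I would exploit the nondecreasing property of the strength ratios (property (4) of the decomposition) in tandem with the convention $x/0 = +\infty$ for $x \geq 0$. The key observation is that the last ratio in the chain is already finite: since $S_w = \emptyset$ and $S_{w-1} \neq \emptyset$,
\[
\varphi(S_w \mid S_{w-1}) \;=\; \frac{|S_{w-1}| - 0}{f(S_{w-1}) - f(\emptyset)} \;=\; \frac{|S_{w-1}|}{f(S_{w-1})}.
\]
Since $S_{w-1}$ is non-empty, pick any $e \in S_{w-1}$; by the standing assumption $f(e) > 0$ and monotonicity, $f(S_{w-1}) \geq f(\{e\}) > 0$, so $\varphi(S_w \mid S_{w-1})$ is a finite positive real.

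Then by property (4), for every $i \in [w]$ we have $\varphi(S_i \mid S_{i-1}) \leq \varphi(S_w \mid S_{w-1}) < \infty$. But finiteness of $\varphi(S_i \mid S_{i-1}) = (|S_{i-1}| - |S_i|)/(f(S_{i-1}) - f(S_i))$ forces the denominator to be nonzero (otherwise the stated convention makes the ratio $+\infty$); combined with $f(S_i) \leq f(S_{i-1})$, this yields $f(S_i) < f(S_{i-1})$. Applying this across all $i$ and using integrality gives the claim.

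I do not anticipate a real obstacle here: the whole argument is a one-shot propagation of finiteness through the nondecreasing chain, and the $f(e) > 0$ hypothesis from the preliminaries is exactly what is needed to anchor the chain at its tail. The only subtlety worth stating explicitly in the write-up is the use of the convention $x/0 = +\infty$ (including $0/0 = +\infty$) to convert finiteness of $\varphi$ into a strictly positive denominator.
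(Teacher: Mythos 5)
Your proof is correct, and it takes a genuinely different route from the paper's. You anchor at the tail of the chain: $\varphi(S_w \mid S_{w-1}) = |S_{w-1}|/f(S_{w-1})$ is finite because $S_{w-1}\neq\emptyset$ and $f(e)>0$ for all $e$, and then you push this finiteness backward across the whole chain using property (4) (nondecreasing strength ratios) to conclude that every $\varphi(S_i\mid S_{i-1})$ is finite, which forces $f(S_{i-1})-f(S_i)>0$. The paper instead works locally: for each fixed $i$ it invokes property (3) (the argmin characterization $S_i=\mathop{\arg\min}_{S\subseteq S_{i-1}}\varphi(S\mid S_{i-1})$), compares against the candidate $S=\emptyset$ to get $\varphi(S_i\mid S_{i-1})\leq \varphi(\emptyset\mid S_{i-1})=|S_{i-1}|/f(S_{i-1})<\infty$, and derives the same contradiction. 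Both proofs rest on the same finiteness-of-$\varphi$ trick and the same convention $x/0=+\infty$; the difference is that you use property (4) as the propagation mechanism while the paper uses property (3) as a per-index comparison. Your version has the aesthetic advantage of invoking only the monotonicity of the ratio sequence, while the paper's version has the advantage of being a purely local argument that does not need to reference the chain's terminus. Your explicit note about the $0/0=+\infty$ corner case is a good one and covers the possibility $S_{i-1}=S_i$ cleanly.
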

\begin{proof}
    Integrality of the sequence follows since $f$ is a polymatroid. We now show that the sequence is strictly decreasing. We fix an index $i\in [w]$ and prove that $f(S_{i-1})>f(S_i)$. Since $S_{i-1}\neq \emptyset$ and $f(e)>0$ for every element $e\in \mathcal{N}$, we have $f(S_{i-1})>0$. We recall that $S_i=\mathop{argmin}\limits_{S\subseteq S_{i-1}} \ \varphi(S|S_{i-1})$, which implies that $\varphi(S_i|S_{i-1})\leq \varphi(\emptyset|S_{i-1})$. Hence,
    \begin{equation}\label{inequality:strength-decomposition}
        \frac{|S_{i-1}|-|S_i|}{f(S_{i-1})-f(S_i)}=\varphi(S_i|S_{i-1})\leq \varphi(\emptyset|S_{i-1})=\frac{|S_{i-1}|}{f(S_{i-1})}<+\infty.
    \end{equation}
    The last inequality is by $f(S_{i-1})>0$. If $f(S_{i-1})=f(S_i)$, then $\frac{|S_{i-1}|-|S_i|}{f(S_{i-1})-f(S_i)}=+\infty$, which contradicts inequality~(\ref{inequality:strength-decomposition}). Thus, $f(S_{i-1})\neq f(S_i)$. We recall that $f$ is a monotone function and $S_i\subseteq S_{i-1}$ by definition of the strength decomposition. Therefore, $f(S_{i-1})>f(S_i)$.
\end{proof}

We use Lemma \ref{lemma:decomposition-value} to conclude that all sets in a strength decomposition are closed. 

\begin{lemma}\label{lemma:decomposition-closed-set}
    Let $f:2^\mathcal{N}\rightarrow \mathbb{Z}_{\geq 0}$ be a polymatroid over the ground set $\mathcal{N}$. Let $\calN=S_0\supseteq S_1\supseteq \ldots \supseteq S_{w}=\emptyset$ be a strength decomposition with respect to $f$. Then, $S_i$ is a closed set for every $i\in [0,w]$.
\end{lemma}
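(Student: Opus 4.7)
My plan is to prove the lemma by contradiction: assuming some $S_i$ is not closed, I will exhibit a proper subset of $S_{j-1}$ for some $j \leq i$ that strictly beats $S_j$ in the strength-ratio, contradicting the fact that $S_j$ is the arg-min in the definition of strength decomposition.

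Suppose for the sake of contradiction that $S_i$ is not closed for some $i \in [0, w]$. Then there exists an element $e \in \calN \setminus S_i$ with $f(S_i + e) = f(S_i)$. Since $e \in S_0 = \calN$ but $e \notin S_i$, and since the sequence $S_0 \supseteq S_1 \supseteq \ldots \supseteq S_w$ is nested, there is a unique index $j \in [i]$ for which $e \in S_{j-1} \setminus S_j$. Because $S_j \supseteq S_i$, submodularity of $f$ yields
\[
f(S_j + e) - f(S_j) \;\leq\; f(S_i + e) - f(S_i) \;=\; 0,
\]
and by monotonicity $f(S_j + e) = f(S_j)$.

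Now set $T := S_j + e$. Then $T \subseteq S_{j-1}$ and $|T| = |S_j| + 1$. By Lemma \ref{lemma:decomposition-value}, $f(S_{j-1}) > f(S_j)$, so the ratios below are well-defined and positive. Compute
\[
\varphi(T \mid S_{j-1}) \;=\; \frac{|S_{j-1}| - |S_j| - 1}{f(S_{j-1}) - f(S_j)} \;<\; \frac{|S_{j-1}| - |S_j|}{f(S_{j-1}) - f(S_j)} \;=\; \varphi(S_j \mid S_{j-1}),
\]
where the strict inequality uses that the denominator is a positive integer. This contradicts the fact that $S_j = \arg\min_{S \subseteq S_{j-1}} \varphi(S \mid S_{j-1})$ in the definition of the strength decomposition. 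Hence every $S_i$ must be closed.

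The main step that requires care is the submodularity argument that pulls the zero marginal at $S_i$ back to $S_j$ for the earliest index $j$ where $e$ drops out. Once that is established, the rest is the straightforward observation that adding a zero-marginal element to $S_j$ only shrinks the numerator of the strength-ratio while leaving the denominator unchanged, and the denominator is strictly positive by the previous lemma. I expect no technical obstacle beyond correctly locating the index $j$.
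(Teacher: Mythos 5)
Your proof is correct and takes essentially the same approach as the paper's: locate the index $j$ where the offending element $e$ first drops out of the nested sequence, show that adding $e$ to $S_j$ would strictly decrease the strength-ratio against $S_{j-1}$ (using Lemma~\ref{lemma:decomposition-value} to ensure the denominator is positive), and conclude via submodularity. The only cosmetic difference is that you wrap the argument as a single contradiction starting from non-closedness of $S_i$, whereas the paper argues directly that $f_{S_j}(e)>0$ for every $e\in\mathcal{N}\setminus S_i$ and then transfers positivity to $S_i$; these are logically equivalent.
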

\begin{proof}
    We fix an integer $i\in [0,w]$ and consider the set $S_i$. Let $e\in \mathcal{N}\setminus S_{i}$. Then, there exists an index $j\in [0,i]$ with $e\in S_{j-1}\setminus S_j$. If $f_{S_j}(e)=f(S_j+e)-f(S_j)=0$, then
    $$\begin{aligned}
        \varphi(S_j|S_{j-1}) &= \frac{|S_{j-1}|-|S_j|}{f(S_{j-1})-f(S_j)} \\
        &> \frac{|S_{j-1}|-|S_j+e|}{f(S_{j-1})-f(S_j)} \ \ \text{(since $f(S_{j-1})>f(S_j)$ by Lemma~\ref{lemma:decomposition-value})}\\
        &= \frac{|S_{j-1}|-|S_j+e|}{f(S_{j-1})-f(S_j+e)} \ \ \text{(since $f(S_j)=f(S_j+e)$)} \\
        &= \varphi((S_j+e)|S_{j-1}),
    \end{aligned}$$
    which leads to a contradiction that $S_j=\mathop{argmin}\limits_{S\subseteq S_{j-1}} \ \varphi(S|S_{j-1})$. Hence, $f_{S_j}(e)>0$ for every $e\in \mathcal{N}\setminus S_i$. By submodularity of the function $f$, we have $f_{S_i}(e)\geq f_{S_j}(e)>0$ for every $e\in \mathcal{N}\setminus S_i$, which implies that $\texttt{span}(S_i)=S_i$.
\end{proof}

\subsection{Algorithm Description}
Now, we present our online algorithm with competitive ratio $O(\log^2 r)$ in polynomial time. Let us first assume that $k^*\geq 120\log^2 r$ and $r\geq 2$. When the element $e_t$ arrives, instead of computing $q_t$, the algorithm computes a strength decomposition $\calN_t=S_0\supseteq S_1\supseteq \ldots \supseteq S_{w}=\emptyset$ with respect to $f_{|\mathcal{N}_t}$. Let $i\in [w]$ be the index such that $e_t\in S_{i-1}\setminus S_i$. The algorithm computes parameter $\eta_t:=\frac{|\mathcal{N}_t\setminus S_i|}{f(\mathcal{N}_t)-f(S_i)}$. Let $\ell_t:=\lceil \log \eta_t\rceil$. Then, it samples variable $R_t$ from $[\ell_t-3\lceil\log r\rceil, \ell_t+3\lceil\log r\rceil]$ uniformly at random. Finally, the algorithm sets $C(e_t)$ to be an integer picked uniformly at random from $[\lfloor 2^{R_t}/(60\cdot \log ^2 r)\rfloor]$. We give a pseudocode of the algorithm in Algorithm~\ref{algo:pseudocode-polymatroid-polynomial}.

\begin{algorithm2e}[h]
\caption{Randomized Online Algorithm for \DPB}
\label{algo:pseudocode-polymatroid-polynomial}
\SetKwInput{KwInput}{Input}                
\SetKwInput{KwOutput}{Output}              
\LinesNumbered

\DontPrintSemicolon
  
    \KwInput{an ordering $e_1, e_2, \ldots, e_n$ of the ground set $\calN$ where $e_t$ is the element that arrives at time $t$ for each $t\in [n]$; $r=f(\mathcal{N})$, and 
    evaluation oracle access to $f_{|\calN_t}$ at each timestep $t\in [n]$, where  $f:2^{\calN}\rightarrow \Z_{\ge 0}$ is a polymatroid. 
    }
    \KwOutput{color assignment $C: \mathcal{N} \rightarrow \mathbb{Z}_+$.}
    
    \SetKwProg{When}{When}{:}{}
    \For{$t=1,2, \ldots, n$}{
        compute a strength decomposition $\calN_t=S_0\supseteq S_1\supseteq \ldots \supseteq S_w=\emptyset$ with respect to $f_{|\mathcal{N}_t}$
    
        $\eta_t\gets \frac{|\mathcal{N}_t\setminus S_i|}{f(\mathcal{N}_t)-f(S_i)}$ where $e_t\in S_{i-1}\setminus S_i$
        
        $\ell_t\gets \lceil \log q_t \rceil$
        
        Sample $R_t$ $\textit{u.a.r.}$ from $[\ell_t-3\lceil\log r\rceil, \ell_t+3\lceil\log r\rceil]$

        Sample $C(e_t)$ $\textit{u.a.r.}$ from $[\lfloor 2^{R_t}/(60\cdot \log ^2 r)\rfloor]$

        \Return $C(e_t)$
    }
\end{algorithm2e}

\subsection{Relating $\eta_t$ and $q_t$}
We recall that the estimate used in Algorithm \ref{algo:pseudocode-polymatroid} is $$q_t=\min \{|Q|: e_t\in Q \subseteq \mathcal{N}_t \text{ and $Q$ is a quotient of $f_{|\mathcal{N}_t}$}\}.$$ In contrast, the estimate used in Algorithm \ref{algo:pseudocode-polymatroid-polynomial} is $\eta_t$. 
We show that the parameter $\eta_t$ is an $r$-approximation of $q_t$.

\begin{lemma}\label{lemma:eta-approximation}
    For every $t\in [n]$, $\frac{q_t}{r}\leq \eta_t \leq q_t$.
\end{lemma}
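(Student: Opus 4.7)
The plan is to prove the two inequalities separately, with the upper bound $\eta_t \le q_t$ being the main technical step.

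For the lower bound $q_t/r \le \eta_t$, I would observe that by Lemma~\ref{lemma:decomposition-closed-set}, the set $S_i$ is closed in $f_{|\mathcal{N}_t}$, so $\mathcal{N}_t \setminus S_i$ is a quotient of $f_{|\mathcal{N}_t}$; since $e_t \in S_{i-1} \setminus S_i$, this quotient contains $e_t$, giving $q_t \le |\mathcal{N}_t \setminus S_i|$. Combined with $f(\mathcal{N}_t) - f(S_i) \le f(\mathcal{N}) = r$, this yields $\eta_t = |\mathcal{N}_t \setminus S_i|/(f(\mathcal{N}_t)-f(S_i)) \ge q_t/r$.

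For the upper bound $\eta_t \le q_t$, I would first rewrite $\eta_t$ as a weighted average of strength ratios. Writing $a_j := |S_{j-1}|-|S_j|$ and $b_j := f(S_{j-1})-f(S_j)$, telescoping gives
\[
\eta_t \;=\; \frac{|\mathcal{N}_t|-|S_i|}{f(\mathcal{N}_t)-f(S_i)} \;=\; \frac{\sum_{j=1}^i a_j}{\sum_{j=1}^i b_j} \;=\; \frac{\sum_{j=1}^i \varphi(S_j|S_{j-1})\, b_j}{\sum_{j=1}^i b_j},
\]
where all $b_j > 0$ by Lemma~\ref{lemma:decomposition-value}. Since the strength ratios $\varphi(S_j|S_{j-1})$ are nondecreasing in $j$ (property (4) of the strength decomposition), this weighted average is at most $\varphi(S_i|S_{i-1})$. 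So it suffices to show $\varphi(S_i|S_{i-1}) \le q_t$.

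The key step, which I expect to be the main obstacle, is bounding $\varphi(S_i | S_{i-1})$ by $q_t$. Let $Q^*$ be a minimum-sized quotient of $f_{|\mathcal{N}_t}$ with $e_t \in Q^*$ and $|Q^*| = q_t$, and let $C^* := \mathcal{N}_t \setminus Q^*$ (a closed set avoiding $e_t$). The idea is to restrict to $S_{i-1}$ and use the extremal property of $S_i$. By Lemma~\ref{lemma:restrict-quotient} applied to $f_{|\mathcal{N}_t}$ with the restriction to $S_{i-1}$, the set $Q^* \cap S_{i-1}$ is a quotient of $f_{|S_{i-1}}$; equivalently $C^* \cap S_{i-1}$ is closed in $f_{|S_{i-1}}$, and since $e_t \in S_{i-1} \setminus C^*$ we obtain $f(S_{i-1}) - f(C^* \cap S_{i-1}) \ge 1$. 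Therefore
\[
\varphi(C^* \cap S_{i-1}\,|\,S_{i-1}) \;=\; \frac{|Q^* \cap S_{i-1}|}{f(S_{i-1}) - f(C^* \cap S_{i-1})} \;\le\; |Q^* \cap S_{i-1}| \;\le\; |Q^*| \;=\; q_t.
\]
By the defining minimizing property of the strength decomposition, $\varphi(S_i|S_{i-1}) \le \varphi(C^* \cap S_{i-1} | S_{i-1}) \le q_t$, which together with the weighted-average bound completes the proof that $\eta_t \le q_t$.
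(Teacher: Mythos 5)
Your proof is correct and follows the same overall structure as the paper's: the lower bound uses Lemma~\ref{lemma:decomposition-closed-set} to conclude that $\mathcal{N}_t \setminus S_i$ is a quotient containing $e_t$, and the upper bound rewrites $\eta_t$ as a weighted average of strength ratios and then compares $\varphi(S_i|S_{i-1})$ against $\varphi((S_{i-1}\setminus Q)|S_{i-1})$ via the minimality of $S_i$. The one place you diverge is in establishing $f(S_{i-1})-f(S_{i-1}\setminus Q)\ge 1$: you invoke Lemma~\ref{lemma:restrict-quotient} to see that $Q^*\cap S_{i-1}$ is a quotient of $f_{|S_{i-1}}$ and then use the positive marginal of $e_t$, whereas the paper argues via submodularity. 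Your route is actually the cleaner one here, since the paper's intermediate line $f(S_{i-1})-f(S_{i-1}\setminus Q)=f_{S_{i-1}\setminus Q}(Q)$ is not a true equality unless $Q\subseteq S_{i-1}$ (the left side is in general the smaller quantity, as $f_{S_{i-1}\setminus Q}(Q)=f(S_{i-1}\cup Q)-f(S_{i-1}\setminus Q)$); the paper's conclusion is still correct but should be argued, as you do implicitly, via the marginal of $e_t$ alone rather than of all of $Q$.
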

\begin{proof}
    We first show the lower bound on $\eta_t$. Let $\calN_t=S_0\supseteq S_1\supseteq \ldots \supseteq S_w=\emptyset$ be a strength decomposition with respect to $f_{|\mathcal{N}_t}$ computed by Algorithm~\ref{algo:pseudocode-polymatroid-polynomial}. Let $i\in [0,w]$ be the index such that $e_t\in S_{i-1}\setminus S_i$. We recall that $\eta_t=\frac{|\mathcal{N}_t\setminus S_i|}{f(\mathcal{N}_t)-f(S_i)}\geq \frac{|\mathcal{N}_t\setminus S_i|}{r}$ since $f(\mathcal{N}_t)-f(S_i)\leq f(\mathcal{N})\leq r$. By Lemma~\ref{lemma:decomposition-closed-set}, $S_i$ is a closed set. Consequently, $\mathcal{N}_t\setminus S_i$ is a quotient of $f_{|\mathcal{N}_t}$ containing element $e_t$ since $e_t\in S_{i-1}\setminus S_i$. This implies that $q_t\leq |\mathcal{N}_t\setminus S_i|$ and hence, $\eta_t\geq \frac{q_t}{r}$.

    We now show the upper bound on $\eta_t$. Let $Q$ be a minimum sized quotient of $f_{|\mathcal{N}_t}$ containing $e_t$ with $|Q|=q_t$. We recall that
    $$\eta_t=\frac{|\mathcal{N}_t\setminus S_i|}{f(\mathcal{N}_t)-f(S_i)}=\frac{\sum_{j=1}^{i}|S_{j-1}\setminus S_j|}{\sum_{j=1}^{i}\left(f(S_{j-1})-f(S_j)\right)}.$$
    For every $j\in [i]$, $\frac{|S_{j-1}\setminus S_j|}{f(S_{j-1})-f(S_j)}=\varphi(S_j|S_{j-1})\leq \varphi(S_i|S_{i-1})$, where the last inequality is because the strength ratios $\varphi(S_i|S_{i-1})$ in a strength decomposition are nondecreasing in $i$. Therefore, we have
    $$\begin{aligned}
        \eta_t &= \frac{\sum_{j=1}^{i}|S_{j-1}\setminus S_j|}{\sum_{j=1}^{i}\left(f(S_{j-1})-f(S_j)\right)} \\
        &\leq \frac{\sum_{j=1}^{i}\left(\varphi(S_i|S_{i-1})\cdot (f(S_{j-1})-f(S_j))\right)}{\sum_{j=1}^{i}\left(f(S_{j-1})-f(S_j)\right)} \ \ \text{(since $\frac{|S_{j-1}\setminus S_j|}{f(S_{j-1})-f(S_j)}\leq \varphi(S_i|S_{i-1})$)} \\
        &= \varphi(S_i|S_{i-1}) \\
        &\leq \varphi(\left(S_{i-1}\setminus Q\right)|S_{i-1}) \ \ \text{(since $S_i=\mathop{argmin}\limits_{S\subseteq S_{i-1}} \ \varphi(S|S_{i-1})$)} \\
        &= \frac{|S_{i-1}|-|S_{i-1}\setminus Q|}{f(S_{i-1})-f(S_{i-1}\setminus Q)}.
    \end{aligned}$$

    Since $Q$ is a quotient of $f_{|\mathcal{N}_t}$, we have $f_{\mathcal{N}_t\setminus Q}(Q)\geq f_{\mathcal{N}_t\setminus Q}(e_t)>0$. Hence, by submodularity of the function $f$, $f_{S_{i-1}\setminus Q}(Q)\geq f_{\mathcal{N}_t\setminus Q}(Q)>0$ since $S_{i-1}\subseteq \mathcal{N}_t$. Consequently, $f(S_{i-1})-f(S_{i-1}\setminus Q)=f_{S_{i-1}\setminus Q}(Q)\geq 1$. This shows that
    $$\begin{aligned}
        \eta_t \leq \frac{|S_{i-1}|-|S_{i-1}\setminus Q|}{f(S_{i-1})-f(S_{i-1}\setminus Q)} \leq |S_{i-1}|-|S_{i-1}\setminus Q|=|S_{i-1}\cap Q| \leq |Q| = q_t.
    \end{aligned}$$
\end{proof}

\subsection{Competitive Ratio Analysis}
Now, we analyze the competitive ratio of Algorithm~\ref{algo:pseudocode-polymatroid-polynomial}. We recall that $h=\lfloor \log k^* \rfloor$ and $\frac{k^*}{2}< 2^h \leq k^*$. We now prove that every color $c\in [\lfloor 2^h/(60\cdot \log ^2 r)\rfloor]$ is a base color with constant probability.

\begin{lemma}\label{lemma:proper-color-probability-poly}
    Let $c \in [\lfloor 2^h/ (60\cdot \log^2 r)\rfloor]$. Then, $\mathbf{Pr}_{\texttt{Alg$_2$} }[c \text{ is a base color}]\geq \frac{1}{2}$.
\end{lemma}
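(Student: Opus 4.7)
The plan is to follow the template of Lemma~\ref{lemma:polymatroid-proper-color-probability} verbatim, only replacing the quotient-based estimator $q_t$ by its strength-decomposition surrogate $\eta_t$ and absorbing the $r$-factor slack from Lemma~\ref{lemma:eta-approximation} into the width of the sampling window.

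I would reuse, without change, the set of good elements from Section~\ref{subsection:structural-property}, $\mathcal{N}_{\good}=\{e_t:\tfrac{k^*}{2r}<q_t<2rk^*\}$, so that Lemma~\ref{lemma:new-opt} still applies: $f(\mathcal{N}_{\good})=f(\mathcal{N})$ (hence every base of $f_{|\mathcal{N}_{\good}}$ is a base of $f$) and $k^*_{\good}:=k^*(f_{|\mathcal{N}_{\good}})\ge k^*/2\ge 2^{h-1}$.

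The only genuinely new step is to check that for every $e_t\in\mathcal{N}_{\good}$ the value $h$ still lies in the sampling window $[\lceil\log\eta_t\rceil-3\lceil\log r\rceil,\lceil\log\eta_t\rceil+3\lceil\log r\rceil]$ used by Algorithm~\ref{algo:pseudocode-polymatroid-polynomial}. By Lemma~\ref{lemma:eta-approximation}, $q_t/r\le \eta_t\le q_t$, so the definition of $\mathcal{N}_{\good}$ yields $\eta_t\in(k^*/(2r^2),\,2rk^*)$. A short arithmetic check, using $2^{h-1}<k^*\le 2^h$ and $\lceil\log r\rceil\ge 1$ (since $r\ge 2$), gives the inclusion
\[
\lceil\log\eta_t\rceil\in\bigl[h-3\lceil\log r\rceil,\;h+3\lceil\log r\rceil\bigr],
\]
and therefore $\Pr[R_t=h]\ge 1/(6\lceil\log r\rceil+1)>1/(15\log r)$ for every good $e_t$. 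This step is where the extra factor of $r$ lost in approximating $q_t$ by $\eta_t$ is absorbed by the $3\lceil\log r\rceil$ half-width of the window, and it is the main (though still routine) verification of the proof.

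Once this is established, the argument mirrors Lemma~\ref{lemma:polymatroid-proper-color-probability} exactly. Conditioning on $R_t=h$, the element $e_t$ receives color $c$ with probability $1/\lfloor 2^h/(60\log^2 r)\rfloor\ge 60\log^2 r/2^h$, so
\[
\Pr[C(e_t)=c]\ge \frac{1}{15\log r}\cdot\frac{60\log^2 r}{2^h}\ge \frac{2\log r}{k^*_{\good}}
\]
for every $e_t\in\mathcal{N}_{\good}$. Applying Lemma~\ref{lemma:sampling-property} to the polymatroid $f_{|\mathcal{N}_{\good}}$ with sampling probability at least $2\log r/k^*_{\good}$ shows that the good elements assigned color $c$ form a base of $f_{|\mathcal{N}_{\good}}$, and hence of $f$, with probability at least $1/2$. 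This yields the claimed bound.
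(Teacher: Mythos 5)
Your proof is correct and follows essentially the same approach as the paper's: reuse $\mathcal{N}_{\good}$ and Lemma~\ref{lemma:new-opt} unchanged, use Lemma~\ref{lemma:eta-approximation} to place $\eta_t\in(k^*/(2r^2),\,2rk^*)$, deduce that $h$ falls in the sampling window (so $\mathbf{Pr}[R_t=h]>1/(15\log r)$), and then invoke Lemma~\ref{lemma:sampling-property} on $f_{|\mathcal{N}_{\good}}$ exactly as in Lemma~\ref{lemma:polymatroid-proper-color-probability}. The only cosmetic difference is that you state the window condition as $\lceil\log\eta_t\rceil\in[h-3\lceil\log r\rceil,\,h+3\lceil\log r\rceil]$ while the paper writes the equivalent $\lceil\log\eta_t\rceil-3\lceil\log r\rceil\le h\le\lceil\log\eta_t\rceil+3\lceil\log r\rceil$.
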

\begin{proof}
    Let $k^*_{\good}:=k^*(f_{|\calN_{\good}})$. We recall that $k^*=k^*(f)$.
   By Lemma \ref{lemma:new-opt}, we have that $f(\calN_{\good})=f(\calN)$ and hence, bases of $f_{|\calN_{\good}}$ correspond to bases of $f$. By Lemma \ref{lemma:new-opt}, we also have that $k^*_{\good}\geq \frac{k^*}{2}\geq 2^{h-1}$. Let $e_t\in \mathcal{N}_{\good}$. We have $\frac{k^*}{2r}< q_t < 2rk^*$. By Lemma~\ref{lemma:eta-approximation}, we have $\frac{k^*}{2r^2}<\frac{q_t}{r}\leq \eta_t\leq q_t<2rk^*$ and hence, $\lceil \log \eta_t \rceil -3\lceil \log r\rceil \leq h\leq \lceil \log \eta_t \rceil+3\lceil \log r \rceil$, which implies that
    \begin{equation}\label{inequality:probability-analysis-polynomial}
        \mathbf{Pr}[R_t=h]=\frac{1}{6\lceil \log r \rceil+1}>\frac{1}{15\log r}.
    \end{equation}
    Hence, 
    for every element $e_t\in \mathcal{N}_{\good}$, 
    the probability that $e_t$ is colored with $c$ is
    $$\begin{aligned}
        \mathbf{Pr}[C(e_t)=c] &\geq \mathbf{Pr}[R_t=h]\cdot \mathbf{Pr}[C(e_t)=c|R_t=h] \\
        &= \mathbf{Pr}[R_t=h] \cdot \frac{1}{\lfloor 2^h/ (60\cdot \log^2 r)\rfloor} \\
        &> \frac{1}{15\log r}\cdot \frac{60\cdot \log^2 r}{2^h} \ \ \text{(by inequality~(\ref{inequality:probability-analysis-polynomial}))}\\
        &\geq \frac{1}{15\log r}\cdot \frac{60\cdot \log^2 r}{2k^*_{\good}}= \frac{2\log r}{k^*_{\good}}.
    \end{aligned}$$
    By applying Lemma~\ref{lemma:sampling-property} for the polymatroid $f_{|\mathcal{N}_{\good}}$, we conclude that elements in $\calN_{\good}$ with color $c$ form a base of $f_{|\calN_{\good}}$ with  probability at least $\frac{1}{2}$. We recall that bases of $f_{|\calN_{\good}}$ are also bases of $f$. Hence, $c$ is a base color with probability at least $\frac{1}{2}$. 
\end{proof}

The lemma above implies a lower bound on the expected number of base colors obtained by Algorithm~\ref{algo:pseudocode-polymatroid-polynomial}.

\begin{corollary}\label{corollary:number-proper-colors-poly}
    The expected number of base colors obtained by Algorithm~\ref{algo:pseudocode-polymatroid-polynomial} is at least $$\mathbb{E}[\texttt{Alg$_2$}(f)]\geq \frac{1}{2}\cdot \lfloor 2^h/ (60\cdot \log^2 r)\rfloor.$$
\end{corollary}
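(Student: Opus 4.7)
The proof should be a direct application of linearity of expectation, mirroring the proof of Corollary~\ref{corollary:number-proper-colors}. The plan is to write the number of base colors as a sum of indicator variables, one for each color in the palette $[\lfloor 2^h/(60\cdot \log^2 r)\rfloor]$, and then use Lemma~\ref{lemma:proper-color-probability-poly} to lower-bound the expectation of each indicator by $1/2$.

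More concretely, for each color $c \in [\lfloor 2^h/(60\cdot \log^2 r)\rfloor]$, let $X_c$ denote the indicator random variable for the event that $c$ is a base color produced by Algorithm~\ref{algo:pseudocode-polymatroid-polynomial}. The number of base colors returned by the algorithm is at least $\sum_c X_c$ (since colors outside the palette could also potentially be base colors, but we only need a lower bound), so
\[
\mathbb{E}[\texttt{Alg$_2$}(f)] \geq \sum_{c \in [\lfloor 2^h/(60\cdot \log^2 r)\rfloor]} \mathbb{E}[X_c] = \sum_{c \in [\lfloor 2^h/(60\cdot \log^2 r)\rfloor]} \mathbf{Pr}_{\texttt{Alg$_2$}}[c \text{ is a base color}].
\]
Applying Lemma~\ref{lemma:proper-color-probability-poly} to each term gives a lower bound of $1/2$ per color, and summing over the $\lfloor 2^h/(60\cdot \log^2 r)\rfloor$ colors yields the claimed bound $\frac{1}{2}\cdot \lfloor 2^h/(60\cdot \log^2 r)\rfloor$.

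There is no real obstacle here. The only subtlety worth noting is that Lemma~\ref{lemma:proper-color-probability-poly} already packages together all of the nontrivial probabilistic reasoning (the analysis of the random choice of $R_t$, the approximation guarantee of $\eta_t$ via Lemma~\ref{lemma:eta-approximation}, the fact that bases of $f_{|\calN_{\good}}$ are bases of $f$ via Lemma~\ref{lemma:new-opt}, and the sampling property of Lemma~\ref{lemma:sampling-property}). So at this stage the statement is purely a linearity-of-expectation corollary, entirely analogous to Corollary~\ref{corollary:number-proper-colors} in the previous section.
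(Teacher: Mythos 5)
Your proof is correct and follows exactly the paper's argument: linearity of expectation over the palette $[\lfloor 2^h/(60\log^2 r)\rfloor]$ combined with Lemma~\ref{lemma:proper-color-probability-poly}. In fact your use of $\geq$ (rather than equality) in the first step is slightly cleaner than what appears in the paper, since the algorithm could in principle produce base colors outside that palette.
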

\begin{proof}
By Lemma \ref{lemma:proper-color-probability-poly}, we have that 
    $$\begin{aligned}
        \mathbb{E}[\texttt{Alg$_2$}(f)] &= \sum_{c\in [\lfloor 2^h/ (60\cdot \log^2 r)\rfloor]} \mathbf{Pr}_{\texttt{Alg$_2$} }[c \text{ is a base color}] 
        &\ge \frac{1}{2}\cdot \lfloor 2^h/ (60\cdot \log^2 r)\rfloor.
    \end{aligned}$$
\end{proof}

\subsection{Combined Algorithm}
We recall that our analysis of Algorithm~\ref{algo:pseudocode-polymatroid-polynomial} assumed that $r\geq 2$ and $k^*\geq 120\log^2 r$. We now combine Algorithm~\ref{algo:pseudocode-polymatroid-polynomial} with other algorithms to address all ranges of $r$ and $k^*$.

Consider the online algorithm \texttt{Alg$_2^*$} that runs Algorithm~\ref{algo:pseudocode-polymatroid-polynomial} with probability $\frac{1}{3}$, assigns $C(e_t)=1$ for every element $e_t$ with probability $\frac{1}{3}$, and assigns $C(e_t)=t$ for every element $e_t$ with probability $\frac{1}{3}$. We now show that the online algorithm \texttt{Alg$_2^*$} has competitive ratio $O(\log^2 r)$.

\begin{theorem}\label{theorem:approximation-poly}
    Algorithm \texttt{Alg$_2^*$} has competitive ratio $O(\log^2 r)$.
\end{theorem}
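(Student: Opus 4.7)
The plan is to mirror the case analysis used for \texttt{Alg$_1^*$} in Theorem~\ref{theorem:polymatroid-combined-ratio}, since the only substantive difference between \texttt{Alg$_1^*$} and \texttt{Alg$_2^*$} is that the quotient-based estimator $q_t$ is replaced by the strength-decomposition-based estimator $\eta_t$, and Corollary~\ref{corollary:number-proper-colors-poly} already supplies the analog of Corollary~\ref{corollary:number-proper-colors}. So the combined analysis should require no new ideas beyond the three-case split on $r$ and $k^*$.

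First I would dispose of the degenerate regimes. If $r=1$, then every singleton is a base, so $\opt(f)=n$, and the branch that assigns $C(e_t)=t$ (chosen with probability $\tfrac{1}{3}$) produces $n$ base colors deterministically; thus $\mathbb{E}[\texttt{Alg}_2^*(f)] \geq \tfrac{1}{3}\opt(f)$. If $r\geq 2$ but $k^* < 120\log^2 r$, Theorem~\ref{thm:CCV-k^*} gives $\opt(f)\leq k^* < 120\log^2 r$, so the branch that assigns $C(e_t)=1$ (chosen with probability $\tfrac{1}{3}$) yields at least one base color, and $\mathbb{E}[\texttt{Alg}_2^*(f)]\geq \tfrac{1}{3} \geq \tfrac{1}{360\log^2 r}\opt(f)$.

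For the main case $r\geq 2$ and $k^*\geq 120\log^2 r$, I would use the inequality $2^h/(60\log^2 r) > \tfrac{1}{60}\cdot\tfrac{k^*}{2\log^2 r} \geq 1$, which (as in the \texttt{Alg$_1^*$} proof) forces $\lfloor 2^h/(60\log^2 r)\rfloor \geq \tfrac{1}{2}\cdot 2^h/(60\log^2 r)$. Combining with Corollary~\ref{corollary:number-proper-colors-poly} and the fact that Algorithm~\ref{algo:pseudocode-polymatroid-polynomial} is executed with probability $\tfrac{1}{3}$, we obtain
\[
\mathbb{E}[\texttt{Alg}_2^*(f)] \geq \tfrac{1}{3}\,\mathbb{E}[\texttt{Alg}_2(f)] \geq \tfrac{1}{6}\lfloor 2^h/(60\log^2 r)\rfloor \geq \tfrac{1}{12}\cdot \tfrac{2^h}{60\log^2 r} \geq \tfrac{1}{1440\log^2 r}\, k^* \geq \tfrac{1}{1440\log^2 r}\, \opt(f),
\]
where the last step uses Theorem~\ref{thm:CCV-k^*}.

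There is no real obstacle: the work of verifying that $\eta_t$ behaves like $q_t$ up to an $r$ factor has already been done in Lemma~\ref{lemma:eta-approximation}, which is the reason Corollary~\ref{corollary:number-proper-colors-poly} holds in exactly the same form as its Algorithm~\ref{algo:pseudocode-polymatroid} counterpart. Consequently, the combined algorithm analysis is literally the same three-case calculation as for Theorem~\ref{theorem:polymatroid-combined-ratio}, with \texttt{Alg$_1$} replaced by \texttt{Alg$_2$} and Corollary~\ref{corollary:number-proper-colors} replaced by Corollary~\ref{corollary:number-proper-colors-poly}, yielding the claimed $O(\log^2 r)$ competitive ratio.
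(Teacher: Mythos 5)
Your proposal matches the paper's proof essentially verbatim: the same three-case split on $r$ and $k^*$, the same use of Theorem~\ref{thm:CCV-k^*} in the first two cases, and the same chain of inequalities via Corollary~\ref{corollary:number-proper-colors-poly} and the floor-function bound in the main case. The only thing the paper adds is the closing remark that the algorithm runs in polynomial time by Theorem~\ref{theorem:strength-decomposition}, but that is not part of the competitive-ratio argument itself.
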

\begin{proof}
    Suppose $r=1$. Then, each singleton forms a base, which implies that $\opt(f)=n$. We recall that the algorithm assigns $C(e_t)=t$ for every element $e_t$ with probability $\frac{1}{3}$. This implies that
    $$\begin{aligned}
        \mathbb{E}[\texttt{Alg$_2^*$}(f)] \geq \frac{1}{3}\cdot \opt(f).
    \end{aligned}$$

    Suppose $r\geq 2$ and $k^*< 120\log^2 r$. Then, the optimum is also smaller than $120\log^2 r$ by Theorem~\ref{thm:CCV-k^*}. We recall that the algorithm assigns $C(e_t)=1$ for every element $e_t$ with probability $\frac{1}{3}$. This implies that
    $$\begin{aligned}
        \mathbb{E}[\texttt{Alg$_2^*$}(f)] \geq \frac{1}{3} > \frac{1}{360\log^2 r}\cdot \opt(f).
    \end{aligned}$$

    Now, we may assume that $k^*\geq 120\log^2 r$ and $r\geq 2$. We recall that the algorithm \texttt{Alg$^*$} runs Algorithm~\ref{algo:pseudocode-polymatroid-polynomial} with probability $\frac{1}{3}$. We have
    $$2^h/ (60\cdot \log^2 r)> \frac{1}{60}\cdot \frac{k^*}{2\log^2 r}\geq 1,$$ which implies that
    \begin{equation}\label{inequality:trunc-integer-poly}
        \lfloor 2^h/ (60\cdot \log^2 r)\rfloor \geq \frac{1}{2}\cdot 2^h/ (60\cdot \log^2 r).
    \end{equation}
    Hence, we have
    $$\begin{aligned}
        \mathbb{E}[\texttt{Alg$_2^*$}(f)]&\geq \frac{1}{3}\cdot \mathbb{E}[\texttt{Alg$_2$}(f)] \\
        &\geq \frac{1}{6}\cdot \lfloor 2^h/ (60\cdot \log^2 r)\rfloor \ \ \text{(by Corollary~\ref{corollary:number-proper-colors-poly})}\\
        &\geq \frac{1}{12}\cdot 2^h/ (60\cdot \log^2 r) \ \ \text{(by inequality~(\ref{inequality:trunc-integer-poly}))}\\
        &\geq \frac{1}{1440\cdot \log^2r}\cdot k^* \\
        &\geq \frac{1}{1440\cdot \log^2r}\cdot \opt(f). \ \ \text{(by Theorem~\ref{thm:CCV-k^*})}
    \end{aligned}$$
\end{proof}

Theorem~\ref{theorem:approximation-poly} completes the proof of Theorem~\ref{main-result:polymatroid} since Algorithm~\ref{algo:pseudocode-polymatroid-polynomial} and Algorithm \texttt{Alg$_2^*$} can be implemented in polynomial time by Theorem~\ref{theorem:strength-decomposition}.
\section{Simpler Algorithm for Online \DCSS}\label{section:hypergraph-faster-algo}
In this section, we consider \DCSS. The input here is a connected hypergraph $G=(V, E)$ where each hyperedge $e\in E$ is a non-empty subset of $V$. We use $n:=|V|$ and $m:=|E|$ to denote the number of vertices and hyperedges in the hypergraph respectively.
We note that $E$ is a multi-set, i.e., it can contain multiple copies of the same hyperedge. We recall that the hypergraph $G=(V, E)$ can equivalently be represented as a bipartite graph $G'=(V\cup E, E')$, where we have a node for each vertex $u\in V$ and each hyperedge $e\in E$ and an edge between nodes $u\in V$ and $e\in E$ if the hyperedge $e$ contains the vertex $u$. A hypergraph is \emph{connected} if its bipartite representation is connected. A subhypergraph of $G$ is a subset of hyperedges $F\subseteq E$ and it is said to be \emph{spanning} if $\cup_{e\in F}e=V$ and \emph{connected} if $(V, F)$ is connected.
The goal is to find a maximum number of disjoint connected spanning subhypergraphs. We define $\opt(G):=\max\{k: \exists \ k \ \text{disjoint spanning hypergraphs}\}$. We note that \DCSS is a special case of \DPB with the polymatroid $f:2^E\rightarrow \Z_{\ge 0}$ of interest being defined as $f(A):=|V|-c(V,A)$ for every $A\subseteq E$, where $c(V,A)$ is the number of components in the bipartite representation of the hypergraph $(V,F)$.

In the online model for \DCSS, the set $V$ of nodes is known in advance while the hyperedges $E$ arrive in an online fashion. For a color assignment $C: E \rightarrow \mathbb{Z}_+$, we say that a color $c\in \Z_+$ is a \emph{connected spanning color} if the set  of hyperedges with color $c$ is spanning and connected. An online algorithm has to color each hyperedge immediately upon arrival irrevocably in order to maximize the number of connected spanning colors. For an online algorithm $\texttt{Alg}$, we denote the number of connected spanning colors obtained by an algorithm on the hypergraph $G$ by $\texttt{Alg}(G)$. 

Theorem~\ref{main-result:polymatroid} implies a randomized online algorithm for this problem with competitive ratio $O(\log^2 n)$ that runs in polynomial time. We recall that Algorithm~\ref{algo:pseudocode-polymatroid-polynomial} needs to compute a strength decomposition of the current hypergraph upon the arrival of each hyperedge. This step is computationally expensive since strength decompositions are obtained via reductions to submodular minimization (in the context of \DCSS, we need repeated application of max-flow). 
We also recall that Algorithm \ref{algo:pseudocode-polymatroid} needs to compute a minimum $s$-$t$ cut of the current hypergraph upon the arrival of each hyperedge, so we need to solve max-flow at each timestep. 
In this section, we design a simple and fast randomized online algorithm with the same competitive ratio for \DCSS. 

We introduce necessary notations and background in Section~\ref{subsection:hypergraph-prelim}. Then, we design an $O(\log^2 n)$ competitive algorithm under the assumption that $\opt(G)\geq 80\cdot \log^2 n$. We describe the algorithm and prove the structural property in Section~\ref{subsection:algorithm-property}. We prove the structural property that focusing on hyperedges with good estimate does not decrease the minimum cut value by more than a constant factor in Section~\ref{section:hypergraph-structural-property}. We analyze the competitive ratio of the algorithm in Section~\ref{subsection:raio-analysis}. For Sections~\ref{subsection:algorithm-property}, \ref{section:hypergraph-structural-property}, and \ref{subsection:raio-analysis}, we assume that $\opt(G)\geq 80\cdot \log^2 n$. We discuss in Section~\ref{subsection:mixed-algorithm-hypergraph} how to relax the assumption by combining the designed algorithm with algorithms for the other case to complete the proof of Theorem~\ref{main-result:hypergaph}.

\subsection{Preliminaries}\label{subsection:hypergraph-prelim}
Let $G=(V,E)$ be a connected hypergraph. Let $n:=|V|$ and $m:=|E|$. For a subset $S\subseteq V$ of vertices, let $\delta(S)$ denote the set of hyperedges intersecting both $S$ and $V\setminus S$. The \emph{value of a global minimum cut of $G$}, denoted $\lambda(G)$, is the minimum number of hyperedges whose removal disconnects $G$, i.e., $\lambda(G):=\min_{\emptyset\neq S\subsetneq V} |\delta(S)|$. We note that $\opt(G)\leq \lambda(G)$ since for every $\emptyset\neq S\subsetneq V$, every connected spanning color $c$ should contain at least one hyperedge in $\delta(S)$. We recall that \DCSS for the input hypergraph $G$ is a special case of \DPB with the polymatroid $f:2^E\rightarrow \Z_{\ge 0}$ of interest being defined as $f(A):=|V|-c(V,A)$ for every $A\subseteq E$, where $c(V,A)$ is the number of components in the bipartite representation of the hypergraph $(V,F)$. We recall that $f(E)=n-1$ since $G$ is connected. Moreover, bases of $f$ correspond to connected spanning subhypergraphs of $G$. We define the \emph{weak partition connectivity of $G$} as 
\[
k^*_G:=k^*(f)= \min_{A\subseteq E:\ f(A)<f(E)}\left\lfloor \frac{\sum_{e\in E}(f(A+e)-f(A))}{f(E)-f(A)}\right\rfloor. 
\]
Weak partition connectivity was studied in hypergraph literature owing to its applications to hypergraph orientation problems \cite{Frank-book}. The terminology of weak partition connectivity defined in the literature is equivalent to the one defined here. 
By applying Lemma~\ref{lemma:sampling-property} to the polymatroid $f$, we have the following lemma. 

\begin{lemma}\label{lemma:sampling-property-hypergraph}
    Let $G=(V, E)$ be a connected hypergraph, $n:=|V|$, and $p:=\min\left\{1, \frac{2\log n}{k^*_G}\right\}$. 
    Let $H=(V,F)$ be a random subhypergraph of $G$ whose hyperedges $F$ are obtained by picking each hyperedge in $E$ with probability at least $p$ independently at random. Then, $H$ is a connected spanning subhypergraph of $G$ with probability at least $\frac{1}{2}$.
\end{lemma}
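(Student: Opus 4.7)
The plan is to derive this lemma as a direct corollary of Lemma~\ref{lemma:sampling-property} applied to the \DCSS polymatroid $f:2^E\to\Z_{\ge 0}$ defined by $f(A):=|V|-c(V,A)$, which was already identified in the \DCSS setup as the relevant polymatroidal instance.

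First I would pin down the translation between the hypergraph formulation of the lemma and the polymatroid formulation of Lemma~\ref{lemma:sampling-property}. Specifically: (i) $f$ is a polymatroid with $f(\emptyset)=0$; (ii) since $G=(V,E)$ is connected, $f(E)=n-1$, so the parameter $r$ in Lemma~\ref{lemma:sampling-property} equals $n-1$; (iii) a subset $F\subseteq E$ is a base of $f$ precisely when $f(F)=n-1$, which is equivalent to $(V,F)$ being a connected spanning subhypergraph of $G$; and (iv) the quantity $k^*_G$ defined just above the lemma coincides with $k^*(f)$ by construction.

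Next I would directly invoke Lemma~\ref{lemma:sampling-property} on the polymatroid $f$. It tells us that if each element of $E$ is included independently with probability at least $p':=\min\{1,2\log(n-1)/k^*_G\}$, then the resulting subset is a base of $f$ with probability at least $\tfrac12$, i.e., yields a connected spanning subhypergraph of $G$ with probability at least $\tfrac12$. Since $\log n \ge \log(n-1)$, the threshold $p=\min\{1,2\log n/k^*_G\}$ in our lemma satisfies $p\ge p'$. By a standard coupling---writing the sample at rate $p'$ as a further subsampling of the sample at rate $p$---sampling each hyperedge with probability at least $p$ can only increase the chance of obtaining a base, so $H$ is connected and spanning with probability at least $\tfrac12$.

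There is essentially no deep obstacle here: the entire argument is a transcription of Lemma~\ref{lemma:sampling-property} into hypergraph language, plus the easy observation $\log n\ge \log(n-1)$ to go from $r=n-1$ to $n$ in the sampling rate. The only bookkeeping is the boundary case $n\le 2$ where the hypothesis $r\ge 2$ of Lemma~\ref{lemma:sampling-property} fails; these can be dealt with directly ($n=1$ is vacuous, and $n=2$ is an easy direct check once one notes that any single hyperedge touching both vertices already makes $H$ connected spanning, so $p=1$ suffices).
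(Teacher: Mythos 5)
Your proposal is correct and matches the paper's approach: the paper's "proof" of this lemma is exactly the one-line observation that it follows from Lemma~\ref{lemma:sampling-property} applied to the \DCSS polymatroid $f$. Two small remarks: (a) the coupling argument you give to pass from rate $p'$ to rate $p$ is unnecessary, since Lemma~\ref{lemma:sampling-property} is already stated for sampling probability \emph{at least} $p'$ and you have $p\ge p'$; (b) your treatment of the $n=2$ case is slightly off—when $n=2$ the formula gives $p=\min\{1,2/k^*_G\}$, which need not be $1$—but the bound still holds because the failure event is sampling zero hyperedges, whose probability is at most $(1-p)^{k^*_G}\le e^{-2}<1/2$. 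The paper itself does not spell out the $r<2$ boundary cases, so your attention to them is a minor improvement in rigor.
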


We also need the following relationship between weak partition connectivity and global min-cut in hypergraphs: 

\begin{lemma}\cite{CCV09}\label{lemma:CCV-hypergraph}
    $\lambda(G)/2 \leq k^*_G \leq \lambda(G)$.
\end{lemma}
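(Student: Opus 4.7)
\textbf{Proof plan for Lemma~\ref{lemma:CCV-hypergraph}.}
The plan is to handle the two inequalities separately, using the observation that for any $A \subseteq E$ with components $V_1,\ldots,V_k$ of $(V,A)$ one has $f(E)-f(A) = k-1$ and, for each $e \in E$, $f_A(e) = t_e - 1$, where $t_e$ is the number of components $V_1,\ldots,V_k$ that $e$ intersects. In particular, $f_A(e) = 0$ for every $e \in A$ since $e$ lies inside a single component.

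For the upper bound $k^*_G \le \lambda(G)$, I would pick a non-empty $S \subsetneq V$ witnessing the minimum cut, i.e.\ $|\delta(S)| = \lambda(G)$, and choose $A := E \setminus \delta(S)$. Since no hyperedge of $A$ crosses $S$, the hypergraph $(V,A)$ has at least two components, so $f(A) < f(E)$ is a legitimate candidate in the definition of $k^*_G$. The marginals $f_A(e)$ vanish for $e \in A$, and each remaining $e \in \delta(S)$ satisfies $f_A(e) \le f(E) - f(A)$ by monotonicity, giving
\[
\frac{\sum_{e \in E} f_A(e)}{f(E)-f(A)} \;\le\; \frac{|\delta(S)| \cdot (f(E)-f(A))}{f(E)-f(A)} \;=\; \lambda(G),
\]
which bounds $k^*_G$ from above after taking the floor.

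For the lower bound $k^*_G \ge \lambda(G)/2$, I would fix an arbitrary $A$ with $f(A) < f(E)$ and let $V_1,\ldots,V_k$ be the components of $(V,A)$, where $k \ge 2$. The key combinatorial identity is
\[
\sum_{i=1}^{k} |\delta(V_i)| \;=\; \sum_{e \in E,\; t_e \ge 2} t_e,
\]
since a hyperedge $e$ appears in $\delta(V_i)$ exactly for those components $V_i$ that it meets, provided it meets at least two components. Each $V_i$ is a proper non-empty subset of $V$, so $|\delta(V_i)| \ge \lambda(G)$, and hence the right-hand sum is at least $k\lambda(G)$. Using $t_e - 1 \ge t_e/2$ whenever $t_e \ge 2$, this yields
\[
\sum_{e \in E} f_A(e) \;=\; \sum_{e:\, t_e \ge 2} (t_e - 1) \;\ge\; \tfrac{1}{2}\sum_{e:\, t_e \ge 2} t_e \;\ge\; \tfrac{k \lambda(G)}{2}.
\]
Dividing by $f(E)-f(A) = k-1$ and using $k/(k-1) \ge 1$ gives the ratio at least $\lambda(G)/2$, which is the desired bound on $k^*_G$.

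The potential obstacle is really the combinatorial counting of $\sum_i |\delta(V_i)|$ in the hypergraph setting: one must be careful that a hyperedge is counted once for each component it enters (including possibly singleton components formed by vertices isolated in $(V,A)$) and that hyperedges with $t_e = 1$ contribute zero both to the numerator and to the cut-sum. Once that double-counting argument is in place, together with the slack $t_e - 1 \ge t_e/2$, both inequalities follow cleanly.
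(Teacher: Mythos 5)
The paper cites Lemma~\ref{lemma:CCV-hypergraph} from \cite{CCV09} and does not include its own proof, so there is no in-paper argument to compare against. Your derivation is essentially the standard one and the combinatorics are right: the identity $\sum_i |\delta(V_i)| = \sum_{e:\,t_e\ge 2} t_e$ for the components $V_1,\ldots,V_k$ of $(V,A)$, the bound $|\delta(V_i)|\ge\lambda(G)$, and the elementary inequality $t_e-1\ge t_e/2$ for $t_e\ge 2$ give $\sum_e f_A(e)\ge k\lambda(G)/2$, while the upper bound via $A=E\setminus\delta(S)$ for a minimum cut $S$ is clean.

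One wrinkle worth flagging concerns the floor in the paper's definition. The paper sets $k^*_G=\min_A\left\lfloor \frac{\sum_e f_A(e)}{f(E)-f(A)}\right\rfloor$, whereas your lower-bound argument establishes the \emph{unfloored} inequality $\frac{\sum_e f_A(e)}{f(E)-f(A)}\ge\lambda(G)/2$ for every valid $A$. These are not interchangeable: the floored minimum can in fact drop below $\lambda(G)/2$. A concrete example is $G=K_{3,3}$ (a graph, hence a hypergraph): $\lambda(G)=3$, but taking $A=\emptyset$ gives $\sum_e f_A(e)=9$, $f(E)-f(A)=5$, ratio $9/5=1.8$, so $k^*_G\le\lfloor 1.8\rfloor = 1 < 3/2 = \lambda(G)/2$. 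Thus under the paper's floored convention the stated lower bound is literally false; the correct floored statement is $\lfloor\lambda(G)/2\rfloor\le k^*_G$, and the clean inequality $\lambda(G)/2\le\cdot$ holds only for the unfloored ratio, which is what \cite{CCV09} actually bounds. This is really a mismatch between the paper's definition of $k^*$ and the imported statement rather than a defect in your counting argument, and it has no effect on the asymptotic use of the lemma downstream, but the final line of your proof should not silently discard the floor: you either need to prove the (sometimes false) bound $\left\lceil\lambda(G)/2\right\rceil$ on the ratio, or state the conclusion as $\lfloor\lambda(G)/2\rfloor\le k^*_G$.
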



For every integer $t\in [m]$, let $e_t$ be the hyperedge arriving at time $t$. Define $E_t:=\{e_1, \ldots, e_t\}$ as the set of hyperedges that have arrived until time $t$. For a subset $S\subseteq V$ of nodes, let 
\[
E_t(S):=\{e\in E_t: S\subseteq e\}
\]
be the set of hyperedges containing all vertices in $S$ that have arrived until time $t$. Let $\delta_t(S)$ be the set of hyperedges in $\delta(S)$ that have arrived until time $t$.

\subsection{Algorithm Description}\label{subsection:algorithm-property}
We recall that $\opt(G)\geq 80\cdot \log^2 n$. The algorithm works as follows: For every $t\in [m]$, when the hyperedge $e_t$ arrives, the algorithm computes $$\eta_t:=\min_{\{u,v\}\in e_t}|E_t(\{u,v\})|.$$ Equivalently, $\eta_t$ is the min over all pairs of vertices in the current hyperedge of the number of hyperedges containing the pair in the current hypergraph. Let $\ell_t:=\lceil \log \eta_t\rceil$. Then, the algorithm samples variable $R_t$ from $[\ell_t, \ell_t+2\lceil\log n\rceil]$ uniformly at random. Finally, the color of the hyperedge $C(e_t)$ is assigned to be an integer picked uniformly at random from $[\lfloor 2^{R_t}/(40\cdot \log ^2n) \rfloor]$. We give a pseudocode of the algorithm in Algorithm~\ref{algo:pseudocode-hypergraph}. We note that $|E_t(\{u,v\})|$ for every $u,v \in V$ can be maintained in $O(|e_t|^2)$ time in the $t$-th timestep, which implies that Algorithm~\ref{algo:pseudocode-hypergraph} takes $O(|e_t|^2)$ time in the $t$-th timestep.

\begin{algorithm2e}[ht]
\caption{Randomized Online Algorithm for \DCSS}
\label{algo:pseudocode-hypergraph}
\SetKwInput{KwInput}{Input}                
\SetKwInput{KwOutput}{Output}              
\LinesNumbered

\DontPrintSemicolon
  
    \KwInput{a hypergraph $G=(V,E)$ with $n$ vertices and $m$ hyperedges; an ordering $e_1, e_2, \ldots, e_m$ of set $E$ where $e_t$ is the hyperedge that arrives at time $t$ for each $t\in [n]$.}
    \KwOutput{color assignment $C: E \rightarrow \mathbb{Z}_+$.}
    
    \SetKwProg{When}{When}{:}{}
    \When{the hyperedge $e_t$ arrives \ }{
        $\eta_t\gets \min\limits_{\{u,v\}\subseteq e_t}|E_t(\{u,v\})|$
        
        $\ell_t\gets \lceil \log \eta_t \rceil$
        
        Sample $R_t$ $\textit{u.a.r.}$ from $[\ell_t, \ell_t+2\lceil \log n \rceil]$

        Sample $C(e_t)$ $\textit{u.a.r.}$ from $[\lfloor 2^{R_t}/(40\cdot \log ^2n) \rfloor]$

        \Return $C(e_t)$
    }
\end{algorithm2e}

\subsection{Structural Property}\label{section:hypergraph-structural-property}
For every $t\in [m]$, we define a hyperedge $e_t$ to be \emph{good} if $\frac{\lambda(G)}{n^2}\leq \eta_t\leq \lambda(G)$ and \emph{bad} otherwise. Let $E_{\text{good}}\subseteq E$ be the collection of all good hyperedges in $G$. Let $G_{\good}:=(V,E_{\text{good}})$ be the subhypergraph of $G$ containing the good hyperedges. 
We show that $G_{\good}$ is connected and moreover, the minimum cut value of $G_{\good}$ is at least a constant fraction of of the minimum cut value of $G$.

\begin{restatable}{lemma}{lemmaGoodOpthypergraph}\label{lemma:new-opt-hypergraph}
We have that 
\begin{enumerate}
    \item $G_{\good}$ is connected and 
    \item $\lambda(G_{\good})\geq \frac{1}{2}\lambda(G)$. 
\end{enumerate}
\end{restatable}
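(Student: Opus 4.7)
The plan is to reduce both conclusions to a single local statement: for every nontrivial cut $\emptyset \neq S \subsetneq V$,
\[
|\delta(S) \cap E_{\good}| \geq \lambda(G)/2.
\]
This immediately yields $\lambda(G_{\good}) \geq \lambda(G)/2$ since cuts of $G_{\good}$ at $S$ are exactly $\delta(S) \cap E_{\good}$, and it also gives connectivity, because our running assumption $\opt(G) \geq 80 \log^2 n$ combined with $\lambda(G) \geq \opt(G)$ forces $\lambda(G)/2 \geq 1$.

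I would split the bad hyperedges into two groups, $E_{\text{small}} := \{e_t : \eta_t < \lambda(G)/n^2\}$ and $E_{\text{large}} := \{e_t : \eta_t > \lambda(G)\}$, and bound their intersection with $\delta(S)$ separately. For $E_{\text{small}}$, I would use a purely global counting argument: for every unordered pair $\{u,v\}$, the set of timesteps $t$ with $u,v \in e_t$ and $|E_t(\{u,v\})| < \lambda(G)/n^2$ has size at most $\lambda(G)/n^2$, since the counter $|E_t(\{u,v\})|$ increments by one on each such arrival. Every $t \in E_{\text{small}}$ is witnessed by at least one such pair, so summing over all $\binom{n}{2}$ pairs gives $|E_{\text{small}}| \leq \binom{n}{2} \cdot \lambda(G)/n^2 < \lambda(G)/2$, which trivially bounds $|\delta(S) \cap E_{\text{small}}|$.

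For $E_{\text{large}}$, I would exploit the cut-specific observation that if $e_t \in \delta(S)$, then any pair $\{u,v\} \subseteq e_t$ with $u \in S$ and $v \in V \setminus S$ satisfies $E_t(\{u,v\}) \subseteq \delta_t(S)$, so $\eta_t \leq |E_t(\{u,v\})| \leq |\delta_t(S)|$. Ordering the hyperedges of $\delta(S)$ by arrival time as $e_{t_1}, \dots, e_{t_K}$ with $K = |\delta(S)| \geq \lambda(G)$, we have $|\delta_{t_i}(S)| = i$, so for each $i \leq \lambda(G)$ we get $\eta_{t_i} \leq \lambda(G)$ and hence $e_{t_i} \notin E_{\text{large}}$. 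This yields $|\delta(S) \cap E_{\text{large}}| \leq K - \lambda(G) = |\delta(S)| - \lambda(G)$. Combining the two bounds gives $|\delta(S) \cap E_{\bad}| \leq |\delta(S)| - \lambda(G)/2$, and the desired inequality follows.

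The main obstacle is the $E_{\text{large}}$ case, where a global count is insufficient because a hyperedge can have $\eta_t$ far above $\lambda(G)$ due to many parallel co-occurrences at non-min cuts. The crucial insight that resolves this is that the bound $\eta_t \leq |\delta_t(S)|$ is tight enough when combined with the ordering of arrivals within $\delta(S)$: the first $\lambda(G)$ members of $\delta(S)$ simply cannot be too large by virtue of how few cut-edges have accumulated so far. This order-dependent, per-cut accounting is what makes the argument local in $S$ rather than global.
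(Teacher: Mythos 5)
Your proof is correct and follows essentially the same approach as the paper's. The two key observations match: (a) the $i$-th hyperedge of $\delta(S)$ to arrive has $\eta_{t_i}\le i$ because $E_{t_i}(\{u,v\})\subseteq\delta_{t_i}(S)$ for any cut-crossing pair $\{u,v\}\subseteq e_{t_i}$, so the first $\lambda(G)$ members of $\delta(S)$ cannot fail by being too large; and (b) the monotone per-pair counter $|E_t(\{u,v\})|$ limits the number of ``too small'' hyperedges via a pair-witness count, summing to fewer than $\lambda(G)/2$. The only organizational difference is that you bound $|E_{\mathrm{small}}|$ globally and then intersect with $\delta(S)$, whereas the paper restricts to the first $\lambda(G)$ elements $F\subseteq\delta(S)$, observes all bad ones there are small, and counts distinct witnesses $(\{u,v\},\mathtt{val})$ within $F$; both give the identical bound and the same conclusion.
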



\begin{proof}
We will show that $\lambda(G_{\good})\ge \frac{1}{2}\lambda(G)$. This would imply that $G_{\good}$ is connected since $$\lambda(G_{\good})\geq \frac{\lambda(G)}{2} \geq \frac{\opt(G)}{2} \geq 40\cdot \log^2n \ge 1.$$
The second inequality above is by Lemma \ref{lemma:CCV-hypergraph} and the third inequality is because $\opt(G)\ge 80\log^2{n}$. 

Let $\emptyset\neq S\subsetneq V$. We will show that $|\delta_{G_{\good}}(S)|\ge \lambda(G)/2$. 
    Let $e_{t_1}, e_{t_2}, \ldots, e_{t_k}$ be the hyperedges in $\delta(S)$ with $t_1<t_2<\ldots<t_k$, where $k=|\delta(S)|\geq \lambda(G)$. We let $F:=\{e_{t_i}:1\leq i \leq \lambda(G)\}$ be the first $\lambda(G)$ hyperedges from $\delta(S)$. We now give an upper bound on the number of bad hyperedges in $F$.
    
    For every $i\in [k]$, we recall that $\eta_{t_i}=\min\limits_{\{u,v\}\subseteq e_{t_i}}|E_{t_i}(\{u,v\})|$, which is at most $i$. Hence, every hyperedge in $F$ has its parameter $\eta_{t_i}\leq i \leq k \leq \lambda(G)$. This implies that every bad hyperedge $e_t\in F$ satisfies $\eta_t<\frac{\lambda(G)}{n^2}$. For every two different vertices $u, v \in V$ and positive integer $\texttt{val}<\frac{\lambda(G)}{n^2}$, we call the pair $(\{u, v\}, \texttt{val})$ a \emph{witness} of a bad hyperedge $e_t\in F$ if $\{u,v\}\subseteq e_t$ and $|E_t(\{u,v\})|=\eta_t=\texttt{val}$. Hence, the total number of possible distinct witnesses is smaller than $$|\{\{u,v\}: u,v\in V\}|\cdot \frac{\lambda(G)}{n^2}\leq \frac{n^2}{2}\cdot \frac{\lambda(G)}{n^2}=\frac{\lambda(G)}{2}.$$

    We note that every bad hyperedge in $F$ has at least one witness. For every two different bad hyperedges $e_{t_i}, e_{t_j} \in F$ with $t_i<t_j$, if they share a same witness $(u,v,\texttt{val})$, then we have 
    $$|E_{t_i}(\{u,v\})|=\texttt{val}=|E_{t_j}(\{u,v\})|,$$
    which leads to a contradiction since $|E_{t_i}(\{u,v\})|$ must be strictly smaller than $|E_{t_j}(\{u,v\})|$ by definition. Thus, no two bad hyperedges can share the same witness, which implies that the number of bad hyperedges in $F$ does not exceed the total number of possible distinct witnesses, which is smaller than $\frac{\lambda(G)}{2}$. Therefore, the number of good hyperedges in $F$ is at least $|F|-\frac{\lambda(G)}{2}=\frac{\lambda(G)}{2}$. This shows that $|\delta_{G_{\good}}(S)|\ge \lambda(G)/2$, which further implies that $\lambda(G_{\good})\ge \frac{1}{2}\lambda(G)$.
    
\end{proof}

\subsection{Competitive Ratio Analysis}\label{subsection:raio-analysis}

Now, we analyze the competitive ratio of Algorithm~\ref{algo:pseudocode-hypergraph}. Let $q:=\lfloor \log \lambda(G)\rfloor$. We note that $\lambda(G)\geq 2^q> \lambda(G)/2$. We now prove that every color $c \in [\lfloor 2^q/(40\cdot \log ^2n) \rfloor]$ is a connected spanning color with constant probability.

\begin{lemma}\label{lemma:hypergraph-proper-color-probability}
    Let $c \in [\lfloor 2^q/(40\cdot \log ^2n) \rfloor]$. Then, $\mathbf{Pr}_{\texttt{Alg$_3$} }[c \text{ is a connected spanning color}]\geq \frac{1}{2}$.
\end{lemma}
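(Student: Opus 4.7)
The proof will mirror Lemma~\ref{lemma:polymatroid-proper-color-probability} and Lemma~\ref{lemma:proper-color-probability-poly}. The high-level plan is to apply Lemma~\ref{lemma:sampling-property-hypergraph} to the good subhypergraph $G_{\good}$. By Lemma~\ref{lemma:new-opt-hypergraph}, $G_{\good}$ is connected and satisfies $\lambda(G_{\good})\ge \lambda(G)/2$, so any connected spanning subhypergraph of $G_{\good}$ is automatically a connected spanning subhypergraph of $G$; in particular, the hyperedges colored $c$ in $G$ are a superset of those colored $c$ in $G_{\good}$. Combining this with Lemma~\ref{lemma:CCV-hypergraph} applied to $G_{\good}$, we obtain $k^*_{G_{\good}}\ge \lambda(G_{\good})/2\ge \lambda(G)/4$, or equivalently $\lambda(G)\le 4k^*_{G_{\good}}$. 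Thus it will suffice to prove that each hyperedge $e_t\in E_{\good}$ is colored $c$ with probability at least $p:=\min\{1,\,2\log n/k^*_{G_{\good}}\}$, independently across $t$, for then Lemma~\ref{lemma:sampling-property-hypergraph} applied to $G_{\good}$ yields that $c$ is a connected spanning color with probability at least $1/2$.

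For the per-hyperedge probability, fix $e_t\in E_{\good}$. Because $e_t$ is good, $\lambda(G)/n^2 \le \eta_t\le \lambda(G)$, and hence $\ell_t=\lceil \log \eta_t\rceil \in [q-2\lceil \log n\rceil,\, q+1]$ (using $2^q\le \lambda(G)<2^{q+1}$). In particular, some value $R^*\in \{q,q+1\}$ lies inside $[\ell_t,\ell_t+2\lceil \log n\rceil]$, so $\Pr[R_t=R^*]=1/(2\lceil \log n\rceil +1)=\Omega(1/\log n)$. Since $c\le \lfloor 2^q/(40\log^2 n)\rfloor\le \lfloor 2^{R^*}/(40\log^2 n)\rfloor$, the color $c$ lies in the palette associated with $R_t=R^*$, and conditional on this event we have $\Pr[C(e_t)=c\mid R_t=R^*]\ge 40\log^2 n/2^{R^*}$. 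Combining the two factors with $2^{R^*}\le 2\lambda(G)\le 8k^*_{G_{\good}}$ from the reduction above yields $\Pr[C(e_t)=c]\ge 2\log n/k^*_{G_{\good}}$, exactly as required for Lemma~\ref{lemma:sampling-property-hypergraph}. Independence across $t$ is immediate from the fact that $R_t$ and $C(e_t)$ are drawn independently at each timestep.

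The main obstacle will be verifying the constants tightly enough to obtain the precise threshold $2\log n/k^*_{G_{\good}}$ rather than a weaker $\Omega(\log n/k^*_{G_{\good}})$, since the bound $\lambda(G)\le 4k^*_{G_{\good}}$ loses a factor of four and the ceilings inside $\ell_t$, $R_t$, and $2\lceil\log n\rceil+1$ each introduce small multiplicative slack. The specific constants in Algorithm~\ref{algo:pseudocode-hypergraph} (the factor $40$ in the palette denominator and the width $2\lceil \log n\rceil+1$ of the $R_t$-range) are precisely calibrated so that this accumulated slack is absorbed; once that bookkeeping is completed, the remainder of the argument is identical in form to Lemma~\ref{lemma:proper-color-probability-poly}.
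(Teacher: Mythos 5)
Your approach is structurally identical to the paper's: reduce to $G_{\good}$ via Lemma~\ref{lemma:new-opt-hypergraph}, lower bound $k^*_{G_{\good}}$ via Lemma~\ref{lemma:CCV-hypergraph}, estimate the per-hyperedge probability of receiving color $c$, and conclude via Lemma~\ref{lemma:sampling-property-hypergraph}. There is, however, a discrepancy in the upper bound on $\ell_t=\lceil\log\eta_t\rceil$. You write (correctly) that $\eta_t\le \lambda(G)<2^{q+1}$ only gives $\ell_t\le q+1$, and you account for this by allowing $R^*\in\{q,q+1\}$. The paper instead asserts $\lceil\log\eta_t\rceil\le q$, which would require $\eta_t\le 2^q$; since a good hyperedge only guarantees $\eta_t\le\lambda(G)$ and $\lambda(G)$ can exceed $2^q$ when it is not a power of two, the paper's claim is not justified in general, and your bound is the correct one.

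The gap is that your constants do not close. Once $R^*=q+1$ is possible, the factor $2^{R^*}$ is bounded only by $2^{q+1}\le 8k^*_{G_{\good}}$ rather than by $2^q\le 4k^*_{G_{\good}}$, so your per-hyperedge estimate is
\[
\mathbf{Pr}[C(e_t)=c]\ \ge\ \frac{1}{2\lceil\log n\rceil+1}\cdot\frac{40\log^2 n}{2^{R^*}}\ \ge\ \frac{5\log^2 n}{\bigl(2\lceil\log n\rceil+1\bigr)\,k^*_{G_{\good}}},
\]
and reaching the threshold $2\log n/k^*_{G_{\good}}$ required by Lemma~\ref{lemma:sampling-property-hypergraph} would need $5\log n\ge 4\lceil\log n\rceil+2$, which fails for many small $n$ (e.g.\ $n=5,9,17,33$). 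The assertion that ``the specific constants \dots are precisely calibrated so that this accumulated slack is absorbed'' is therefore not established: the factor $40$ in Algorithm~\ref{algo:pseudocode-hypergraph} is tuned to the tighter bound $\ell_t\le q$ (which the paper uses but does not actually prove), and a leftover factor of two remains in your version. This is repairable by inflating the denominator constant (e.g.\ $40\to 80$, with the downstream corollaries adjusted accordingly), but as written the final inequality in your proposal is not valid and so the last step does not go through.
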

\begin{proof}
    We recall that $G_{\good}=(V, E_{\texttt{good}})$ is the subhypergraph of $G$ containing the good hyperedges. By Lemma~\ref{lemma:new-opt-hypergraph}, the hypergraph $G_{\good}$ is connected. Moreover, by Lemma~\ref{lemma:new-opt-hypergraph}, we have that $\lambda(G_{\good})\geq \lambda(G)/2$.
    We recall that $k^*_{G_{\good}}$ is the weak partition connectivity of the hypergraph $G_{\good}$.
    By applying Lemma~\ref{lemma:CCV-hypergraph} for the hypergraph $G_{\good}$, we have $\lambda(G_{\good})/2\leq k_{G_{\good}}^*\leq \lambda(G_{\good})$, which implies that
    $$k_{G_{\good}}^*\geq \frac{\lambda(G_{\good})}{2}\geq \frac{\lambda(G)}{4}\geq \frac{2^q}{4}.$$
    
    Let $e_t$ be a hyperedge in $G_{\good}$, i.e., $e_t\in E_{\good}$. Then, we have $\frac{\lambda(G)}{n^2}\leq \eta_t\leq \lambda(G)$ and hence, $\lceil \log \eta_t \rceil \leq q\leq \lceil \log \eta_t \rceil+2\lceil \log n \rceil$. Therefore,
    \begin{equation}\label{inequality:probability-analysis-hypergraph}
        \mathbf{Pr}[R_t=q]=\frac{1}{2\lceil \log n \rceil+1}>\frac{1}{5\log n}.
    \end{equation}
    Then, for every hyperedge $e_t\in E_{\good}$, the probability that $e_t$ is colored with $c$ is
    $$\begin{aligned}
        \mathbf{Pr}[C(e_t)=c] &\geq \mathbf{Pr}[R_t=q] \cdot \mathbf{Pr}[C(e_t)=c|R_t=q] \\
        &\geq \mathbf{Pr}[R_t=q] \cdot \frac{1}{\lfloor 2^q/(40\cdot \log ^2n) \rfloor} \\
        &> \frac{1}{5\log n}\cdot \frac{40\cdot \log^2n}{2^q} \ \ \text{(by inequality~(\ref{inequality:probability-analysis-hypergraph}))} \\
        &\geq \frac{1}{5\log n}\cdot \frac{10\cdot \log^2 n}{k_{G_{\good}}^*} = \frac{2\log n}{k_{G_{\good}}^*}.
    \end{aligned}$$
    By applying Lemma~\ref{lemma:sampling-property-hypergraph} to the connected hypergraph $G_{\good}$, we obtain that $c$ is a connected spanning color with probability at least $\frac{1}{2}$. 
\end{proof}

Lemma \ref{lemma:hypergraph-proper-color-probability} implies a lower bound on the expected number of connected spanning colors obtained by Algorithm~\ref{algo:pseudocode-hypergraph}.

\begin{corollary}\label{corollary:number-proper-colors-hyperedge}
    The expected number of connected spanning colors obtained by Algorithm~\ref{algo:pseudocode-hypergraph} is at least $$\mathbb{E}[\texttt{Alg$_3$}(G)]\geq \frac{1}{2}\cdot \lfloor 2^q/(40\cdot \log ^2n) \rfloor.$$
\end{corollary}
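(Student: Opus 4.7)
The plan is to prove Corollary~\ref{corollary:number-proper-colors-hyperedge} by a direct linearity-of-expectation argument, following the same template used for Corollary~\ref{corollary:number-proper-colors} and Corollary~\ref{corollary:number-proper-colors-poly}. Let $\texttt{Alg$_3$}(G)$ denote the number of connected spanning colors produced by Algorithm~\ref{algo:pseudocode-hypergraph}. The palette from which every hyperedge samples its color is a subset of the range $[\lfloor 2^{R_t}/(40 \cdot \log^2 n)\rfloor]$, and the desired bound concerns the fixed prefix palette $[\lfloor 2^q/(40 \cdot \log^2 n)\rfloor]$, where $q = \lfloor \log \lambda(G)\rfloor$. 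So $\texttt{Alg$_3$}(G)$ is bounded below by the number of colors in this fixed prefix that become connected spanning colors.

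First, I would define the indicator random variable $X_c := \mathbf{1}[c \text{ is a connected spanning color}]$ for each color $c \in [\lfloor 2^q/(40 \cdot \log^2 n)\rfloor]$. Then $\texttt{Alg$_3$}(G) \geq \sum_{c} X_c$, and by linearity of expectation,
\[
\mathbb{E}[\texttt{Alg$_3$}(G)] \geq \sum_{c \in [\lfloor 2^q/(40 \cdot \log^2 n)\rfloor]} \Pr[c \text{ is a connected spanning color}].
\]
The core work has already been done in Lemma~\ref{lemma:hypergraph-proper-color-probability}, which guarantees that each such $c$ is a connected spanning color with probability at least $1/2$. Substituting this bound term by term yields
\[
\mathbb{E}[\texttt{Alg$_3$}(G)] \geq \tfrac{1}{2} \cdot \lfloor 2^q/(40 \cdot \log^2 n)\rfloor,
\]
which is the claim.

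There is essentially no obstacle here: the corollary is a routine summation after Lemma~\ref{lemma:hypergraph-proper-color-probability} establishes the per-color probability. The only small care is to note that colors outside the prefix palette $[\lfloor 2^q/(40 \cdot \log^2 n)\rfloor]$ can only increase $\texttt{Alg$_3$}(G)$, justifying the lower bound by summing only over the prefix palette. All the substantive difficulty—showing that the good hyperedges yield a subhypergraph with comparable min-cut (Lemma~\ref{lemma:new-opt-hypergraph}) and tuning the geometric sampling so that at color scale $2^q$ each good hyperedge is colored uniformly with the right probability—was handled earlier.
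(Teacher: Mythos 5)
Your proof is correct and follows essentially the same route as the paper: sum the per-color probability bound from Lemma~\ref{lemma:hypergraph-proper-color-probability} over the prefix palette $[\lfloor 2^q/(40\cdot \log^2 n)\rfloor]$ via linearity of expectation. Your explicit observation that colors outside this prefix can only increase the count (hence the $\geq$) is a small but welcome bit of care that the paper leaves implicit.
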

\begin{proof}
By Lemma~\ref{lemma:hypergraph-proper-color-probability}, we have that 
    $$\begin{aligned}
        \mathbb{E}[\texttt{Alg$_3$}(G)] &= \sum_{c\in [\lfloor 2^q/ (40\cdot \log^2 n)\rfloor]} \mathbf{Pr}_{\texttt{Alg$_3$} }[c \text{ is a connected spanning color}] 
        &\ge \frac{1}{2}\cdot \lfloor 2^q/ (40\cdot \log^2 n)\rfloor.
    \end{aligned}$$
\end{proof}

\subsection{Combined Algorithm}\label{subsection:mixed-algorithm-hypergraph}
We recall that our analysis of Algorithm~\ref{algo:pseudocode-hypergraph} assumed that $\opt(G)\geq 80\cdot \log^2 n$. We now combine Algorithm~\ref{algo:pseudocode-hypergraph} with another algorithm to address all ranges of $\opt(G)$.

Consider the online algorithm \texttt{Alg$_3^*$} that runs Algorithm~\ref{algo:pseudocode-hypergraph} with probability $\frac{1}{2}$ and assigns $C(e)=1$ for every hyperedge $e$ with probability $\frac{1}{2}$. We now show that the online algorithm \texttt{Alg$_3^*$} has competitive ratio $O(\log^2 n)$.

\begin{theorem}
    Algorithm \texttt{Alg$_3^*$} has pure competitive ratio $O(\log^2 n)$.
\end{theorem}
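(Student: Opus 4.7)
The plan is to mirror the structure of the proof of Theorem~\ref{theorem:approximation-poly} (the combined-algorithm theorem for \DPB), splitting on the size of $\opt(G)$ so that the two mixed components of \texttt{Alg$_3^*$} each cover one regime. Recall that \texttt{Alg$_3^*$} runs Algorithm~\ref{algo:pseudocode-hypergraph} with probability $\tfrac12$ and otherwise assigns every hyperedge the color $1$. In either regime, the expected number of connected spanning colors is at least half of what the selected branch produces.

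First I would handle the \emph{easy} regime $\opt(G) < 80\log^2 n$. Here, since $G$ is connected, the all-ones coloring trivially produces exactly one connected spanning color. Taking the probability $\tfrac12$ that this branch is chosen gives
\[
\mathbb{E}[\texttt{Alg$_3^*$}(G)] \;\geq\; \tfrac12 \;\geq\; \frac{1}{160\log^2 n}\cdot \opt(G),
\]
which is an $O(\log^2 n)$-competitive bound in this regime.

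Next I would handle the \emph{hard} regime $\opt(G) \geq 80\log^2 n$, where the analysis of Algorithm~\ref{algo:pseudocode-hypergraph} in Sections~\ref{subsection:algorithm-property}--\ref{subsection:raio-analysis} is applicable. Chaining $\opt(G)\leq \lambda(G)$ with the definition $q = \lfloor \log \lambda(G)\rfloor$, one gets $2^q > \lambda(G)/2 \geq \opt(G)/2 \geq 40\log^2 n$, so in particular $2^q/(40\log^2 n) > 1$ and thus
\[
\lfloor 2^q/(40\log^2 n)\rfloor \;\geq\; \tfrac12 \cdot 2^q/(40\log^2 n),
\]
as was done in inequality~\eqref{inequality:trunc-integer-poly}. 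Combining this with Corollary~\ref{corollary:number-proper-colors-hyperedge} and the probability $\tfrac12$ that \texttt{Alg$_3^*$} runs Algorithm~\ref{algo:pseudocode-hypergraph}, I compute
\[
\mathbb{E}[\texttt{Alg$_3^*$}(G)] \;\geq\; \tfrac12 \cdot \mathbb{E}[\texttt{Alg$_3$}(G)]
\;\geq\; \tfrac14 \cdot \lfloor 2^q/(40\log^2 n)\rfloor
\;\geq\; \tfrac{1}{8}\cdot \frac{\lambda(G)/2}{40\log^2 n}
\;\geq\; \frac{\opt(G)}{640\log^2 n},
\]
using $2^q > \lambda(G)/2$ and $\lambda(G)\geq \opt(G)$ in the last two steps. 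Combining the two regimes yields a uniform bound $\mathbb{E}[\texttt{Alg$_3^*$}(G)] \geq \opt(G)/O(\log^2 n)$, which is the desired pure $O(\log^2 n)$ competitive ratio.

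The routine work is just assembling the constants and handling the floor; the only mild subtlety is observing that the easy regime genuinely yields at least one connected spanning color (using that $G$ is connected, which is part of the input assumption for \DCSS), so the constant coloring always contributes a nonzero count. No new structural property is needed beyond what the previous subsections already establish.
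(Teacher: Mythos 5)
Your proof is correct and follows essentially the same case split and chain of inequalities as the paper's proof: the easy regime is covered by the constant coloring (the paper writes $\min\{\opt(G),1\}$ where you directly use that connectivity of $G$ gives $\opt(G)\geq 1$, which is equivalent), and the hard regime proceeds via the floor bound, Corollary~\ref{corollary:number-proper-colors-hyperedge}, $2^q>\lambda(G)/2$, and $\opt(G)\leq\lambda(G)$. The only slip is a cosmetic mislabeling (you cite the floor inequality from the \DPB combined theorem rather than its \DCSS analogue), but the mathematical content is identical.
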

\begin{proof}
    Suppose $\opt(G)< 80\cdot \log^2 n$. We recall that the algorithm assigns $C(e)=1$ for every hyperedge $e$ with probability $\frac{1}{2}$. This implies that
    $$\begin{aligned}
        \mathbb{E}[\texttt{Alg$_3^*$}(G)] \geq \frac{1}{2}\cdot \min\{\opt(G),1\} \geq \frac{1}{160\log^2 n}\cdot \opt(G).
    \end{aligned}$$
    
    Now, we may assume that $\opt(G)\geq 80\cdot \log^2 n$. We recall that the algorithm runs Algorithm~\ref{algo:pseudocode-hypergraph} with probability $\frac{1}{2}$. We have
    $$\lfloor 2^q/(40\cdot \log ^2n) \rfloor\geq \frac{\lambda(G)}{80}\cdot \log^2 n\geq \frac{\opt(G)}{80}\cdot \log^2 n\geq 1,$$ which implies that
    \begin{equation}\label{inequality:trunc-integer-hyperedge}
        \lfloor 2^q/(40\cdot \log ^2n) \rfloor \geq \frac{1}{2}\cdot 2^q/(40\cdot \log ^2n).
    \end{equation}
    Hence, we have
    $$\begin{aligned}
        \mathbb{E}[\texttt{Alg$_3^*$}(G)]&\geq \frac{1}{2}\cdot \mathbb{E}[\texttt{Alg}_3(G)] \\
        &\geq \frac{1}{4}\cdot \lfloor 2^q/(40\cdot \log ^2n)\rfloor \ \ \text{(by Corollary~\ref{corollary:number-proper-colors-hyperedge})}\\
        &\geq \frac{1}{8}\cdot 2^q/(40\cdot \log ^2n) \ \ \text{(by inequality~(\ref{inequality:trunc-integer-hyperedge}))} \\
        &> \frac{1}{640\cdot \log^2n}\cdot \lambda(G) \ \ \text{(since $\lambda(G)<2^{q+1}$)}\\
        &\geq \frac{1}{640\cdot \log^2n}\cdot \opt(G). \ \ \text{(since $\opt(G)\leq \lambda(G)$)}
    \end{aligned}$$
\end{proof}

\section{Conclusion}
Packing bases of a polymatroid generalizes numerous set packing problems including disjoint bases of a matroid, disjoint spanning trees of a graph, disjoint set covers of a set system, and disjoint connected spanning subhypergraphs of a hypergraph. In this work, we introduced an online model for packing bases of a polymatroid and gave an algorithm that achieves a polylogarithmic competitive ratio. Our algorithm leads to a polylogarithmic competitive ratio for all these packing problems in a unified fashion. Our algorithm is based on novel properties of the notion of quotients of polymatroids. For the special cases of disjoint spanning trees of a graph and disjoint connected spanning subhypergraphs of a hypergraph, we gave a simpler and more elegant online algorithm that is also easy to implement while achieving the same polylogarithmic competitive ratio. 

Our work leads to several interesting open questions. We mention three prominent ones: 
Firstly, we recall that \DST (and more generally, \DMB) in the offline setting is polynomial-time solvable via matroidal techniques. Is it possible to design an online algorithm for \DST (and more generally, for \DMB) with constant competitive ratio? 
Secondly, it is known that there is no randomized algorithm with expected impure competitive ratio $o(\log{n}/\log\log{n})$ for \DSC, where $n$ is the size of the ground set. Can we show that there is no randomized algorithm with expected impure competitive ratio $o(\log{f(\calN)})$ or pure competitive ratio $o(\log^2{f(\calN)})$ for \DPB? 
Thirdly, we recall that our online model for \DPB assumes knowledge of $f(\calN)$. Is it possible to achieve polylogarithmic competitive ratio without knowledge of $f(\calN)$? This seems to be open even in the special case of coverage functions (i.e., for online \DSC) \cite{EGK19}. 

\bibliographystyle{abbrv}
\bibliography{references}

\newpage
\appendix
\section{Min-size of a non-empty Quotient and $k^*$}
\label{section:appendix-quotient}
Let $q^*$ be the minimum size of non-empty quotients of $f$, i.e., $$q^*:=\min\{|Q|:\emptyset\neq Q\subseteq \mathcal{N} \text{ is a quotient of $f$}\}.$$
We recall that 
$$
    k^*=k^*(f)=\min_{A\subseteq \mathcal{N}: f(A)<f(\mathcal{N})} \left\lfloor \frac{\sum_{e\in \mathcal{N}}f_A(e)}{f(\mathcal{N})-f(A)}\right\rfloor.
$$
The following lemma shows that $q^*$ is an $r$-approximation of $k^*$.

\begin{lemma}\label{lemma:opt-estimation}
    $\frac{q^*}{r}-1< k^*\leq q^*$.
\end{lemma}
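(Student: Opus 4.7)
The plan is to prove the two inequalities separately, using the characterization of a quotient as the complement of a closed set and the fact (stated just after \eqref{def:opt-k} in the preliminaries) that the minimum in the definition of $k^*$ is attained by a closed set.

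\textbf{Upper bound $k^* \leq q^*$.} I would let $Q$ be a non-empty quotient of $f$ with $|Q| = q^*$ and set $A := \mathcal{N} \setminus Q$. By Definition~\ref{defn:quotients} (equivalently, because $Q = \mathcal{N} \setminus \mathrm{span}(S)$ for some $S$), the set $A$ is closed and hence every $e \in Q$ satisfies $f_A(e) \geq 1$; in particular $f(A) < f(\mathcal{N})$ since $Q \neq \emptyset$. For $e \in A$ we have $f_A(e) = 0$, while for $e \in Q$ monotonicity gives $f_A(e) = f(A+e) - f(A) \leq f(\mathcal{N}) - f(A)$. Summing,
\[
\sum_{e \in \mathcal{N}} f_A(e) \;=\; \sum_{e \in Q} f_A(e) \;\leq\; |Q|\cdot\bigl(f(\mathcal{N}) - f(A)\bigr) \;=\; q^*\bigl(f(\mathcal{N})-f(A)\bigr),
\]
so plugging this $A$ into the definition of $k^*$ yields $k^* \leq q^*$.

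\textbf{Lower bound $q^*/r - 1 < k^*$.} I would invoke the remark in Section~\ref{section:prelim} to pick a closed set $A \subseteq \mathcal{N}$ with $f(A) < f(\mathcal{N})$ that attains the minimum in the definition of $k^*$. Then $Q' := \mathcal{N}\setminus A$ is a non-empty quotient, so $|Q'| \geq q^*$, and every $e \in Q'$ has $f_A(e) \geq 1$. Therefore
\[
q^* \;\leq\; |Q'| \;\leq\; \sum_{e \in Q'} f_A(e) \;=\; \sum_{e \in \mathcal{N}} f_A(e).
\]
On the other hand, since $\bigl\lfloor \sum_e f_A(e)/(f(\mathcal{N})-f(A))\bigr\rfloor = k^*$, the sum satisfies $\sum_{e \in \mathcal{N}} f_A(e) < (k^*+1)(f(\mathcal{N})-f(A)) \leq (k^*+1)r$. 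Combining the two bounds gives $q^* < r(k^*+1)$, which rearranges to $q^*/r - 1 < k^*$.

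There is no real obstacle here; the proof is essentially a direct unpacking of definitions, with the only subtle step being to remember that a closed set achieves the minimum in the definition of $k^*$ (so that its complement is automatically a quotient, allowing the two quantities to be compared on the same set).
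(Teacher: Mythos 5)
Your proof is correct and follows essentially the same route as the paper's: for the upper bound, take a minimum-size non-empty quotient $Q$ and apply the definition of $k^*$ to the closed set $A=\mathcal{N}\setminus Q$; for the lower bound, take a closed minimizer $A$ for $k^*$, note that $\mathcal{N}\setminus A$ is a non-empty quotient with all marginals at least $1$, and bound $f(\mathcal{N})-f(A)$ by $r$. The only difference is cosmetic—you present the lower-bound chain as an upper bound on $q^*$ and rearrange, whereas the paper writes it as a descending chain starting from $k^*$—but the inequalities used are identical.
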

\begin{proof}
    We first show the lower bound on $k^*$. Consider a closed set $A\subseteq \mathcal{N}$ with $f(A)<f(\mathcal{N})$ and $k^*=\left\lfloor\frac{\sum_{e\in \mathcal{N}}f_A(e)}{f(\mathcal{N})-f(A)}\right\rfloor$. Since $A$ is closed, we have that $f_A(e)\geq 1$ for every $e\in \mathcal{N}\setminus A$ and $\mathcal{N}\setminus A$ is a quotient of $f$. Therefore, we have
    $$\begin{aligned}
        k^*=\left\lfloor\frac{\sum_{e\in \mathcal{N}}f_A(e)}{f(\mathcal{N})-f(A)}\right\rfloor
        &> \frac{\sum_{e\in \mathcal{N}}f_A(e)}{f(\mathcal{N})-f(A)} -1 \ \ \text{(since $\lfloor x\rfloor > x-1$)}\\
        & \geq \frac{\sum_{e\in \mathcal{N}}f_A(e)}{r}-1 \ \ \text{(since $f(\mathcal{N})-f(A)\leq f(\mathcal{N})\leq r$)}\\
        & \geq \frac{|\mathcal{N}\setminus A|}{r}-1 \ \ \text{(since $f_A(e)\geq 1$ for every $e\in \mathcal{N}\setminus A$)}\\
        & \geq \frac{q^*}{r}-1. \ \ \text{(since $\mathcal{N}\setminus A$ is a quotient of $f$)}
    \end{aligned}$$

    We now show the upper bound on $k^*$. Consider a minimum sized non-empty quotient $Q$. Let $A:=\mathcal{N}\setminus Q$. Since $A$ is a closed set and $A\neq \mathcal{N}$, we have $f(\mathcal{N})>f(A)$. We note that for every $e\in Q$, $f_{A}(e)=f(A+e)-f(A)\leq f(\mathcal{N})-f(A)$, where the inequality is by monotonicity of the function $f$. Therefore,
    $$\begin{aligned}
        q^*=|Q| & \geq \sum_{e\in Q}\left(\frac{f_A(e)}{f(\mathcal{N})-f(A)}\right) \ \ \text{(since $f_A(e)\leq f(\mathcal{N})-f(A)$)}\\
        & = \frac{\sum_{e\in \mathcal{N}} f_{A}(e)}{f(\mathcal{N})-f(A)} \ \ \text{(since $f_A(e)=0$ for every $e\in A$)}\\
        & \geq k^*. \ \ \text{(by definition of $k^*$)}
    \end{aligned}$$
\end{proof}

\section{Random Sampling gives a Base}\label{sec:random-sampling-to-get-base}
In this section, we prove Lemma \ref{lemma:sampling-property}. A variant of the lemma was shown in \cite{CCV09}. 
\lemmaRandomSamplingGivesBase*
\begin{proof}
    If $k^*\leq 2\log r$, then we have $p=1$, which implies that $S=\mathcal{N}$ which is a base of $f$. Henceforth, we may assume that $k^*>2\log r$ and consequently, $p=\frac{2\log r}{k^*}$. We recall that $n=|\mathcal{N}|$. Let $\sigma=(e_1, e_2, \ldots, e_n)$ be a uniformly random permutation of the elements in $\mathcal{N}$. For every $i\in [0,n]$, we define $\mathcal{N}_i:=\{e_j:j\in [i]\}$ and $S_i:=S\cap \mathcal{N}_i$. We note that $S_0=\mathcal{N}_0=\emptyset$ and $f_{S_0}(\mathcal{N})=r$.

    Let $i\in [n]$. We consider the distribution of element $e_i$ conditioned on $S_{i-1}$. Since $\sigma$ is a uniformly random permutation, $e_i$ can be an arbitrary element in $\mathcal{N}\setminus S_{i-1}$, of which each has the same probability. That is, for every element $e\in \mathcal{N}$,
    $$\begin{aligned}
        \mathbf{Pr}_{\sigma, S}[e_i=e|S_{i-1}] = 
        \begin{cases}
            0, \ \ & e\in S_{i-1} \\
            \frac{1}{|\mathcal{N}|-|S_{i-1}|}. \ \ & e\in \mathcal{N}\setminus S_{i-1}
        \end{cases}
    \end{aligned}$$
    We note that element $e_i$ is picked with probability at least $p$. If element $e_i$ is picked, then $S_i=S_{i-1}+e_i$. Otherwise, $S_i=S_{i-1}$. Hence, we have
    $$\begin{aligned}
        \mathbb{E}_{\sigma, S}[f_{S_{i-1}}(\mathcal{N})-f_{S_i}(\mathcal{N}) | S_{i-1}] &= \sum_{e\in \mathcal{N}\setminus S_{i-1}} \frac{1}{|\mathcal{N}|-|S_{i-1}|} \cdot \mathbb{E}_{\sigma, S}[f_{S_{i-1}}(\mathcal{N})-f_{S_i}(\mathcal{N}) | S_{i-1} \ \text{and} \ e_i=e] \\
        &\geq \sum_{e\in \mathcal{N}\setminus S_{i-1}} \frac{p}{|\mathcal{N}|-|S_{i-1}|} \cdot f_{S_{i-1}}(e) \\
        &\geq \frac{p}{n}\cdot \sum_{e\in \mathcal{N}\setminus S_{i-1}}f_{S_{i-1}}(e) \\
        &\geq \frac{p}{n}\cdot k^*\cdot f_{S_{i-1}}(\mathcal{N}),
    \end{aligned}$$
    where the last inequality is by the definition of $k^*$. This implies that
    $$\begin{aligned}
        \mathbb{E}_{\sigma, S}[f_{S_{i}}(\mathcal{N})] &= \mathbb{E}_{\sigma, S}[f_{S_{i-1}}(\mathcal{N})-\mathbb{E}_{\sigma, S}[f_{\mathcal{S}_{i-1}}(\mathcal{N})-f_{\mathcal{S}_i}(\mathcal{N})| S_{i-1}]] \\
        &\leq \mathbb{E}_{\sigma, S}\left[\left(1-\frac{p}{n}\cdot k^*\right)\cdot f_{S_{i-1}}(\mathcal{N})\right]\\
        &= \left(1-\frac{p}{n}\cdot k^*\right)\cdot \mathbb{E}_{\sigma, S}[f_{S_{i-1}}(\mathcal{N})].
    \end{aligned}$$

    By setting $i=n$, we have
    $$\begin{aligned}
        \mathbb{E}_{\sigma, S}[f_{S_{n}}(\mathcal{N})] &\leq \left(1-\frac{p}{n}\cdot k^*\right)^{n} \cdot \mathbb{E}_{\sigma, S}[f_{S_0}(\mathcal{N})] \\
        &= \left(1-\frac{p}{n}\cdot k^*\right)^{n} \cdot r \\
        &< \exp(-p\cdot k^*)\cdot r = \exp(-2\log r)\cdot r < \frac{1}{2}.
    \end{aligned}$$
    According to Markov's inequality, we have
    $$\begin{aligned}
        \mathbf{Pr}_{\sigma, S}[f_{S_{n}}(\mathcal{N}) \geq 1] \leq \mathbb{E}_{\sigma, S}[f_{S_{n}}(\mathcal{N})] < \frac{1}{2},
    \end{aligned}$$
    which shows that $S$ is a base with probability at least $\frac{1}{2}$.
\end{proof}

\end{document}